\useunder{\uline}{\ul}{}
\DeclareRobustCommand\TGD{\mathop{\operator@font TGD}\nolimits}
\DeclareRobustCommand\GD{\mathop{\operator@font GD}\nolimits}
\renewcommand{\d}[1]{\ensuremath{\operatorname{d}\!{#1}}}
\DeclareMathOperator\erf{erf}
\newtheorem{theorem}{Theorem}
\newtheorem{corollary}{Corollary}
\newtheorem{definition}{Definition}
\title{A Theory of General Difference in \\Continuous and Discrete Domain}
\author{
  Linmi Tao$^{\ast}$\textsuperscript{\Letter}, Ruiyang Liu$^{\ast}$, Donglai Tao, Wu Xia, Feilong Ma, Yu Cheng, Jingmao Cui \\ \\
  Department of Computer Science and Technology, Tsinghua University \\
  Beijing {\rm 100084}, China\\ \\
  \normalsize{$^\ast$Co-first author.}\\
  \normalsize{\textsuperscript{\Letter}Corresponding author. E-mail: linmi@tsinghua.edu.cn} \\
}
\begin{document}
\maketitle

\begin{abstract}
%Numerical difference calculation is one of the cores and is indispensable in modern digital era. Nevertheless, difference algorithms have long posed a challenge in various fields for their critical weakness in sensitive to noise. Difference is the generalization of differentiation from continuous domain to discrete domain. However, this generalization breaks the most fundamental definition of differentiation, where $\d{y}$ and $\d{x}$ are both infinitesimal (Leibniz) or the limit of $\d{x}$ is 0 (Cauchy), due to the finite interval in discrete domain. In this regard, the generalization of differentiation to difference does not hold. To address this issue, we generalize the general derivative by integration to the general difference (Tao General Difference, TGD) in a finite interval with three constraints. We further extend TGD to calculate the difference of a sequence in discrete domain through the continuous step function constructed from the sequence. A variety of TGD operators are constructed for the calculation of both continuous functions and discrete sequences or arrays in one- to high-dimensional space. The analysis through example kernels demonstrate that the TGD computation possesses good capabilities in both continuous and discrete domains.

Though a core element of the digital age, numerical difference algorithms struggle with noise susceptibility. This stems from a key disconnect between the infinitesimal quantities in continuous differentiation and the finite intervals in its discrete counterpart. This disconnect violates the fundamental definition of differentiation (Leibniz and Cauchy). To bridge this gap, we build a novel general difference (Tao General Difference, TGD). Departing from derivative-by-integration, TGD generalizes differentiation to finite intervals in continuous domains through three key constraints. This allows us to calculate the general difference of a sequence in discrete domain via the continuous step function constructed from the sequence. Two construction methods, the rotational construction and the orthogonal construction, are proposed to construct the operators of TGD. The construction TGD operators take same convolution mode in calculation for continuous functions, discrete sequences, and arrays across any dimension. Our analysis with example operations showcases TGD's capability in both continuous and discrete domains, paving the way for accurate and noise-resistant differentiation in the digital era.
\end{abstract}

% keywords can be removed
\keywords{Calculus \and Differentiation by Integration \and General Difference in Continuous Domain \and General Difference in Discrete Domain}

\newpage
\tableofcontents
\newpage

\section{Introduction}
In the Calculus, the derivative is defined as the limit:
\begin{equation}%加*表示不对公式编号
  \begin{aligned}
    f^{\prime}(x_0)=\lim _{\Delta x \rightarrow 0} \frac{f(x_0+\Delta x)-f(x_0)}{\Delta x}.
  \end{aligned}
  \label{eq:conventional_derivative}
\end{equation}

Formula~\eqref{eq:conventional_derivative} depicts properties of smooth functions, but fails on non-smooth functions that are commonly observed in real systems. Consequently, various types of generalized derivatives have been proposed to describe the properties of these non-smooth functions~\cite{dini1878fondamenti, borwein2000convex, lanczos1956applied,lanczos1988applied,groetsch1998lanczos}. These generalized derivatives are equivalent to the traditional derivative $f^{\prime}(x_0)$ at each differentiable point $x_0$ of the function $f$. And, for other points that the derivative cannot deal with, the generalized derivatives provide a set of scalars or vectors that depict the change in $f(x)$ at non-differentiable points. Typically, Lanczos proposed “differentiation-by-integration” and introduced Lanczos Derivative (LD) in the following form~\cite{lanczos1956applied,lanczos1988applied,groetsch1998lanczos}: 

\begin{equation}%加*表示不对公式编号
  \begin{aligned}
    f_{\text{LD}}^{\prime}(x_0) & =\lim _{W \rightarrow 0} \frac{3}{2 W^3} \int_{-W}^W t f(x_0+t) \d{t}\\
                                & = \lim_{W \rightarrow 0} \frac{3}{2 W^3} \left(\int_{0}^W t f(x_0+t) \d{t} - \int_{0}^W t f(x_0-t) \d{t}\right)
  \end{aligned}
  \label{eq:LGD}
\end{equation}

Teruel further extended LD by substituting $t$ with $t^k$ in Formula~\eqref{eq:LGD}, and giving a new class of LD~\cite{teruel2018new}:
\begin{equation}%加*表示不对公式编号
  \begin{aligned}
    f_{\text{LD}}^{\prime}(x_0; k) & =\lim _{W \rightarrow 0} \frac{k+2}{2 W^{k+2}} \int_{-W}^W t^k f(x_0+t) \d{t} \quad\quad\left\{k=2 a+1 \mid a \in \mathbf{Z}^{+}\right\} \\
    & = \lim _{W \rightarrow 0} \frac{k+2}{2 W^{k+2}} \left(\int_{0}^W t^k f(x_0+t) \d{t} - \int_{0}^W t^k f(x_0-t) \d{t}\right)  \quad\left\{k=2 a+1 \mid a \in \mathbf{Z}^{+}\right\}
  \end{aligned}
  \label{eq:LGD-k}
\end{equation}

Hicks proposed that the approximation of LD can be seen as averaging the derivative over a small marginal interval $[x_0 - W, x_0 + W]$ around the point of differentiation $x_0$~\cite{hicks2000lanczos}. This average might be understood as a weighted average with a weight function $w$:
\begin{equation}%加*表示不对公式编号
  \begin{aligned}
   f^{\prime}\left(x_0\right) \approx \lim _{W \rightarrow 0} \int_{-W}^{W} w(t) f^{\prime}(x_0+t) \d{t} \\
    \text { where } \int_{-W}^{W} w(t) \d{t}=1 \text { and } w(t) > 0.
    \end{aligned}
  \label{eq:LGD-Hicks}
\end{equation}
Following Hicks, Liptaj presented some interesting non-analytic weight functions for LD~\cite{liptaj2019maximal}.

From Leibniz to Lanczos, then Liptaj, the evolution of the Calculus is evident in its progressive development and perfection over the centuries. Nonetheless, this perfect theory is unsuitable for modern real systems, which are discrete. In discrete systems, numerical computation occurs within a finite interval, which renders Calculus's foundations of infinitesimal and limit unsuitable. To address this issue, this paper transcends the bound of infinitesimal and limit to the theory of general difference in a finite interval for both continuous and discrete domain. We believe the theory propels the Calculus toward modern numerical computation for sequences and multidimensional arrays in the real world.

This paper is organized in four sections: Firstly, we deduce the general difference at a finite interval in continuous domain. Secondly, the general difference in discrete domain is built for the difference calculation of sequences and arrays. Thirdly, examples of general difference operators and analysis are provided for further understanding. Finally, a short conclusion and outlook are given. The practical applications of the general difference are presented in the paper Applications.

\clearpage
%%%%%%%%%%%%%%%%%%%%%%%%%%%%%%%%%%%%%%%%%%%%%%%%%%%%%%%%%%%%%%%%%%%%%%%%%%
\newpage

\section{General Difference in Continuous Domain: From Infinitesimal to Finite Interval}
\subsection{Tao Derivative}

%$t^k$ in extended 
Both LD and Teruel's extended LD are based on the function $t^k$. We further extend LD by a family of function $w(t)$, and define a new generalized derivative, \textbf{T}ao \textbf{D}erivative (TD), of any Lebesgue integrable function $f$ as:
\begin{equation}%加*表示不对公式编号
  \begin{aligned}
    f_{\text{TD}}^{\prime}(x_0; w) \triangleq \lim _{W \rightarrow 0} \frac{\int_{0}^{W} f(x_0+t) w(t) \d{t}-\int_{0}^{W} f(x_0-t) w(t) \d{t}}{2\int_{0}^{W}t w(t)\d{t}}.
  \end{aligned}
  \label{eq:TD}
\end{equation}

where $W > 0$ and $w(t)$ satisfies the following \textbf{Normalization Constraint (C1)}:
\begin{equation}%加*表示不对公式编号
  \begin{aligned}
    \left\{\begin{array}{ll}
        \int_{0}^{W}w(t) \d{t}=1 \\
        w(t)>0 & \quad t \in (0,W]\\
        w(t)=0 & \quad t \in (-\infty,0] \bigcup (W,+\infty)
        \end{array}\right.
  \end{aligned}
  \label{eq:constraintw}
\end{equation}

Noticing that Formula~\eqref{eq:LGD}~and~\eqref{eq:LGD-k} can be obtained by specifying $w(t)$ with $2t/W^2$ and $(k+1)t^k/W^{k+1}$, respectively. Thus, TD defined by Formula~\eqref{eq:TD} is the general form of LD.
\begin{equation}%加*表示不对公式编号
  \begin{aligned}
    f_{\text{TD}}^{\prime}(x_0; \frac{2t}{W^2}) &= f_{\text{LD}}^{\prime}(x_0) \\
    f_{\text{TD}}^{\prime}(x_0; \frac{(k+1)t^k}{W^{k+1}}) &= f_{\text{LD}}^{\prime}(x_0;k)\quad\left\{k=2 a+1 \mid a \in \mathbf{Z}^{+}\right\}
  \end{aligned}
  \label{eq:LD_TD}
\end{equation}

It is proved that, at each differentiable point $x_0$ of $f$, the TD result of Formula~\eqref{eq:TD} is equivalent to the traditional derivative specified in Formula~\eqref{eq:conventional_derivative}.

\begin{framed}
\begin{theorem}\label{theo:t1}
\textbf{Suppose that $f$ is a $C^1$ continuous function defined on a neighborhood of point $x_0$, and $w(t)$ is a function satisfying Normalization Constraint (Formula~\eqref{eq:constraintw}) with a parameter $W > 0$. $f'_{\text{TD}}(x_0; w)$ given by Formula~\eqref{eq:TD} equals to $f'(x_0)$ given by Formula~\eqref{eq:conventional_derivative}.}
\end{theorem}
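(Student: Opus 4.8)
The plan is to exploit the $C^1$ regularity of $f$ to linearize the symmetric difference in the numerator, and then to use the positivity of $w$ from Constraint C1 to show that the resulting remainder is negligible relative to the normalizing integral in the denominator. First I would write the first-order expansion valid near the point: since $f$ is $C^1$ on a neighborhood of $x_0$, for $t$ near $0$ we have $f(x_0 \pm t) = f(x_0) \pm f'(x_0)\,t + r_{\pm}(t)$ with $r_{\pm}(t)/t \to 0$ as $t \to 0$ (continuity of $f'$ is exactly what guarantees this control of the remainder). Subtracting, the constant term cancels and
\begin{equation*}
  f(x_0+t) - f(x_0-t) = 2 f'(x_0)\,t + R(t), \qquad R(t) := r_{+}(t) - r_{-}(t),
\end{equation*}
where $R(t)/t \to 0$ as $t \to 0$.

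Next I would substitute this into the TD quotient. Multiplying by $w(t)$ and integrating over $(0,W]$, linearity of the integral yields
\begin{equation*}
  \int_{0}^{W}\!\big(f(x_0+t)-f(x_0-t)\big)\, w(t)\,\d{t}
  = 2 f'(x_0)\int_{0}^{W}\! t\,w(t)\,\d{t} + \int_{0}^{W}\! R(t)\,w(t)\,\d{t}.
\end{equation*}
Dividing by $2\int_{0}^{W} t\,w(t)\,\d{t}$, which is strictly positive because $w(t)>0$ and $t>0$ throughout $(0,W]$ so the quotient is well defined, the leading term reproduces exactly $f'(x_0)$, and the problem reduces to showing that the error
\begin{equation*}
  E(W) := \frac{\int_{0}^{W} R(t)\,w(t)\,\d{t}}{2\int_{0}^{W} t\,w(t)\,\d{t}}
\end{equation*}
tends to $0$ as $W \to 0$.

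The crucial step, which I expect to be the main obstacle, is controlling $E(W)$ uniformly even though the weight $w$ may vary as $W$ shrinks (each admissible weight is supported on $(0,W]$). The positivity hypothesis in C1 is precisely what resolves this. Given $\varepsilon>0$, the relation $R(t)/t \to 0$ supplies a $\delta>0$ with $|R(t)| \le \varepsilon\,t$ for all $0<t<\delta$. Then for every $W<\delta$, using $w(t)>0$,
\begin{equation*}
  \left| \int_{0}^{W} R(t)\,w(t)\,\d{t} \right|
  \le \int_{0}^{W} |R(t)|\,w(t)\,\d{t}
  \le \varepsilon \int_{0}^{W} t\,w(t)\,\d{t},
\end{equation*}
whence $|E(W)| \le \varepsilon/2$. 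Since this bound is insensitive to the particular form of $w$ and holds for every $W<\delta$, and $\varepsilon>0$ is arbitrary, I conclude $\lim_{W\to 0} E(W) = 0$, so that $f'_{\text{TD}}(x_0;w) = f'(x_0)$.

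The only delicate point is to invoke the remainder bound $|R(t)| \le \varepsilon\,t$ on the positive interval $(0,W]$, where $t = |t|$, so that the factor $t$ matches the $t\,w(t)$ appearing in the denominator. It is worth noting that the normalization $\int_{0}^{W} w\,\d{t}=1$ plays no role in the argument: the quotient is self-normalizing through its denominator, and only the positivity of $w$ together with the $t$-weighting drives the cancellation.
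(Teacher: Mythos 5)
Your proof is correct and takes essentially the same route as the paper's: a first-order expansion about $x_0$, cancellation of the even terms in the symmetric difference $f(x_0+t)-f(x_0-t)$, and comparison of the remainder integral against the denominator $2\int_0^W t\,w(t)\,\d{t}$. If anything, your treatment is tighter than the paper's own, which writes the Taylor remainder as $O(t^2)$ --- a form not warranted by $C^1$ regularity alone --- whereas your Peano-form remainder with the bound $|R(t)|\le\varepsilon t$ uses exactly the stated hypothesis and makes the error estimate uniform over the admissible weights $w$.
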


\begin{proof}
\begin{equation*}%加*表示不对公式编号
  \begin{aligned}
    f_{\text{TD}}^{\prime}(x_0; w)=&\lim _{W \rightarrow 0} \frac{\int_{0}^{W} f(x_0+t) w(t) \d{t}-\int_{0}^{W} f(x_0-t) w(t) \d{t}}{2\int_{0}^{W}t w(t)\d{t}} \\
    =& \lim _{W \rightarrow 0} \frac{\int_{0}^{W} (f(x_0+t)-f(x_0-t)) w(t) \d{t}}{2\int_{0}^{W}t w(t)\d{t}} \quad\quad(\text{Taylor's Formula}) \\
    =& \lim _{W \rightarrow 0} \frac{\int_{0}^{W} ((f(x_0)+t f^{\prime}(x_0)+O(t^2))-(f(x_0)-t f^{\prime}(x_0)+O(t^2))) w(t) \d{t}}{2\int_{0}^{W}t w(t)\d{t}} \\
    =& \lim _{W \rightarrow 0} \frac{\int_{0}^{W} (2t f^{\prime}(x_0)) w(t) \d{t}}{2\int_{0}^{W}t w(t)\d{t}} \\
    =& f^{\prime}(x_0) \lim _{W \rightarrow 0} \frac{2\int_{0}^{W} t w(t) \d{t}}{2\int_{0}^{W}t w(t)\d{t}} \\
    =& f^{\prime}(x_0).
  \end{aligned}
\end{equation*}
\end{proof}
\end{framed}

The Leibniz Rule of differentiation is applicable to TD as well. Specifically, the following theorem holds:

\begin{framed}
\begin{theorem}\label{theo:t2}
\textbf{Suppose that $f$ and $g$ are both $C^1$ continuous functions defined on a neighborhood of point $x_0$, and $w(t)$ is a function satisfying Normalization Constraint (Formula~\eqref{eq:constraintw}) with a parameter $W > 0$. We have $(f \cdot g)'_{\text{TD}}(x_0; w) = f'(x_0) g(x_0) + g'(x_0) f(x_0)$, or in short, $(f \cdot g)'_{\text{TD}} = f'g + g'f$.}
\end{theorem}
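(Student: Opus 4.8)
The plan is to run the same computation that underlies the proof of Theorem~\ref{theo:t1}, but applied directly to the product $h = f\cdot g$, so that the Leibniz rule emerges from the integral definition rather than being imported from ordinary calculus. First I would write out the defining quotient
\[(f\cdot g)'_{\text{TD}}(x_0; w) = \lim_{W\to 0}\frac{\int_0^W f(x_0+t)g(x_0+t)w(t)\,\d{t} - \int_0^W f(x_0-t)g(x_0-t)w(t)\,\d{t}}{2\int_0^W t\,w(t)\,\d{t}},\]
and then substitute the first-order Taylor expansions $f(x_0\pm t)=f(x_0)\pm t f'(x_0)+r_f^{\pm}(t)$ and $g(x_0\pm t)=g(x_0)\pm t g'(x_0)+r_g^{\pm}(t)$, which are available because $f,g\in C^1$ near $x_0$.

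The next step is to multiply the two expansions and track orders in $t$. The key observation is that the zeroth-order term $f(x_0)g(x_0)$ is identical for the $+t$ and $-t$ products, so it cancels in the numerator, while the first-order cross terms combine to $\pm t\,(f'(x_0)g(x_0)+f(x_0)g'(x_0))$. Subtracting the two products therefore yields $2t\,(f'(x_0)g(x_0)+f(x_0)g'(x_0)) + r(t)$, where $r(t)$ collects all remainder contributions. Factoring the constant $f'(x_0)g(x_0)+f(x_0)g'(x_0)$ out of the leading term leaves exactly $2\int_0^W t\,w(t)\,\d{t}$ over $2\int_0^W t\,w(t)\,\d{t}$, which is $1$, plus a remainder term $\int_0^W r(t)w(t)\,\d{t}$ divided by the same denominator.

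The main obstacle — really the only nontrivial point — is showing this remainder vanishes in the limit. Since $f,g\in C^1$, each $r_f^{\pm},r_g^{\pm}$ is $o(t)$, so $r(t)/t\to 0$ as $t\to 0$. Hence for any $\epsilon>0$ there is a $W$ small enough that $|r(t)|\le \epsilon t$ on $(0,W]$, giving $\left|\int_0^W r(t)w(t)\,\d{t}\right| \le \epsilon \int_0^W t\,w(t)\,\d{t}$; dividing by the denominator bounds the remainder's contribution by $\epsilon$, which forces it to $0$. This works for \emph{any} $w$ satisfying Constraint~(C1), because positivity of $w$ on $(0,W]$ keeps $\int_0^W t\,w(t)\,\d{t}$ strictly positive, so the denominator never degenerates. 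This is the identical control implicitly invoked when the $O(t^2)$ term was dropped in the proof of Theorem~\ref{theo:t1}.

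Finally, I would note that the statement also follows in one line from Theorem~\ref{theo:t1}: since $f,g\in C^1$ implies $f\cdot g\in C^1$, that theorem gives $(f\cdot g)'_{\text{TD}}(x_0;w)=(f\cdot g)'(x_0)$, and the classical product rule then yields $f'(x_0)g(x_0)+f(x_0)g'(x_0)$. I would nevertheless prefer the direct computation above for the paper, since it demonstrates that the Leibniz structure is intrinsic to the TD integral operator and does not rely on first reducing to the ordinary derivative.
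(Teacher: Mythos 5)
Your main computation is essentially the paper's own proof: both expand $f(x_0\pm t)$ and $g(x_0\pm t)$ by Taylor's formula, observe that the zeroth-order terms cancel in the difference $f(x_0+t)g(x_0+t)-f(x_0-t)g(x_0-t)$ while the first-order terms combine to $2t\left(f'(x_0)g(x_0)+f(x_0)g'(x_0)\right)$, and then divide by $2\int_0^W t\,w(t)\,\d{t}$ so that the leading term has quotient exactly $1$. Where you go beyond the paper is in the treatment of the remainder: the paper writes the Taylor remainders as $O(t^2)$ and carries an $O\left(\int_0^W t^3 w(t)\,\d{t}\right)$ term in the numerator, which strictly speaking presumes more smoothness than the stated $C^1$ hypothesis; your $o(t)$ remainders with the bound $\left|\int_0^W r(t)w(t)\,\d{t}\right|\le \epsilon\int_0^W t\,w(t)\,\d{t}$ constitute the argument that actually matches the hypothesis, and they make explicit why positivity of $w$ on $(0,W]$ keeps the denominator nondegenerate. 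Your closing observation is also correct and is a genuinely shorter route that the paper does not take: since $f,g\in C^1$ implies $f\cdot g\in C^1$, Theorem~\ref{theo:t1} applied to the product gives $(f\cdot g)'_{\text{TD}}(x_0;w)=(f\cdot g)'(x_0)$ at once, and the classical product rule finishes; what the direct computation buys instead is the expository point that the Leibniz structure is intrinsic to the integral definition rather than inherited from the ordinary derivative.
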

\begin{proof}
According to Taylor's Formula, for $f$ and $g$ on a neighborhood of $x_0$, $t \in [0, W)$ and $W \rightarrow 0$, we have:
\begin{equation}
\begin{aligned}
   f(x_0 + t) &= f(x_0) + t f'(x_0) + O(t^2)\\
   g(x_0 + t) &= g(x_0) + t g'(x_0) + O(t^2).
\end{aligned}
\end{equation}

Then:
\begin{equation}
    f(x_0+t)g(x_0+t) - f(x_0-t)g(x_0-t) = 2t (f'g + g'f) + O(t^2).
    \label{eq:fggf}
\end{equation}

Finally, we can apply Formula~\eqref{eq:fggf} to the definition of $(f \cdot g)'_{\text{TD}}$:
\begin{equation*}%加*表示不对公式编号
  \begin{aligned}
    (f \cdot g)_{\text{TD}}^{\prime}=&\lim _{W \rightarrow 0} \frac{\int_{0}^{W} f(x_0+t) g(x_0+t) w(t) \d{t}-\int_{0}^{W} f(x_0-t) g(x_0-t) w(t) \d{t}}{2\int_{0}^{W}t w(t)\d{t}} \\
    =& \lim_{W \rightarrow 0} \frac{\int_{0}^W \left(f(x_0+t) g(x_0+t) - f(x_0-t) g(x_0-t)\right) w(t) \d{t}}{2\int_{0}^{W}t w(t)\d{t}} \\
  \end{aligned}
\end{equation*}
\begin{equation*}%加*表示不对公式编号
  \begin{aligned}
    =& \lim_{W \rightarrow 0} \frac{\left(2\int_{0}^{W} tw(t) \d{t}\right)(f'g+g'f) + O\left(\int_{0}^{W}t^3 w(t)\d{t}\right)}{2\int_{0}^{W}t w(t)\d{t}} \\ 
    =& f^{\prime}g + g^{\prime}f.
  \end{aligned}
\end{equation*}
\end{proof}
\end{framed}

Combining Theorem~\ref{theo:t1} and~\ref{theo:t2}, we can immediately get the following corollary:

\begin{framed}
\begin{corollary}
\textbf{Suppose that $f$ and $g$ are both $C^1$ continuous functions defined on a neighborhood of point $x_0$, and $w(t)$ is a function satisfying Normalization Constraint (Formula~\eqref{eq:constraintw}) with a parameter $W > 0$. We have $(f \cdot g)'_{\text{TD}}(x_0; w) = f'_{\text{TD}}(x_0; w) g(x_0) + g'_{\text{TD}}(x_0; w) f(x_0)$, or in short, $(f \cdot g)'_{\text{TD}} = f'_{\text{TD}} g + g'_{\text{TD}} f$.}
\end{corollary}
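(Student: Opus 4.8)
The plan is to prove the corollary by directly composing the two results already established, since it is essentially a substitution statement rather than a fresh computation. The corollary differs from Theorem~\ref{theo:t2} only in that the ordinary derivatives $f'(x_0)$ and $g'(x_0)$ appearing on the right-hand side are replaced by their TD counterparts $f'_{\text{TD}}(x_0; w)$ and $g'_{\text{TD}}(x_0; w)$. So the entire content reduces to justifying that these two expressions coincide under the stated hypotheses.

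First I would observe that the hypotheses of this corollary are exactly those required by both Theorem~\ref{theo:t1} and Theorem~\ref{theo:t2}: namely, $f$ and $g$ are $C^1$ on a neighborhood of $x_0$, and $w$ satisfies the Normalization Constraint of Formula~\eqref{eq:constraintw} with parameter $W>0$. In particular, each of $f$ and $g$ individually satisfies the single-function hypothesis of Theorem~\ref{theo:t1}. I would therefore invoke Theorem~\ref{theo:t1} once for $f$ and once for $g$ to conclude $f'_{\text{TD}}(x_0; w) = f'(x_0)$ and $g'_{\text{TD}}(x_0; w) = g'(x_0)$.

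Next I would invoke Theorem~\ref{theo:t2}, whose hypotheses are met, to obtain $(f \cdot g)'_{\text{TD}}(x_0; w) = f'(x_0) g(x_0) + g'(x_0) f(x_0)$. Substituting the two identities from the previous step into this equation replaces $f'(x_0)$ by $f'_{\text{TD}}(x_0; w)$ and $g'(x_0)$ by $g'_{\text{TD}}(x_0; w)$, yielding $(f \cdot g)'_{\text{TD}}(x_0; w) = f'_{\text{TD}}(x_0; w)\, g(x_0) + g'_{\text{TD}}(x_0; w)\, f(x_0)$, which is exactly the claim.

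I do not anticipate a genuine obstacle here: the corollary is a formal consequence of the two theorems, and the only thing worth checking carefully is the matching of hypotheses so that both theorems may legitimately be applied to the same $f$, $g$, and $w$. The one conceptual point to state explicitly is that $C^1$ regularity is precisely what makes the TD-form and the ordinary-derivative form of the product rule interchangeable; without the equivalence guaranteed by Theorem~\ref{theo:t1}, the substitution would not be valid, and that equivalence is the load-bearing ingredient of the argument.
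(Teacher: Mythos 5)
Your proposal is correct and matches the paper's intent exactly: the paper offers no separate proof, stating only that the corollary follows immediately by ``combining Theorem~\ref{theo:t1} and~\ref{theo:t2},'' which is precisely your argument of applying Theorem~\ref{theo:t1} to $f$ and $g$ individually and substituting into the product rule of Theorem~\ref{theo:t2}. Your explicit check that the hypotheses of both theorems are satisfied is a useful bit of rigor the paper leaves implicit.
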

\end{framed}

Similarly, we express our second-order Tao Derivative of a Lebesgue integrable function $f$ as follows:
\begin{equation}
    \begin{aligned}
      f^{\prime\prime}_{\text{TD}}(x_0; w) \triangleq &\lim_{W \rightarrow 0} \frac{\int_0^W f(x_0+t) w(t) \d{t} + \int_0^W f(x_0-t) w(t)\d{t} - 2f(x_0)\int_{0}^{W}w(t) \d{t}}{\int_0^W t^2 w(t)\d{t}}
    \end{aligned}
    \label{eq:2ndtd}
\end{equation}
where $w(t)$ is a continuous function with a parameter $W > 0$, satisfying the Normalization Constraint given by Formula~\eqref{eq:constraintw}.

It can be further proved that, at each differentiable point $x_0$ of $f$, the second-order TD is equivalent to the traditional second-order derivative of $f$.

\begin{framed}
\begin{theorem}
\textbf{Suppose $f$ is a $C^2$ continuous function defined on a neighborhood of point $x_0$, and $w(t)$ is a function satisfying Normalization Constraint (Formula~\eqref{eq:constraintw})  with a parameter $W > 0$. $f^{\prime\prime}_{\text{TD}}(x_0;w,W)$ given by Formula~\eqref{eq:2ndtd} equals to $f^{\prime\prime}(x_0)$, namely
\begin{equation}
    f^{\prime\prime}(x_0) = \lim_{\Delta x \rightarrow 0} \frac{f^{\prime}(x_0 + \Delta x) - f'(x_0)}{\Delta x}.
\end{equation}}
\end{theorem}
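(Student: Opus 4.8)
The plan is to mirror the proof of Theorem~\ref{theo:t1}, replacing the first-order Taylor expansion with a second-order one, while paying close attention to the remainder, which here sits at the same order in $t$ as the leading term. First I would merge the three integrals in the numerator of Formula~\eqref{eq:2ndtd} into the single expression $\int_0^W \bigl(f(x_0+t)+f(x_0-t)-2f(x_0)\bigr)w(t)\,\d{t}$; this is legitimate because all three integrands carry the same weight $w$ and $2f(x_0)$ is constant in $t$.

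Next I would expand $f(x_0\pm t)$ by Taylor's formula. Since $f$ is only $C^2$, I cannot invoke $f'''$; instead I would write, for $t\in(0,W)$, $f(x_0\pm t)=f(x_0)\pm t f'(x_0)+\tfrac{t^2}{2}f''(\xi_\pm)$, with each $\xi_\pm$ lying between $x_0$ and $x_0\pm t$ (Lagrange remainder). Adding the two expansions, the linear terms cancel and the symmetric second difference becomes
$$f(x_0+t)+f(x_0-t)-2f(x_0)=t^2 f''(x_0)+t^2\varepsilon(t),$$
where $\varepsilon(t)=\tfrac12\bigl(f''(\xi_+)+f''(\xi_-)\bigr)-f''(x_0)$. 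Because $f''$ is continuous at $x_0$ and $\xi_\pm\to x_0$ as $t\to 0^+$, we have $\varepsilon(t)\to 0$. This step is where the full $C^2$ hypothesis is essential: it is exactly the continuity of $f''$ that upgrades the remainder from a mere $O(t^2)$ bound to a genuine $o(t^2)$ statement.

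I would then substitute this identity into the merged numerator, split it as $f''(x_0)\int_0^W t^2 w(t)\,\d{t}+\int_0^W t^2\varepsilon(t)w(t)\,\d{t}$, and divide by the denominator $\int_0^W t^2 w(t)\,\d{t}$. The first piece yields exactly $f''(x_0)$, so the whole problem reduces to showing that the second ratio tends to $0$ as $W\to 0$.

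The main obstacle, and the only genuinely delicate point, is controlling this error ratio, since its integrand carries the same $t^2$ weight as the denominator and so cannot be dismissed by order counting alone. Here I would invoke the positivity of $w$ on $(0,W]$ guaranteed by the Normalization Constraint (Formula~\eqref{eq:constraintw}): given any $\delta>0$, the continuity of $f''$ supplies an $\eta>0$ with $|\varepsilon(t)|<\delta$ for all $t\in(0,\eta)$, so that for every $W<\eta$,
$$\left|\frac{\int_0^W t^2\varepsilon(t)w(t)\,\d{t}}{\int_0^W t^2 w(t)\,\d{t}}\right|\le\frac{\int_0^W t^2|\varepsilon(t)|w(t)\,\d{t}}{\int_0^W t^2 w(t)\,\d{t}}\le\delta.$$
Since $\delta$ is arbitrary, the error ratio vanishes in the limit, so Formula~\eqref{eq:2ndtd} equals $f''(x_0)$, which by Taylor's theorem coincides with the classical second derivative. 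Apart from this positivity-and-squeeze estimate, the argument is a direct transcription of the first-order proof.
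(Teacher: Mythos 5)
Your proposal is correct and follows essentially the same route as the paper's proof: merge the three integrals into the symmetric second difference $\int_0^W (f(x_0+t)+f(x_0-t)-2f(x_0))\,w(t)\,\mathrm{d}t$, Taylor-expand so the leading term reproduces $f''(x_0)\int_0^W t^2 w(t)\,\mathrm{d}t$, and show the remainder ratio vanishes. The only difference is one of rigor: the paper writes the remainder as $O(t^3)$ (which strictly requires more smoothness than the stated $C^2$ hypothesis) and dismisses it without comment, whereas your Lagrange-remainder expansion with the $\varepsilon$--$\delta$ squeeze, using the positivity of $w$, is exactly the argument that the $C^2$ assumption supports --- a sharper treatment of the same proof rather than a different one.
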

\begin{proof} 
$\\$
\begin{equation*}
    \begin{aligned}
      f_{\text{TD}}^{\prime\prime}(x_0; w) =& \lim_{W \rightarrow 0} \frac{\int_0^W f(x_0+t) w(t) \d{t} + \int_0^W f(x_0-t) w(t)\d{t} - 2f(x_0)\int_{0}^{W}w(t) \d{t}}{\int_0^W t^2 w(t)\d{t}}\\
    =& \lim_{W \rightarrow 0} \frac{\int_{0}^{W} (f(x_0+t)+f(x_0-t)-2f(x_0)) w(t) \d{t}}{\int_{0}^{W}t^2 w(t)\d{t}} \quad(\text{Taylor's Formula}) \\
    =& \lim_{W \rightarrow 0} \frac{\int_{0}^{W} (t^2 f^{\prime\prime}(x_0) + O(t^3)) w(t) \d{t}}{\int_{0}^{W}t^2 w(t)\d{t}} \\
    =& f^{\prime\prime}(x_0) \lim _{W \rightarrow 0} \frac{\int_{0}^{W} t^2 w(t) \d{t}}{\int_{0}^{W}t^2 w(t)\d{t}} \\
    =& f^{\prime\prime}(x_0).
    \end{aligned}
\end{equation*}
\end{proof}
\end{framed}

\subsection{General Difference: Definition and Property}
\label{subsec:1Dwindowderivativedefinitions}

Generalized derivative is intended for the calculation of non-smooth functions with $W \rightarrow 0$, and is developed to finite difference~\cite{milne1933calculus,boole2022calculus} in discrete processing. However, in modern discrete systems, discrete signals are produced over a certain time, space, or spatiotemporal interval~\cite{li2000generalized}. In other words, $W$ does not tend toward $0$. This creates a gap between the theory of generalized derivative and the practice of discrete processing with regard to the derivative interval. We addressed this problem by studying the theory of derivative over a finite interval ($W > 0$) as the basis for establishing difference within a finite interval. 

We expand the integration interval of \textbf{TD} defined in Formula~\eqref{eq:TD} and~\eqref{eq:2ndtd} to finite interval (Formula ~\eqref{eq:TGD} and ~\eqref{eq:2NDTGD}). We name this TD computation in a finite interval as \textbf{G}eneral \textbf{D}ifference in continuous domain, and call them as \textbf{T}ao \textbf{G}eneral \textbf{D}ifference (TGD).
\begin{equation}
    f_{\TGD}^{\prime}(x_0; w, W) \triangleq \frac{\int_{0}^{W} f(x_0+t) w(t) \d{t}-\int_{0}^{W} f(x_0-t) w(t) \d{t}}{2\int_{0}^{W}t w(t)\d{t}}
    \label{eq:TGD}
\end{equation}
\begin{equation}
     f^{\prime\prime}_{\TGD}(x_0; w, W) \triangleq \frac{\int_0^W f(x_0+t) w(t) \d{t} + \int_0^W f(x_0-t) w(t)\d{t} - 2f(x_0)\int_{0}^{W}w(t) \d{t}}{\int_0^W t^2 w(t)\d{t}}
     \label{eq:2NDTGD}
\end{equation}
where $w$ satisfies Normalization Constraint (Formula~\eqref{eq:constraintw}). Definitely, these \textbf{TGDs} implement the first- and second-order difference computation in a finite interval $[x_0-W, x_0+W]$ of continuous domain. Relatively, $w(t)$ is the \emph{kernel function}, and $W$ is the \emph{kernel size} of TGD, respectively.

Through the following transformation of Formula~\eqref{eq:TGD}, it is evident that TGD can be abbreviated to a convolution form:
\begin{equation}
    \begin{aligned}
      f_{\TGD}^{\prime}(x_0; w, W) &= C_1 \left( \int_{-W}^0 f(x_0+t)(-w(-t))\d{t} + \int_0^W f(x_0+t) w(t) \d{t}\right) \\
      &= C_1 (T \ast f)(x_0)
    \end{aligned}
    \label{eq:TGDConv}
\end{equation}
where $C_1$ is the normalization constant, and $T(t)$ is defined as the \emph{First-order Tao General Difference Operator} (1st-order TGD operator):
\begin{equation}
\begin{aligned}
    C_1 &= \frac{1}{2\int_0^W tw(t) \d{t}}
\end{aligned}
\end{equation}
\begin{equation}
\begin{aligned}
    T(t) &= \left\{
    \begin{array}{ll}
      w(-t) & \quad -W \leq t < 0,\\
      -w(t) & \quad 0 < t \leq W,\\
      0 & \quad otherwise.
    \end{array}
    \right.
\end{aligned}
    \label{eq:TGD_T}
\end{equation}

Similarly, Formula~\eqref{eq:2NDTGD} can also be abbreviated to a convolution form:
\begin{equation}
    \begin{aligned}
      f_{\TGD}^{\prime\prime}(x_0; w, W) &= C_2 \left( \int_{-W}^W f(x_0+t) w(|t|)\d{t} - 2f(x_0) \int_0^W  w(t) \d{t}\right) \\
      &= C_2 (R \ast f)(x_0)
    \end{aligned}
    \label{eq:TGDConv2}
\end{equation}
where $C_2$ is the normalization constant, and $R(t)$ is defined as the \emph{Second-order Tao General Difference Operator} (2nd-order TGD operator):
\begin{equation}
\begin{aligned}
    C_2 &= \frac{1}{\int_0^W t^2 w(t) \d{t}}
\end{aligned}
\end{equation}
\begin{equation}
  \begin{aligned}
    R(t) &= \left\{
    \begin{array}{ll}
      w\left(|t|\right) - 2\delta\left(t\right) \int_0^W w\left(t\right) \mathrm{d}t & \quad -W \leq t \leq W,\\
      0 &  \quad otherwise.
    \end{array}
    \right.
  \end{aligned}
  \label{eq:TGD_R}
\end{equation}
where $\delta$ is Dirac delta function~\cite{hassani2009dirac}.

And it is clear that:
\begin{equation}
    \begin{aligned}
        \int_{-\infty}^{\infty} T(t) \d{t} = \int_{-W}^{W} T(t) \d{t} = 0 \\
        \int_{-\infty}^{\infty} R(t) \d{t} = \int_{-W}^{W} R(t) \d{t} = 0
    \end{aligned}
    \label{eq:sum_of_TGD}
\end{equation}
which means that for a constant function, its TGD value is constantly $0$.

Formula~\eqref{eq:TGDConv} and~\eqref{eq:TGDConv2} illustrate the calculation of TGD by convolution. The linearity of the convolution verifies the linearity of TGD, that is, the TGD of any linear combination of functions equals the same linear combination of the TGD of these functions:
\begin{equation}
    \begin{aligned}
        (af+bg)^{\prime}_{\TGD} = af^{\prime}_{\TGD} + bg^{\prime}_{\TGD} \\
        (af+bg)^{\prime\prime}_{\TGD} = af^{\prime\prime}_{\TGD} + bg^{\prime\prime}_{\TGD}
    \end{aligned}
\end{equation}
where $a$ and $b$ are real numbers.

In principle, many functions satisfying Normalization Constraint (Formula~\eqref{eq:constraintw}) can be applied to TGD. However, the TGD calculation should locate the centre for both slow change and fast changes in a function $f$. We introduce the unbiasedness constraint for the kernels of TGD, which mathematically describes the requirement that the calculated difference robustly fits the change at the function $f$.

~

\begin{definition}\label{def:unbiasedness}
(Unbiasedness of TGD) $f'_{\TGD}$ is unbiased within $W_T$, if $\exists W_T>0$, s.t. $\forall W_1, W_2, 0 < W_2 < W_1 \le W_T$, for arbitrary $w_1(t) \in \mathbb{E}_{W_1}$ and $w_2(t) \in \mathbb{E}_{W_2}$ holds that if $f'_{\TGD}(x_1;w_2,W_2) > f'_{\TGD}(x_2;w_2,W_2)$ then $f'_{\TGD}(x_1;w_1,W_1) > f'_{\TGD}(x_2;w_1,W_1)$. $\mathbb{E}_W$ denotes the family of suitable kernel functions $w$ with kernel size $W$. 
\end{definition}

~

As a necessary condition for preserving the definition of unbiasedness in the differential calculation, we introduce the following constraint for the kernel function $w(t)$ in TGD.

\textbf{Monotonic Constraint (C2): the kernel function $w(t)$ is monotonic decreasing within (0, W].}

\begin{framed}
\begin{theorem}
\textbf{Monotonic Constraint (C2) is the necessary condition for the Unbiasedness of TGD.}
\label{theo:unbiasedness}
\end{theorem}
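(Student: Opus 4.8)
The plan is to prove the contrapositive: if the kernel violates the Monotonic Constraint (C2), then the TGD fails to be unbiased in the sense of Definition~\ref{def:unbiasedness}. The starting point is to simplify the order relation appearing in the definition. For a fixed kernel $w$ and kernel size $W$, the denominator $2\int_0^W t w(t)\,\d{t}$ is a strictly positive constant independent of the evaluation point. Hence, writing $\Delta(t) = \left(f(x_1+t)-f(x_1-t)\right) - \left(f(x_2+t)-f(x_2-t)\right)$, which satisfies $\Delta(0)=0$, the comparison $f'_{\TGD}(x_1;w,W) > f'_{\TGD}(x_2;w,W)$ is equivalent to the single sign condition $\int_0^W \Delta(t)\, w(t)\,\d{t} > 0$. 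Thus unbiasedness is a statement purely about how the sign of the linear functional $w \mapsto \int_0^W \Delta\, w$ is transported as the window grows from $W_2$ to $W_1$.

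Next I would record a cleaner form of this functional by integrating by parts. Setting $\overline{W}(t) = \int_t^W w(s)\,\d{s}$, so that $\overline{W}\ge 0$ is decreasing with $\overline{W}(W)=0$ and $w = -\overline{W}'$, and using $\Delta(0)=0$, one obtains $\int_0^W \Delta(t)\, w(t)\,\d{t} = \int_0^W \Delta'(t)\,\overline{W}(t)\,\d{t}$. This exhibits the TGD as a weighted average of the symmetrised local slope of $f$ against the envelope $\overline{W}$, and it translates C2 into a convexity statement: since $\overline{W}''=-w'$, the kernel $w$ is monotone decreasing on $(0,W]$ exactly when $\overline{W}$ is convex there. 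This reformulation is what makes the role of monotonicity visible, because convexity of the envelope controls how much additional weight the enlarged window $(W_2,W_1]$ places on coarse scales relative to fine scales.

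Assuming C2 fails, I would then extract a quantitative local increase: there exist $0 < a < b \le W$ and $\varepsilon>0$ with $w(b) \ge w(a)+\varepsilon$, and by continuity small intervals $I_a\ni a$, $I_b\ni b$ on which $w$ stays below, respectively above, an intermediate level. The heart of the argument is to build a test configuration that the two scales rank oppositely. I would place $x_2$ where $f$ is locally affine, so that $f'_{\TGD}(x_2;\cdot)$ equals its exact slope at every window, and put near $x_1$ a localized feature whose symmetric difference $\Delta$ is calibrated so that $\int_0^{W_2}\Delta\, w_2 > 0$ while the extra mass that the non-monotone bump of $w_1$ deposits on the coarse scales in $I_b$ drives $\int_0^{W_1}\Delta\, w_1$ to be nonpositive. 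A genuine admissible $f$ realizing any prescribed $\Delta$ is obtained by the odd reflection $f(x_1+t)-f(x_1-t)=\Delta(t)$ together with $f$ affine near $x_2$, so the reduction above applies verbatim.

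The step I expect to be the main obstacle is making the sign reversal depend on the non-monotonicity rather than on mere window enlargement: because the feature of $f$ on the annulus $(W_2,W_1]$ is a priori unconstrained, one could trivially flip the large-window sign for \emph{any} kernel, so the construction must instead be pinned to the interval $I_b$ where $w_1$ exceeds its fine-scale values and calibrated against the convexity deficit of $\overline{W}$, so that for a monotone kernel no admissible feature of the same fine-scale profile produces the reversal. Handling the quantifier ``for arbitrary $w_1\in\mathbb{E}_{W_1}$, $w_2\in\mathbb{E}_{W_2}$'' together with the strict inequality $W_2<W_1$ is the remaining delicate point, which I would address by first fixing the offending kernel as $w_1$ and only then choosing $W_2$, $w_2$, and the feature jointly.
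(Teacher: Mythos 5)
Your preliminary reductions are correct — the denominators are positive constants so the comparison is a sign condition on $\int_0^W \Delta\, w$, the integration by parts against the envelope $\overline{W}(t)=\int_t^W w(s)\,ds$ is valid since $\Delta(0)=0$ and $\overline{W}(W)=0$, and C2 is indeed equivalent to convexity of $\overline{W}$. But the proof never closes, and the place where it stops is exactly the place where an idea is required. As you yourself observe, an unconstrained feature on the annulus $(W_2,W_1]$ can flip the large-window sign for \emph{any} kernel, monotone or not; so a witness function only proves something if it has the additional property that every monotone kernel preserves the ordering on it at every window, while the given non-monotone kernel does not. Your sketch ends precisely where that function must be exhibited: ``first fixing the offending kernel as $w_1$ and only then choosing $W_2$, $w_2$, and the feature jointly'' is a statement of intent, not a construction, and the ``calibration against the convexity deficit of $\overline{W}$'' is left entirely unspecified. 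As written, the argument either proves nothing (if the feature is unconstrained, the reversal says nothing about monotonicity) or is incomplete (if the feature is to be pinned, no pinned feature is produced).

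The paper fills this gap in one stroke by taking $f=\delta$, the Dirac impulse. Then $f'_{\TGD}(x;w,W)=C_1(T\ast\delta)(x)=C_1T(x)$: the TGD profile \emph{is} the negated, reflected kernel. For $0<x_2<x_1\le W_1$, choosing $W_2=(x_1+x_2)/2$ makes the small-window hypothesis hold for \emph{every} admissible $w_2$ — it reads $f'_{\TGD}(x_1;w_2,W_2)=0 > -C_2\,w_2(x_2)=f'_{\TGD}(x_2;w_2,W_2)$, simply because $x_1$ lies outside the small window and $w_2>0$ — while the large-window conclusion required by Definition~\ref{def:unbiasedness} reads $-C_1 w_1(x_1) > -C_1 w_1(x_2)$, i.e.\ $w_1(x_1)<w_1(x_2)$. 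Since $x_1,x_2$ are arbitrary, unbiasedness forces $w_1$ to be strictly decreasing on $(0,W_1]$; equivalently, any non-monotone $w_1$ is refuted by this single test function, on which monotone kernels do preserve the ordering at all windows. The impulse is exactly the ``pinned'' configuration you were searching for: its fine-scale behavior is invisible to the choice of $w_2$ (the hypothesis of the implication is kernel-independent), and its coarse-scale comparison reads off the values of $w_1$ themselves. Without this or an equivalent choice of test function, your contrapositive cannot be completed in a way that distinguishes non-monotone kernels from monotone ones; with it, your envelope machinery becomes unnecessary, since the delta collapses the functional $\int_0^W\Delta\,w$ to a pointwise evaluation of the kernel.
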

\begin{proof}

Let $f(x) = \delta(x)$, the Dirac delta function~\cite{hassani2009dirac}. According to Formula~\eqref{eq:TGDConv}, for any $W_1, 0 < W_1 \le W_T$ and $w_1 \in \mathbb{E}_{W_1}$ we have
\begin{equation}
\begin{aligned}
  f'_{\TGD}(x;w_1,W_1) & = C_{1}(T \ast \delta)(x) = C_{1} T(x)\\
  & = \left\{
    \begin{array}{ll}
      C_{1} w_1(-x) & -W_1 \leq x < 0,\\
      -C_{1} w_1(x) & 0 < x \leq W_1,\\
      0 & otherwise.
    \end{array}
    \right.
\end{aligned}
\end{equation}
where $C_{1}$ is a constant positive number.

Without loss of generosity, we only discuss the case of $0 < x_2 < x_1 \le W_1$. Let $W_2 = (x_1+x_2) / 2$ and arbitrary $w_2 \in \mathbb{E}_{W_2}$, we have $0 < x_2 < W_2 < x_1 \le W_1$ and
\begin{equation}
\begin{aligned}
    f'_{\TGD}(x_2;w_2,W_2) &= -C_{2} w_2(x_2) < 0 \\
    f'_{\TGD}(x_1;w_2,W_2) &= 0 
\end{aligned} 
\end{equation}
where $C_{2}$ is also a constant positive number. So we have:
\begin{equation}
    f'_{\TGD}(x_1;w_2,W_2) > f'_{\TGD}(x_2;w_2,W_2).
\end{equation}

According to Definition~\ref{def:unbiasedness}, we have:
\begin{equation}
    -C_{1} w_1(x_1) = f'_{\TGD}(x_1;w_1,W_1) > f'_{\TGD}(x_2;w_1,W_1) = -C_1 w_1(x_2).
\end{equation}
Thus $\forall x_1, x_2, 0 < x_2 < x_1 \le W_1$ and $w_1 \in \mathbb{E}_{W_1}$, $0<w_1(x_1) < w_1(x_2)$, a.k.a. $w(t)$ is a monotonic decreasing function within $(0, W]$.
\end{proof}
\end{framed}

It is mentioned that the kernel function of LD and extended LD are $2t/W^2$ and $(k+1)t^k/W^{k+1}$ respectively. Though Normalization Constraint holds, violating Monotonic Constraint shows that they are not suitable for TGD. Contrary, Gaussian function $G(t; \sigma)$ can fit both Normalization Constraint and Monotonic Constraint when $t > 0$, and be a kernel function of TGD.

We further define the 1st-order smooth operator $S_T$ and 2nd-order smooth operator $S_R$ by integrating 1st-order TGD operator $T$ once and 2nd-order TGD operator $R$ twice, as follows:

\begin{equation}%加*表示不对公式编号
  \begin{aligned}
    & S_T(x) \triangleq \left\{\begin{array}{ll}
        \int_{-\infty}^{x}w(-t)\d{t} & x < 0 \\
        \int_{-\infty}^{0}w(-t)\d{t} + \int_{0}^{x}-w(t)\d{t} & x \geq 0 \\
        \end{array}\right. \\
    & S_R(x) \triangleq \left\{\begin{array}{ll}
        \int_{-\infty}^{x} \left( \int_{-\infty}^{u}w(-t)\d{t} \right) \d{u} & x < 0 \\
        \int_{-\infty}^{x} \left( \int_{-\infty}^{0}w(-t)\d{t} + \int_{0}^{u}w(t)\d{t} \right) \d{u} & x \geq 0 \\
        \end{array}\right.
  \end{aligned}
  \label{function:TGD-smoothing1D}
\end{equation}
According to the \emph{first fundamental theorem of the calculus}\footnote{Let $f$ be a continuous real-valued function defined on a closed interval $[a, b]$. Let $F$ be the function defined, for all $x$ in $[a, b]$, by 
\begin{equation*}%加*表示不对公式编号
  \begin{aligned}
    F(x)=\int_{a}^{x} f(t) d t.
  \end{aligned}
\end{equation*}
Then $F$ is uniformly continuous on $[a, b]$ and differentiable on the open interval $(a, b)$, and $F^{\prime}(x)=f(x)$ for all $x$ in $(a, b)$.
}~\cite{apostol1967calculus,spivak1994calculus}, we get both $S^{\prime}_T(x)$ and $S^{\prime\prime}_R(x)$ when $x$ is not equal to 0. Noting that $S^{\prime}_T(0)$ and $S^{\prime\prime}_R(0)$ do not exist. If we add $S^{\prime}_T(0) = 0$ and $S^{\prime\prime}_R(0) = -2\delta(0)$, then $S^{\prime}_T = T$ and $S^{\prime\prime}_R = R$ always hold. 
\begin{equation}%加*表示不对公式编号
  \begin{aligned}
    & T(x) = \left\{\begin{array}{ll}
        S_T^{\prime}(x) & x \neq 0 \\
        0 & x = 0 \\
        \end{array}\right. \\
    & R(x) = \left\{\begin{array}{ll}
        S_R^{\prime\prime}(x) & x \neq 0 \\
        -2\delta(0) & x = 0 \\
        \end{array}\right.
  \end{aligned}
\end{equation}
Thus, based on the differential property of convolution, the Formula\eqref{eq:TGDConv} and \eqref{eq:TGDConv2} can be written as:

\begin{equation}
    \begin{aligned}
      f_{\TGD}^{\prime}(x_0; w, W) = C_1 (T \ast f)(x_0) = C_1 (S_T \ast f)^{\prime}(x_0) 
    \end{aligned}
    \label{eq:TGDConv3}
\end{equation}

\begin{equation}
    \begin{aligned}
      f_{\TGD}^{\prime\prime}(x_0; w, W) = C_2 (R \ast f)(x_0) = C_2 (S_R \ast f)^{\prime\prime}(x_0)
    \end{aligned}
    \label{eq:TGDConv4}
\end{equation}

%Human visual edge perception benefits from precise localization and robust high-frequency response. People can perceive slope edges and capture fine textures in images, like the text being read, which comprises numerous square wave edges with high spatial frequency. This phenomenon highlights the need for TGD operators that provide a full-frequency response to signals, consistent with the edge detection capability in human vision system.

% \textcolor{blue}{This differentiation feature of convolution\footnote{$(f*g)^{\prime} = f^{\prime}*g = g^{\prime}*f$} reveals that the property of $S_T$ and $S_R$ is also important to the property of TGD. Theoretically, TGD is built for calculating the rate of change in a certain interval, and should be able to respond well to both the slow- and fast-changes of the function $f$ in the interval. To achieve this, we introduce the third constraint to both $S_T$ and $S_R$ aiming to partly keep the fast changes in the convolution operation with the function $f$.}
%(Figure~\ref{fig1.2.1}):

\begin{figure}[htb]    	
    \centering    	
    \subfigure[Smooth operator]{   	 		 
        \includegraphics[width=0.31\linewidth]{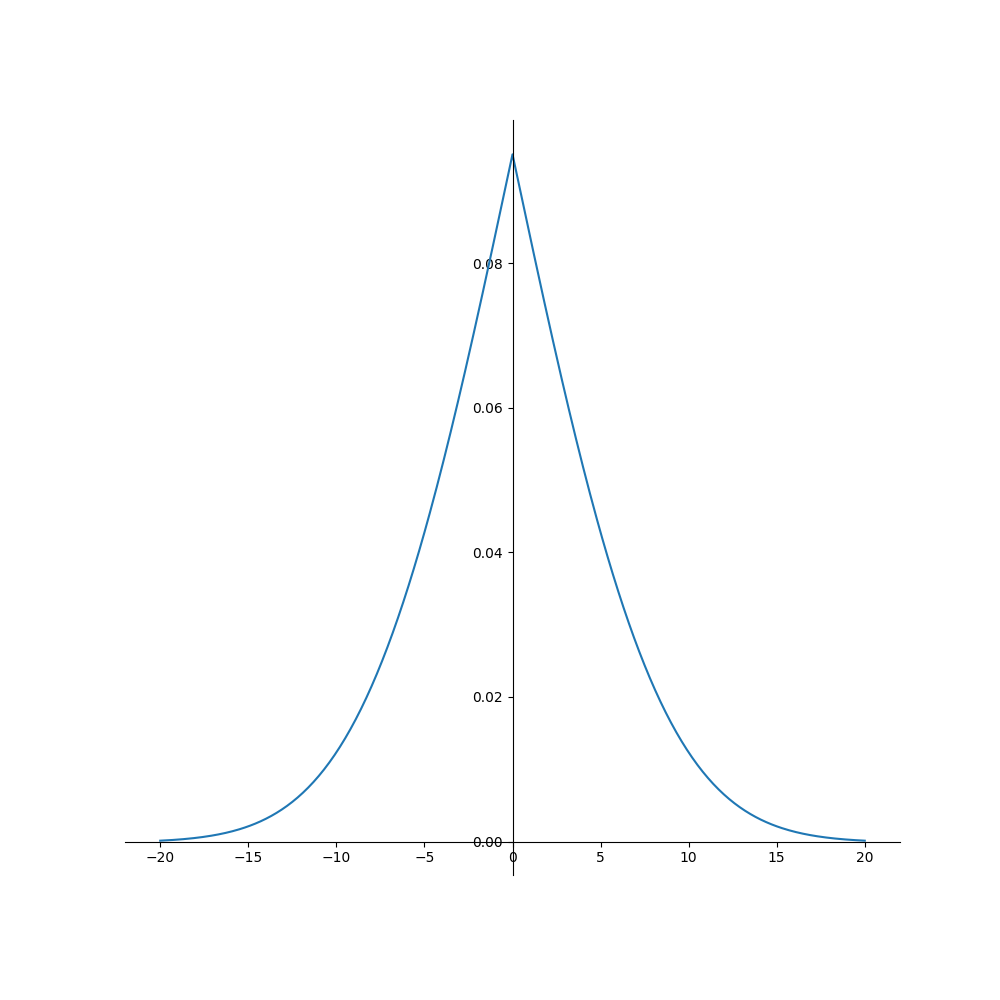}
    }
    \subfigure[First-order TGD operator]{   	 		 
        \includegraphics[width=0.31\linewidth]{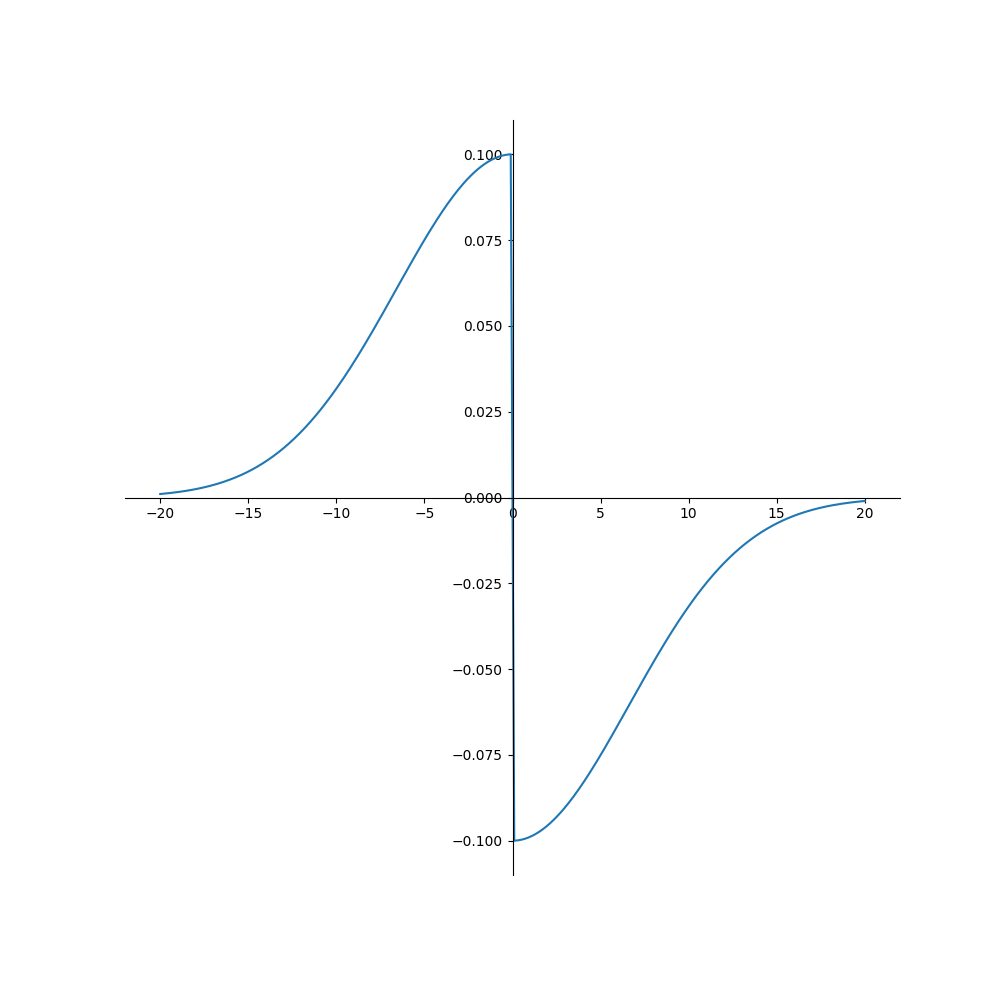}
    }   
    \subfigure[Second-order TGD operator]{   	 		 
        \includegraphics[width=0.31\linewidth]{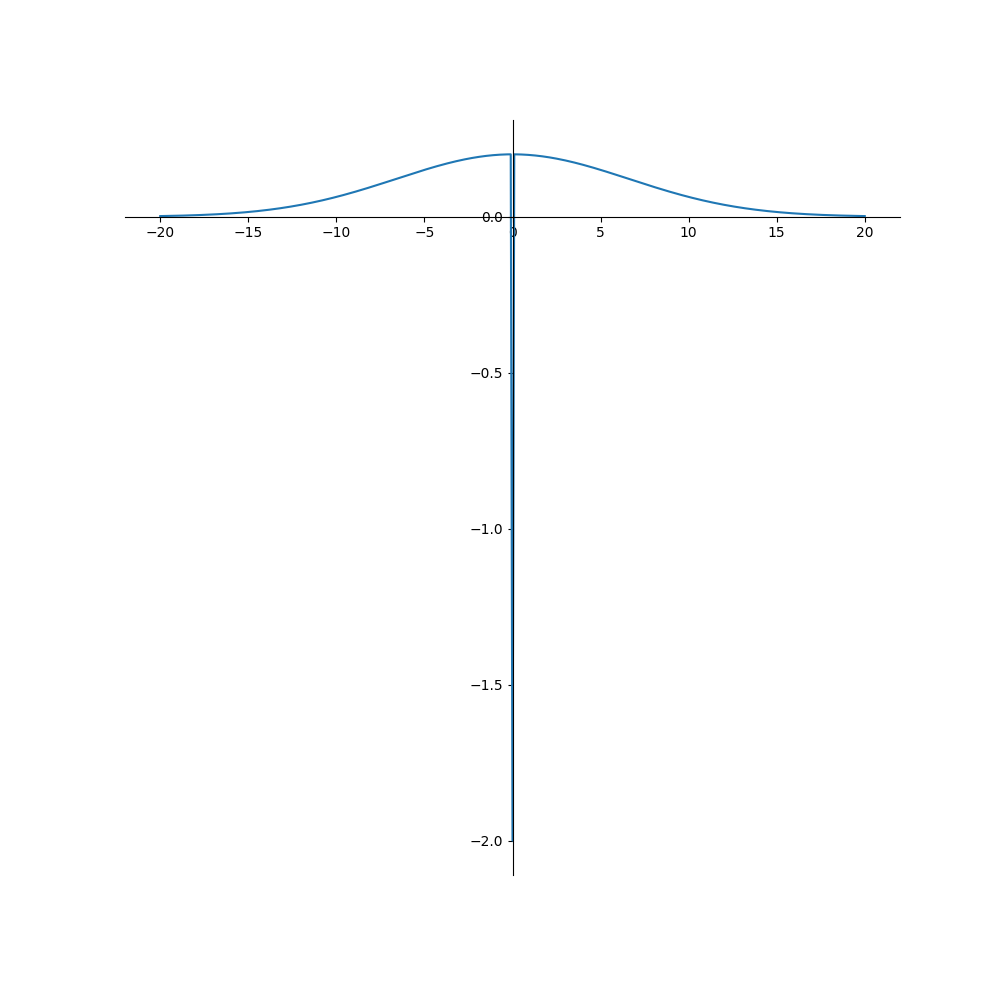}
    }
    \caption{
        The smooth operator, first- and second-order TGD operators constructed based on the Gaussian kernel function.
    }
    \label{fig:GaussConstructor}
\end{figure}

The differentiation feature\footnote{$(f*g)^{\prime} = f^{\prime}*g = g^{\prime}*f$} inherent in convolution operations highlights the significance of both $S_T$ and $S_R$ in determining the properties of TGD. TGD specifically aims to quantify the rate of change within a defined interval, demanding sensitivity to both fast and slow variations in the function $f$. To achieve this, we introduce the third constraint to both $S_T$ and $S_R$ which effectively injects sensitivity to rapid fluctuations, allowing TGD to capture both subtle and abrupt changes in the function's behavior within the interval.

\textbf{Monotonic Convexity Constraint (C3): The smooth operator $S$ is an even function with monotonic downward convexity in the interval $(0, W]$, which is $S^{\prime}(x)<0$, $S^{\prime\prime}(x)>0$ in the interval $(0, W]$, and $S(W)\rightarrow 0$, $S^{\prime}(W)\rightarrow 0$, $S^{\prime\prime}(W)\rightarrow 0$.}

%\textcolor{blue}{This property of calculating the difference for both the slow- and fast-changes in an interval means the TGD operators have a full frequency response in signal processing since $S_T$ and $S_R$ work as filters. This property is significant which are the low-pass filters with suppressing in high-frequency instead of a cutoff frequency.}

%\textbf{Full-frequency (C3): Both $S_T$ and $S_R$ are low-pass filters with high frequency suppression.}

It is worth pointing out that if $S_T$ satisfies the Monotonic Convexity Constraint, then the 2nd-order smooth operator $S_R$, whose kernel function $w$ corresponds to the $S_T$, also satisfies the Monotonic Convexity Constraint. Therefore, we concentrate on determining whether $S_T$ adheres to the Monotonic Convexity Constraint in later sections, and use $S$ to simplify the representation of $S_T$, namely the smooth operator. These constraints serve as beneficial supplements to Normalization Constraint.

There are various functions meeting the three constraints we introduce for TGD, which can serve as kernel function $w$ for constructing TGD operators. Here, we present a representative example of the kernel function $w$, the Gaussian function. 

%\subsubsection{Gaussian Kernel Function}
%\label{subsubsec:GaussianKernel}
Traditionally Gaussian function is widely used as a filter with convolution and differentiation calculations in wide areas. However, the Gaussian function itself does not satisfy the Monotonic Convexity Constraint, which cannot be used as a smooth operator in TGD. Its first- and second-order derivatives cannot be used as the kernel function of TGD. Unexpectedly, the Gaussian function with mean value 0 can be used as a kernel function (Formula~\eqref{eq:Gaussian}), which is controlled by the parameters $k>0$, $\delta>0$, and in general $W = 3\delta$. It is obvious from Figure~\ref{fig:GaussConstructor} that, $S_{gaussian}^{\prime}(x)<0$, $S_{gaussian}^{\prime\prime}(x)>0$ in the interval $(0, W]$, and $S_{gaussian}(W)\rightarrow 0$, $S_{gaussian}^{\prime}(W)\rightarrow 0$, $S_{gaussian}^{\prime\prime}(W)\rightarrow 0$.
\begin{equation}%加*表示不对公式编号
  \begin{aligned}
    \text{Gaussian}(x) = ke^{-x^2/2\delta^2}, \quad x\in(0, W].
  \end{aligned}
  \label{eq:Gaussian}
\end{equation}

Coefficient $k$ controls normalization, we have:
\begin{equation*}
    \int_{0}^{W}\left(ke^{-x^2/2\delta^2}\right)\d{x} = 1. 
\end{equation*}

The first- and second-order TGD operators constructed with Gaussian function are:
\begin{equation}%加*表示不对公式编号
  \begin{aligned}
    T_{\text{gaussian}}(x) = \left\{\begin{array}{cc}
        ke^{-x^2/2\delta^2}& \quad\quad x \in [-W,0)\\
        0& \quad\quad x=0 \\
        -ke^{-x^2/2\delta^2}& \quad\quad x\in(0, W]
    \end{array}\right.
  \end{aligned}
\end{equation}

\begin{equation}%加*表示不对公式编号
  \begin{aligned}
    R_{\text{gaussian}}(x) = \left\{\begin{array}{cc}
        ke^{-x^2/2\delta^2}& \quad\quad x \in [-W,0) \cup (0,W]\\
        -2\delta(0)\int_{0}^{W}(ke^{-x^2/2\delta^2})\d{x}& \quad\quad x=0 
    \end{array}\right.
  \end{aligned}
\end{equation}

The integral of the Gaussian function is the error function of Gaussian, denoted by $erf(x)$. This function satisfies the Monotonic Convexity Constraint, that works as the smooth operator of TGD:
\begin{equation}%加*表示不对公式编号
  \begin{aligned}
    S_{\text{gaussian}}(x) = \left\{\begin{array}{cc}
        k(1+\erf(x / \sqrt{2} \delta))& \quad\quad x \in [-W,0)\\
        k& \quad\quad x=0 \\
        k(1-\erf(x / \sqrt{2} \delta))& \quad\quad x \in (0,W]
    \end{array}\right.
  \end{aligned}
\end{equation}

Figure~\ref{fig:GaussConstructor} shows, from left to right, the smooth operator, the first- and second-order TGD operators constructed on the Gaussian kernel function.

\subsection{Directional General Difference of Multivariate Function}
\label{Sec:TGDofMultivariateFunction}

%\subsubsection{Directional Differentiation of Multivariate Function}

In multivariate calculus, the directional derivative represents the instantaneous rate of change of a multivariate function with respect to a particular direction vector $\mathbf{v}$ at a given differentiable point $\mathbf{x}$. The rate of change is interpreted as the velocity of the function moving through $\mathbf{x}$ in the direction of $\mathbf{v}$. This concept generalizes the partial derivative, where the rate of change is calculated along one of the curvilinear coordinate curves~\cite{courant2012introduction}.

The directional derivative of a $n$-variate function $f\left(\mathbf{x}\right) = f\left(x_1,x_2,\ldots,x_n\right)$ along with a vector $\mathbf{v} = \left(v_1,v_2,\ldots,v_n\right)$ is defined by the limit~\cite{wrede2010schaum}:
\begin{equation}%加*表示不对公式编号
  \begin{aligned}
    \frac{\partial f}{\partial \mathbf{v}}\left(\mathbf{x}\right)=\lim _{h \rightarrow 0} \frac{f\left(\mathbf{x}+h\mathbf{v}\right)-f\left(\mathbf{x}\right)}{h}.
  \end{aligned}
  \label{function:originDirectionDerivative}
\end{equation}
In the follow-up description, $\mathbf{v}$ is with respect to an arbitrary nonzero vector after normalization, i.e., $\Vert \mathbf{v} \Vert = 1$. Thus the calculation of the partial derivative of a variable $x_i$ is equivalent to finding the directional derivative of $\mathbf{v}=\left(0, \ldots,0,1,0,\ldots,0\right)$. %And for a smooth function $f$, the Formula~\eqref{function:originDirectionDerivative} is equivalent to the following equation:
%\begin{equation}%加*表示不对公式编号
%  \begin{aligned}
%    \frac{\partial f}{\partial \mathbf{v}}\left(\mathbf{x}\right)=\lim _{h \rightarrow 0} \frac{f\left(\mathbf{x}+h\mathbf{v}\right)-f\left(\mathbf{x}-h\mathbf{v}\right)}{2h}.
%  \end{aligned}
%  \label{function:originDirectionDerivative2}
%\end{equation}
%If a 1D TGD operator is used to perform 
Relatively, the 1D TGD, at the point $\mathbf{x}$ along the direction of the unit vector $\mathbf{v}$ within an interval of length $2W$ centered at $\mathbf{x}$, can be written in the following form by replacing scalar with vector in Formula~\eqref{eq:TGD}:
\begin{equation}%加*表示不对公式编号
  \begin{aligned}
    f_{\TGD}^{\prime}\left(\mathbf{x}; w, W, \mathbf{v}\right) &\triangleq \frac{\int^{W}_{0} f\left(\mathbf{x} + t\mathbf{v}\right)w(t) \d{t} - \int^{W}_{0} f\left(\mathbf{x} - t\mathbf{v}\right)w(t) \d{t}}{{2\int_{0}^{W}t w(t) \d{t}}},
  \end{aligned}
  \label{function:naiveDirectionDerivative}
\end{equation}

The traditional directional derivative calculation (Formula~\eqref{function:originDirectionDerivative}) yields stable derivative values since $h$ tends to $0$. The 1D TGD expands the calculation interval in one dimension, however, the function values are inherently coupled in all dimensions. TGD results become sensitive not only to the function values at the derivative direction but also to the values around the direction $\mathbf{v}$. For instance, the function value could change significantly when $\mathbf{v}$ deflects by a tiny angle. This scenario is akin to a circle with a small central angle, but the larger the radius of the circle, the farther the two points on the corresponding circumference are from each other. Large variations in the calculated values can cause instability in calculating TGD. Consequently, the expanded derivative interval (Formula~\eqref{function:naiveDirectionDerivative}) in the direction $\mathbf{v}$ is insufficient in representing the directional TGD of multivariate functions.

\subsubsection{Rotational Construction}

To ensure computational stability, the calculation interval of TGD needs to be further extended in all dimensions. In more detail, we weighted average the values of the 1D TGD of all directions whose acute angle is to $\mathbf{v}$. Initially, we define $\mathbb{S}_{\mathbf{v}}$ as the set of all unit vectors whose angle with $\mathbf{v}$ is acute:
\begin{equation}%加*表示不对公式编号
  \begin{aligned}
       \mathbb{S}_{\mathbf{v}} \triangleq \{ \mathbf{y} \mid \Vert \mathbf{y} \Vert =1, 0 < \mathbf{v}\cdot\mathbf{y} \leq 1 \},
  \end{aligned}
  \label{function:directionvectorset}
\end{equation}
where $\Vert \cdot \Vert$ denotes the modulus length. As $\Vert \mathbf{v} \Vert = \Vert \mathbf{y} \Vert = 1$, $\mathbf{v}\cdot\mathbf{y} = \frac{\mathbf{v} \cdot \mathbf{y}}{\Vert \mathbf{v} \Vert \cdot \Vert \mathbf{y} \Vert}$ indicates the cosine similarity of $\mathbf{v}$ and $\mathbf{y}$.

Based on this, we define the \textbf{First-order Directional TGD of Multivariate Function} as the weighted average of the first-order TGD in directions where the angle with the target direction is acute:
\begin{equation}%加*表示不对公式编号
  \begin{aligned}
       \frac{\partial_{\TGD} f}{\partial \mathbf{v}}\left(\mathbf{x}\right) &\triangleq
       \frac{ \int_{\mathbb{S}_{\mathbf{v}}} 
       f_{\TGD}^{\prime}\left(\mathbf{x}; w, W, \mathbf{y}\right) \left(\mathbf{x}\right) \widetilde{w}(\arccos{(\mathbf{v}\cdot\mathbf{y})}) \d{S}}
       {\int_{\mathbb{S}_{\mathbf{v}}}\widetilde{w}(\arccos{(\mathbf{v}\cdot\mathbf{y})}) \d{S}} \\
       &= K_1 (\widetilde{T}_{\mathbf{v}} * f)(\mathbf{x}),
  \end{aligned}
  \label{function:realDirectionDerivative}
\end{equation}
where $\d{S}$ denotes the differential element, i.e., the arc length element in the 2D case and the area element in the 3D case. And $\arccos{(\mathbf{v}\cdot\mathbf{y})}$ represents the angle between $\mathbf{v}$ and $\mathbf{y}$. $K_1$ is the normalization factor:
\begin{equation}
    \begin{aligned}
       K_1 = \frac{1}{2\int_{0}^{W}t w(t) \d{t}}\cdot \frac{1}{\int_{\mathbb{S}_{\mathbf{v}}}\widetilde{w}(\arccos{(\mathbf{v}\cdot\mathbf{y})}) \d{S}}.
    \end{aligned}
\end{equation}%加*表示不对公式编号
$\widetilde{w}(\theta)$ is the \textbf{rotation weight function} defined in $(-\frac{\pi}{2}, \frac{\pi}{2})$. And it is an non-negative even function monotonically non-increasing when $\theta \geq 0$ and satisfies the normalization condition $\int_{-\frac{\pi}{2}}^{\frac{\pi}{2}} \widetilde{w}(\theta) \d{\theta} = 1$. For example, we can take cosine function $\widetilde{w}(\theta) = c\cos(\theta)$ where $c$ is the normalization factor. At this point, the above computational model (Formula~\eqref{function:realDirectionDerivative}) retains the  \textit{Difference-by-Convolution} mode, that is, the convolution of the multivariate function $f$ with the multivariate \textbf{first-order directional TGD operator} $\widetilde{T}_{\mathbf{v}}$.
\begin{equation}%加*表示不对公式编号
  \begin{aligned}
       \widetilde{T}_{\mathbf{v}}(\mathbf{t}) = \left\{\begin{array}{cc}
        T\left(\Vert \mathbf{t} \Vert\right)\widetilde{w}\left(\arccos{(\cos(\mathbf{v}, \mathbf{t}))}\right) & \quad  0 < \cos(\mathbf{v}, \mathbf{t}) \leq 1, \Vert\mathbf{t}\Vert \leq W \\
        T\left(-\Vert\mathbf{t}\Vert\right)\widetilde{w}\left(\arccos{(\cos(\mathbf{v}, \mathbf{t}))} - \pi\right) & \quad -1 \leq \cos(\mathbf{v}, \mathbf{t}) < 0, \Vert\mathbf{t}\Vert \leq W \\
        0 & \quad \cos(\mathbf{v}, \mathbf{t}) = 0, \Vert\mathbf{t}\Vert \leq W \text{or} \  \Vert\mathbf{t}\Vert > W
    \end{array}\right.
  \end{aligned}
  \label{eq:TGD_highD_1st}
\end{equation}
where $T$ is the 1D first-order TGD operator, %(see Section~\ref{subsec:TGD operator} for specific examples), 
$\Vert \cdot \Vert$ denotes the modulus length, $\cos(\mathbf{v}, \mathbf{t}) = \frac{\mathbf{v} \cdot \mathbf{t}}{\Vert \mathbf{v} \Vert \cdot \Vert \mathbf{t} \Vert}$ indicates the cosine similarity of $\mathbf{v}$ and $\mathbf{t}$, and $\arccos{(\cos(\mathbf{v}, \mathbf{t}))}$ denotes the angle between $\mathbf{v}$ and $\mathbf{t}$.

Similarly, we define the \textbf{Second-order Directional TGD of Multivariate Function} as the weighted average of the second-order TGD in directions where the angle with the target direction is acute.
\begin{equation}%加*表示不对公式编号
  \begin{aligned}
    f_{\TGD}^{\prime\prime}\left(\mathbf{x}; w, W, \mathbf{v}\right) \! \triangleq \! \frac{\int^{W}_{0} \! f(\mathbf{x} \!+\! t\mathbf{v})w(t)\!\d{t} \!+\! \int^{W}_{0} \! f(\mathbf{x} \!-\! t\mathbf{v})w(t)\!\d{t} \!-\! 2f(\mathbf{x})\!\int_{0}^{W}\!w(t)\!\d{t}}{\int_{0}^{W}t^2 w(t)\d{t}}
  \end{aligned}
\end{equation}
\begin{equation}%加*表示不对公式编号
  \begin{aligned}
       \frac{\partial_{\TGD}^2 f}{\partial \mathbf{v}^2} \left(\mathbf{x}\right) &\triangleq
       \frac{\int_{\mathbb{S}_{\mathbf{v}}}
       f_{\TGD}^{\prime\prime}\left(\mathbf{x}; w, W, \mathbf{y}\right) \widetilde{w}(\arccos{(\mathbf{v}\cdot\mathbf{y})}) \d{S}}{\int_{\mathbb{S}_{\mathbf{v}}}\widetilde{w}(\arccos{(\mathbf{v}\cdot\mathbf{y})}) \d{S}} \\
       &= K_2 (\widetilde{R}_{\mathbf{v}} * f)(\mathbf{x}),
  \end{aligned}
  \label{function:realDirectionDerivative2}
\end{equation}
where $\d{S}$ denotes the differential element. And $\arccos{(\mathbf{v}\cdot\mathbf{y})}$ represents the angle between $\mathbf{v}$ and $\mathbf{y}$. $K_2$ is the normalization factor:
\begin{equation}
  \begin{aligned}
      K_2 = \frac{1}{\int_{0}^{W}t^2 w(t) \d{t}}\cdot \frac{1}{\int_{\mathbb{S}_{\mathbf{v}}}\widetilde{w}(\arccos{(\mathbf{v}\cdot\mathbf{y})}) \d{S}}
  \end{aligned}
\end{equation}
All the definitions of mathematical symbols are consistent with the first-order case. And, the above computational model (Formula~\eqref{function:realDirectionDerivative2}) is equivalent to the convolution of the multivariate function $f$ with the multivariate \textbf{second-order directional TGD operator} $\widetilde{R}_{\mathbf{v}}$.
\begin{equation} %加*表示不对公式编号
  \begin{aligned}
       \widetilde{R}_{\mathbf{v}}(\mathbf{t}) \!=\! \left\{\begin{array}{cc}
       \!R\left(\Vert\mathbf{t}\Vert\right)\widetilde{w}\left(\arccos{(\cos(\mathbf{v}, \mathbf{t}))}\right)\! & \quad 0 < \cos(\mathbf{v}, \mathbf{t}) \leq 1, \Vert\mathbf{t}\Vert \leq W \\
        \!R\left(-\Vert\mathbf{t}\Vert\right)\widetilde{w}\left(\arccos{(\cos(\mathbf{v}, \mathbf{t}))} \!-\! \pi\right)\! & \quad -1 \leq \cos(\mathbf{v}, \mathbf{t}) < 0, \Vert\mathbf{t}\Vert \leq W \\
        \!0\! & \quad \cos(\mathbf{v}, \mathbf{t}) \!=\! 0, \Vert\mathbf{t}\Vert \!\leq\! W \text{or} \  \Vert\mathbf{t}\Vert \!>\! W
    \end{array}\right.
  \end{aligned}
  \label{eq:TGD_highD_2nd}
\end{equation}
where $R$ is the 1D second-order TGD operator, %(see Section~\ref{subsec:TGD operator} for specific examples), 
$\Vert \cdot \Vert$ denotes the modulus length, $\cos(\mathbf{v}, \mathbf{t}) = \frac{\mathbf{v} \cdot \mathbf{t}}{\Vert\mathbf{v}\Vert \cdot \Vert\mathbf{t}\Vert}$ indicates the cosine similarity of $\mathbf{v}$ and $\mathbf{t}$, and $\arccos{(\cos(\mathbf{v}, \mathbf{t}))}$ represents the angle between $\mathbf{v}$ and $\mathbf{t}$.

Since the construction of $\widetilde{T}_{\mathbf{v}}$ and $\widetilde{R}_{\mathbf{v}}$ involves combining one-dimensional TGD operators with rotation weights $\widetilde{w}$, this construction method is called the \textbf{Rotational Construction Method}. Moreover, both $\widetilde{T}_{\mathbf{v}}$ and $\widetilde{R}_{\mathbf{v}}$ demonstrate symmetry, and their integrals amount to $0$. Consequently, the TGD value remains constant at $0$ for constant functions:
\begin{equation} %加*表示不对公式编号
  \begin{aligned}
    \int\widetilde{T}_{\mathbf{v}}(\mathbf{t})\mathrm{d}\mathbf{t} = \int\widetilde{R}_{\mathbf{v}}(\mathbf{t})\mathrm{d}\mathbf{t} = 0
  \end{aligned}
  \label{eq:TGD_highD_0}
\end{equation}

Despite the complexity of the mathematical formula, the directional TGDs defined by Formula~\eqref{function:realDirectionDerivative} and~\eqref{function:realDirectionDerivative2} are still \textit{Difference-by-Convolution}, and the actual calculation process is quite simple and efficient. Due to the unbiasedness of TGD, the normalization constants ($K_1$ and $K_2$) are usually negligible in practical applications, allowing us to calculate the directional TGD through a convolution after obtaining the operators ($\widetilde{T}_{\mathbf{v}}$ and $\widetilde{R}_{\mathbf{v}}$). 

%In subsequent sections, we shall give examples of 2D and 3D directional TGD operators.

%\subsection{Rotational Constructed 2D Directional TGD Operators}
%\label{subsec:rotationalconstructionmethod}

We instantiate 2D directional TGD operators by employing various weight functions in the rotational constructed method. Without loss of generality, we consider the directional TGD at the $x$-direction, where the angle variable $\theta$ is measured from the positive direction of the $x$-axis. The rotation weight $\widetilde{w}(\theta)$ is aimed at decreasing the weight as the rotation angle increases, which can be achieved by a variety of functions, such as the linear, cosine (Formula~\eqref{function:cos-weight-function}), and exponential functions. The constructed TGD operators exhibit anisotropy and decrease as $\theta$ increases. This property allows for proper characterization of the function in a particular direction. Take the cosine function as an example of rotational weight:

%Each function has a different rate of decrease as $\theta$ increases, with the fastest decreasing being the exponential function, and the slowest and simplest being the constant function. 

%\begin{equation}%加*表示不对公式编号
%  \begin{aligned}
%    \widetilde{w}_{\text{constant}}(\theta) = 1/\pi  \quad\quad \theta\in (-\pi/2, \pi/2),
%  \end{aligned}
%  \label{function:constant-weight-function}
%\end{equation}

% \begin{equation}%加*表示不对公式编号
%   \begin{aligned}
%     \widetilde{w}_{\text{linear}}(\theta) = c (-|\theta|+\pi) \quad % \theta\in (-\pi/2, \pi/2),
%   \end{aligned}
%   \label{function:linear-weight-function}
% \end{equation}

%\begin{equation}%加*表示不对公式编号
%  \begin{aligned}
%    \widetilde{w}_{\cos}(\theta) = c \cos{\theta} \quad\quad \theta\in (-\pi/2, \pi/2),
%  \end{aligned}
%\end{equation}
%\begin{equation}%加*表示不对公式编号
%  \begin{aligned}
%    \widetilde{w}_{\text{exp}}(\theta) = c e^{-|\theta|} \quad\quad \theta\in (-\pi/2, \pi/2),
%  \end{aligned}
%\end{equation}
\begin{equation}%加*表示不对公式编号
  \begin{aligned}
    \widetilde{w}_{\cos}(\theta) = \frac{1}{2} \cos{\theta} \quad\quad \theta\in (-\pi/2, \pi/2)
  \end{aligned}
  \label{function:cos-weight-function}
\end{equation}
and the constraint $\int_{-\pi/2}^{\pi/2} \widetilde{w}_{\cos}(\theta) d{\theta} = 1$ is satisfied. 
Recall that when using the Gaussian kernel function, we have: %(Section~\ref{subsubsec:GaussianKernel}), we have:
\begin{equation*}%加*表示不对公式编号
  \begin{aligned}
    T_{\text{gaussian}}(x) = \left\{\begin{array}{cc}
        ke^{-x^2/2\delta^2}& \quad\quad x \in [-W,0)\\
        0& \quad\quad x=0 \\
        -ke^{-x^2/2\delta^2}& \quad\quad x\in(0, W]
    \end{array}\right.
  \end{aligned}
\end{equation*}
\begin{equation*}%加*表示不对公式编号
  \begin{aligned}
    R_{\text{gaussian}}(x) = \left\{\begin{array}{cc}
        ke^{-x^2/2\delta^2}& \quad\quad x \in [-W,0) \cup (0,W]\\
        -2\delta(0)& \quad\quad x=0 
    \end{array}\right.
  \end{aligned}
\end{equation*}

Combining the rotation weights $\widetilde{w}(\theta)$ and the 1D TGD operators $T_{\text{gaussian}}(x)$ and $R_{\text{gaussian}}(x)$, the 2D TGD operators $\widetilde{T}_{x}(x,y)$ and $\widetilde{R}_{x}(x,y)$ can be constructed as follows:
% \begin{equation}%加*表示不对公式编号
%   \begin{aligned}
%        \widetilde{T}_{x}(x,y) \!=\! \left\{\begin{array}{cc}
%         \!T_{\text{gaussian}}\left(\sqrt{x^2+y^2}\right)\widetilde{w}\left(\!\arccos{\left(  \frac{x}{\sqrt{x^2+y^2}} \right)} \!\right) & \quad x > 0, \sqrt{x^2+y^2} \leq W \\
%         \!0 & \quad x = 0, \sqrt{x^2+y^2} \leq W \\
%         \!T_{\text{gaussian}}\left(-\sqrt{x^2+y^2}\right)\widetilde{w}\left(\!\arccos{\left(\frac{-x}{\sqrt{x^2+y^2}}\right)} \!\right) & \quad x < 0, \sqrt{x^2+y^2} \leq W
%     \end{array}\right.
%   \end{aligned}
% \end{equation}
% \begin{equation}%加*表示不对公式编号
%   \begin{aligned}
%        \widetilde{R}_{x}(x,y) \!=\! \left\{\begin{array}{cc}
%         \!R_{\text{gaussian}}\left(\sqrt{x^2+y^2}\right)\widetilde{w}\left(\!\arccos{\left( \frac{x}{\sqrt{x^2+y^2}} \right)}\!\right) & \quad x > 0, \sqrt{x^2+y^2} \leq W \\
%         \!-2\delta(0) & \quad x = y = 0 \\
%         \!0 & \quad  x = 0, y \in [-W,0) \cup (0,W] \\
%         \!R_{\text{gaussian}}\left(-\sqrt{x^2+y^2}\right)\widetilde{w}\left(\!\arccos{\left(\frac{-x}{\sqrt{x^2+y^2}}  \right)} \!\right) & \quad x < 0, \sqrt{x^2+y^2} \leq W
%     \end{array}\right.
%   \end{aligned}
% \end{equation}
\begin{equation}%加*表示不对公式编号
  \begin{aligned}
       \widetilde{T}_{x}(x,y) \!=\! \left\{\begin{array}{cc}
        \!T_{\text{gaussian}}\left(\sqrt{x^2+y^2}\right)\cdot\frac{x}{2\sqrt{x^2+y^2}} & \quad x > 0, \sqrt{x^2+y^2} \leq W \\
        \!0 & \quad x = 0, \sqrt{x^2+y^2} \leq W \\
        \!T_{\text{gaussian}}\left(-\sqrt{x^2+y^2}\right)\cdot\frac{-x}{2\sqrt{x^2+y^2}} & \quad x < 0, \sqrt{x^2+y^2} \leq W
    \end{array}\right.
  \end{aligned}
\end{equation}
\begin{equation}%加*表示不对公式编号
  \begin{aligned}
       \widetilde{R}_{x}(x,y) \!=\! \left\{\begin{array}{cc}
        \!R_{\text{gaussian}}\left(\sqrt{x^2+y^2}\right)\cdot\frac{x}{2\sqrt{x^2+y^2}} & \quad x > 0, \sqrt{x^2+y^2} \leq W \\
        \!-2\delta(0) & \quad x = y = 0 \\
        \!0 & \quad  x = 0, y \in [-W,0) \cup (0,W] \\
        \!R_{\text{gaussian}}\left(-\sqrt{x^2+y^2}\right)\cdot\frac{-x}{2\sqrt{x^2+y^2}} & \quad x < 0, \sqrt{x^2+y^2} \leq W
    \end{array}\right.
  \end{aligned}
\end{equation}

\begin{figure}[!htb]
  \begin{minipage}[b]{1.0\linewidth}
    \centering
    \subfigure[2D First- and Second-order directional TGD operator]{   	 		 
        \includegraphics[width=0.8\linewidth]{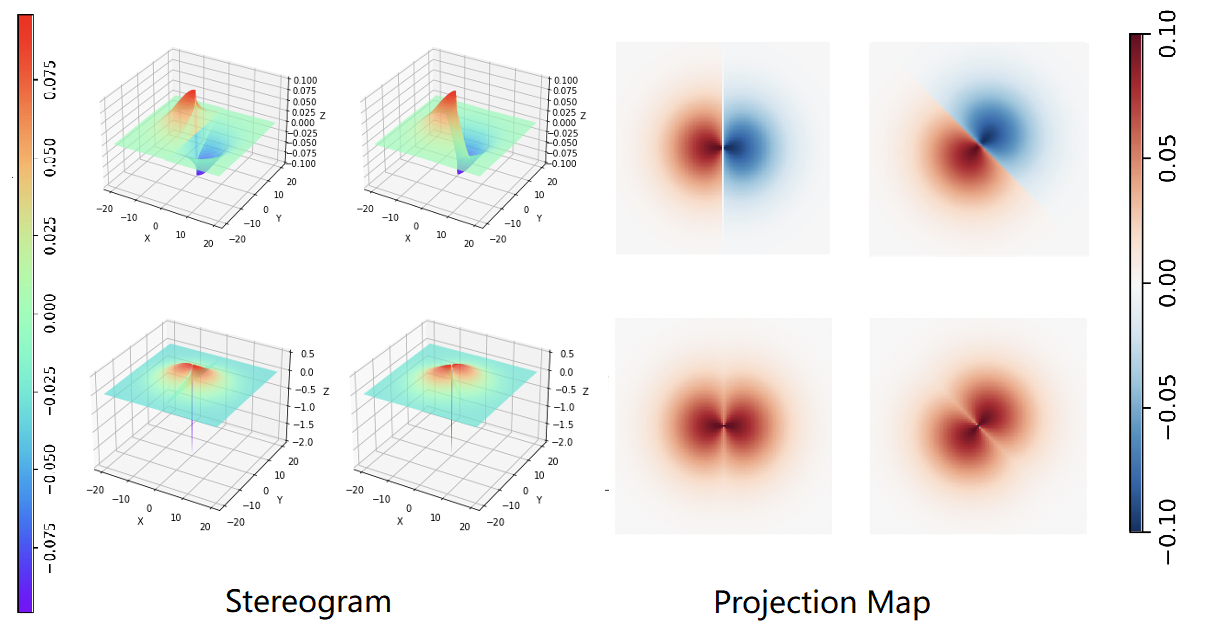}
    }  
    \subfigure[The cross-section projection of the First and Second order directional TGD operator]{   	 		 
        \includegraphics[width=0.8\linewidth]{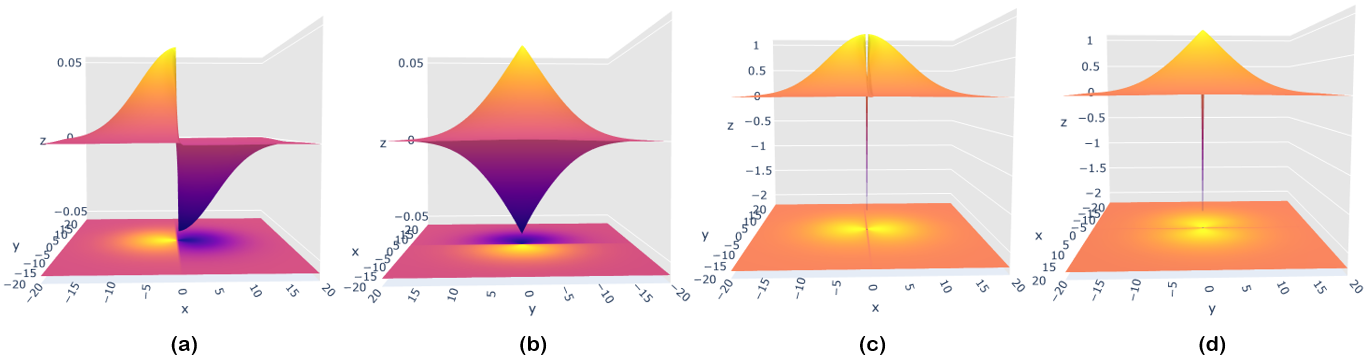}
    }  
  \end{minipage}
  \caption{
    The 2D First- and Second-order directional TGD operator (a) $\widetilde{T}_{\mathbf{v}}$ and its projection map, cross-section projection (b). Gaussian kernel function is used, and the derivation directions include 0' and 45', $\widetilde{w}(\theta)$ is a cosine function.
  }
  \label{fig:2DTGDkernel1}
\end{figure}

Based on the TGD operators $\widetilde{T}_{x}$ and $\widetilde{R}_{x}$, the 2D directional TGD operators $\widetilde{T}_{\theta}$ and $\widetilde{R}_{\theta}$ with the angle $\theta$ between the difference direction and the $x$-axis, can be obtained by the coordinate rotation operation.
\begin{equation}%加*表示不对公式编号
  \begin{aligned}
    \left[\begin{array}{l}
        x' \\
        y'
        \end{array}\right]&=\left[\begin{array}{cc}
        \cos \theta & \sin \theta \\
        -\sin \theta & \cos \theta
        \end{array}\right]\left[\begin{array}{l}
        x \\
        y
    \end{array}\right] \\
    &\widetilde{T}_{\theta}(x,y) = \widetilde{T}_{x}(x',y') \\
    &\widetilde{R}_{\theta}(x,y) = \widetilde{R}_{x}(x',y')
  \end{aligned}
  \label{eq:rotation_methods}
\end{equation}

Figure~\ref{fig:2DTGDkernel1} displays the 2D first-order TGD operators $\widetilde{T}_{\theta}$ and the 2D second-order TGD operators $\widetilde{R}_{\theta}$ based on the Gaussian kernel function and their projection diagram, where $\widetilde{w}(\theta)$ chooses cosine function. The figures feature examples of the 0' and 45' derivation directions. The projection diagram demonstrates that the operator energy is primarily concentrated in the differential direction. 

%In Figure~\ref{fig:2DTGDcross-section}, the positive part of the second-order TGD operator is enlarged for presentation convenience. 

Of particular interest is the demonstration that this operator performs TGD calculations in the differential direction, while filtering via approximately integral Gaussian function (Fig. ~\ref{fig:GaussConstructor}.a) in the orthogonal direction (Fig. ~\ref{fig:2DTGDkernel1}.b), thereby suppressing noise during TGD calculation. This indicates that the 2D TGD is filtered in its orthogonal direction to suppress the noise while calculating the TGD. The two together make the TGD calculation stable to function values as well as noise, highlighting the practicality of this method in real-world applications. 

%This property can be extended to the first- and second-order directional TGD of general multivariate functions.

In mathematics, the Laplace operator is a differential operator defined as the divergence of the gradient of a function in Euclidean space. The Laplacian is invariant under all Euclidean transformations: rotations and translations. The Laplacian appears in differential equations that describe various physical phenomena, hence its extensive application in scientific modeling. Additionally, in computer vision, the Laplace operator has proven useful for multiple tasks, including blob and edge detection~\cite{marr1980theory,kong2013generalized,wang2007laplacian}. 

When we take a constant function in the rotation weight $\widetilde{w}_{\text{constant}}$ (Formula~\eqref{function:constant-weight-function}) and $\int_{-\pi/2}^{\pi/2} \widetilde{w}_{\text{constant}}(\theta) d{\theta} = 1$,
then the 2D second-order TGD operator becomes an isotropic operator with rotational invariance. We notice that these operators conform to the definition of the Laplace operator and take various forms depending on the employed kernel functions. We collectively refer to them as \textbf{L}aplace \textbf{o}f \textbf{T}GD (LoT) operators~\footnote{Please refer to the Figure 1 in Applications of Tao General Difference in Discrete Domain.}.
\begin{equation}%加*表示不对公式编号
  \begin{aligned}
    \widetilde{w}_{\text{constant}}(\theta) = 1/\pi \quad \theta\in [-\pi/2, \pi/2],
  \end{aligned}
  \label{function:constant-weight-function}
\end{equation}
\begin{equation}%加*表示不对公式编号
  \begin{aligned}
       \text{LoT}(x,y) = \left\{\begin{array}{cc}
        R\left(\sqrt{x^2+y^2}\right) \cdot \frac{1}{\pi} & \quad\quad 0< \sqrt{x^2+y^2} \leq W \\
        -2\delta(0) & \quad\quad x = y = 0
    \end{array}\right.
  \end{aligned}
  \label{eq:TGD_highD_laplace}
\end{equation}

It is worth noting that the LoT operator still satisfies the integral of $0$, which indicates that for a constant function, its Laplace TGD is constant and equal to $0$.

%Figure~\ref{fig:laplacian} shows the 2D LoG as well as the LoT operator constructed using a Gaussian kernel function. Both operators are normalized to have a minimum value of $-2$. Although they share a commonality in being created using Gaussian functions, they differ in the distribution of negative values. With the LoG operator, the negative values are concentrated around the center, and the maximum positive values are far from the center. In contrast, with the LoT operator, negative values are only in the center, and the maximum positive values are next to the center. This difference embodies the Monotonic Constraint, which, in the theory of TGD, ensures that the zero-crossing positions of difference values (where the original signal changes fastest locally) obtained using the LoT operator are unbiased.

The rotational construction method can also be utilized to construct multivariate directional TGD operators. The constructed operator is expressed in dense high-dimensional convolution kernel, which is the high dimensional directional General Difference by Convolution.

\begin{figure}[htb]
  \begin{minipage}[b]{1.0\linewidth}
    \centering
    \centerline{\includegraphics[width=\linewidth]{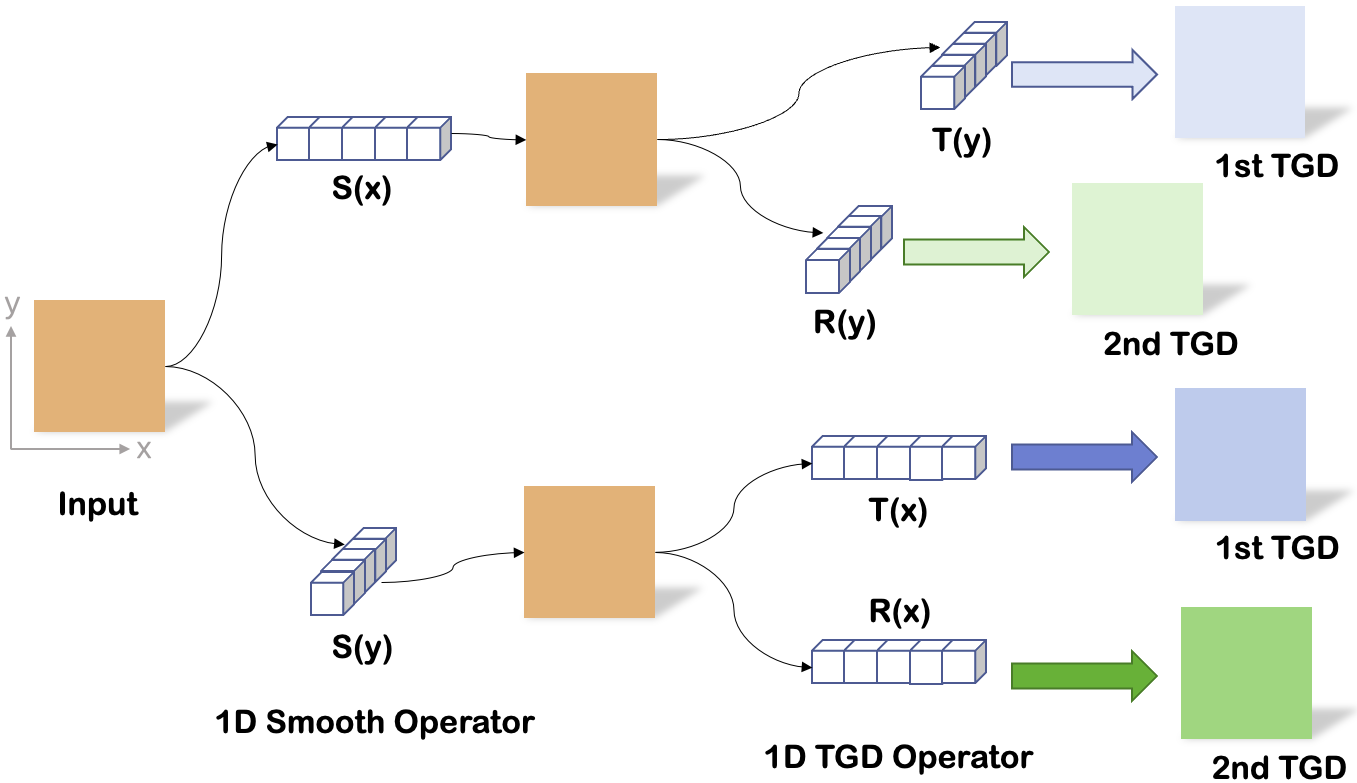}}
  \end{minipage}
  \caption{
    The decomposition diagram of the convolution calculation process of an image and 2D TGD operators obtained by the orthogonal construction method.
  }
  \label{fig:2DkernelConv}
\end{figure}

\subsubsection{Orthogonal Construction}
\label{subsec:Orthogonal construction method}

\begin{figure}[htb]
  \begin{minipage}[b]{1.0\linewidth}
    \centering
    \centerline{\includegraphics[width=\linewidth]{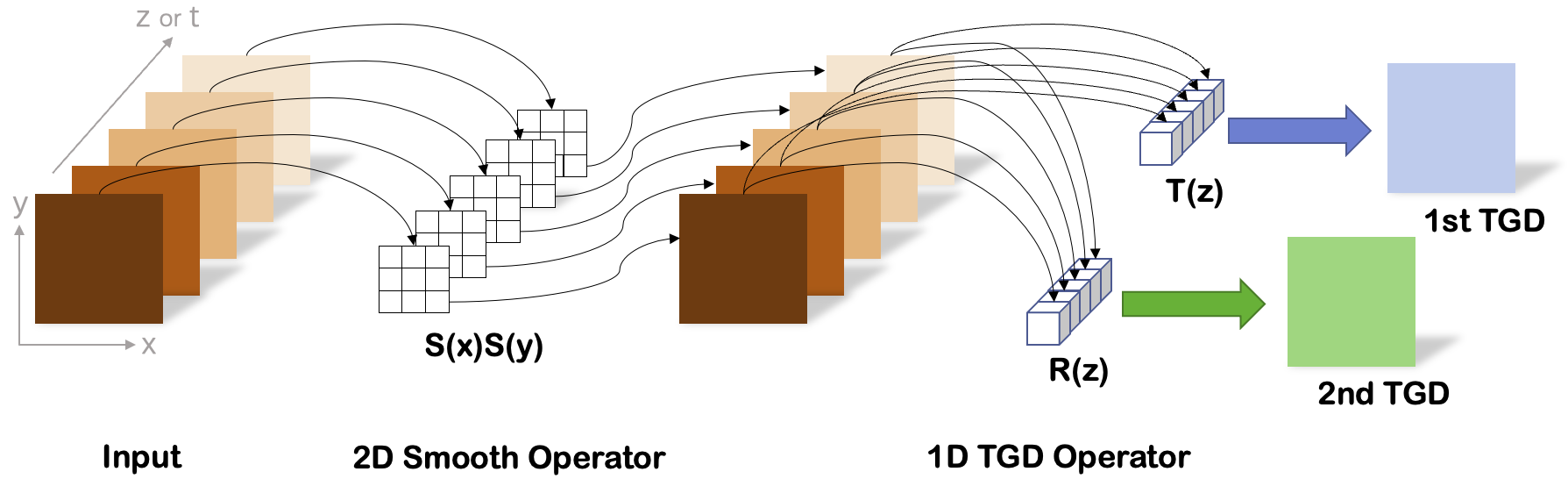}}
  \end{minipage}
  \caption{
    The decomposition diagram of the convolution calculation process of a series of images and 3D TGD operators obtained by the orthogonal construction method.
  }
  \label{fig:3DkernelConv}
\end{figure}

For calculating the TGD of high dimensional functions efficiently, we introduce a simple and intuitive way of constructing operators, called the \textbf{Orthogonal Construction Method}. Figure~\ref{fig:2DTGDkernel1}.b demonstrate that the 2D TGD operator performs general differential calculation in the differential direction and low-pass filtering based on a convex function in the orthogonal direction. This suggests that the 2D TGD computes the 1D TGD in its differential direction while filtering in its orthogonal direction to suppress noise. Therefore, we define the \textbf{2D orthogonal TGD operators} as follows (Figure~\ref{fig:2DkernelConv}):
\begin{equation}%加*表示不对公式编号
  \begin{aligned}
        \widetilde{T}_{x\perp}(x,y) \triangleq T(x) \cdot S(y) \\
        \widetilde{R}_{x\perp}(x,y) \triangleq R(x) \cdot S(y) 
  \end{aligned}
\end{equation}
where $T(x)$ is the 1D first-order TGD operator, $R(x)$ is the 1D second-order TGD operator, and $S(x)$ is the 1D smooth operator. % (see Section~\ref{subsec:TGD operator}).

Similarly, the multivariate \textbf{orthogonal TGD operators} are defined as follows (Figure~\ref{fig:3DkernelConv}):
\begin{equation}%加*表示不对公式编号
  \begin{aligned}
        \widetilde{T}_{x_i \perp}(x_1,\ldots,x_{i-1},x_i,x_{i+1},\ldots,x_n) \triangleq S(x_1)\ldots S(x_{i-1}) T(x_i) S(x_{i+1})\ldots S(x_n) \\
        \widetilde{R}_{x_i \perp}(x_1,\ldots,x_{i-1},x_i,x_{i+1},\ldots,x_n) \triangleq S(x_1)\ldots S(x_{i-1}) R(x_i) S(x_{i+1})\ldots S(x_n) 
  \end{aligned}
  \label{eq:TGD_highD_orth}
\end{equation}

Based on Formula~\eqref{eq:sum_of_TGD}, it is easy to deduce that the orthogonal TGD operator satisfies an integral of $0$:
\begin{equation} %加*表示不对公式编号
  \begin{aligned}
    \int\ldots\int\widetilde{T}_{x_i \perp}\left(x_1,\ldots,x_{i-1},x_i,x_{i+1},\ldots,x_n\right)\mathrm{d}x_1\ldots\mathrm{d}x_n = 0 \\
    \int\ldots\int\widetilde{R}_{x_i \perp}\left(x_1,\ldots,x_{i-1},x_i,x_{i+1},\ldots,x_n\right)\mathrm{d}x_1\ldots\mathrm{d}x_n = 0 \\
  \end{aligned}
  \label{eq:TGD_highD_0_2}
\end{equation}
which means that the TGD value is constant to $0$ for constant functions.

The multivariate TGD operators constructed via the orthogonal construction method perform TGD calculations along the coordinate axes, which is referred to as \textbf{partial TGD}, corresponding to the conventional partial derivative. It is, however, practical to construct TGD operators in any direction by employing coordinate rotation (Formula~\eqref{eq:rotation_methods}).

Figure~\ref{fig:orthogonalKernel2D} illustrates the visualization results of the 2D first- and second-order TGD operators created using the orthogonal construction method. Compared to the TGD operators produced with the rotational construction method (Figure~\ref{fig:2DTGDkernel1}), %and Figure~\ref{fig:2DTGDkernel2}), 
the first-order TGD operator remains practically unaltered, while the second-order TGD operator changes from having only one negative value at the origin to comprising a line, which is perpendicular to the differential direction and negative at the center. Furthermore, Figure~\ref{fig:3Dkernel} displays the cross-sections and contours of the 3D first- and second-order TGD operators in the $z$-direction, generated by the orthogonal construction method. 

\begin{figure}[htb]
    \centering
    \begin{minipage}[b]{0.8\linewidth}
        \centerline{\includegraphics[width=\linewidth]{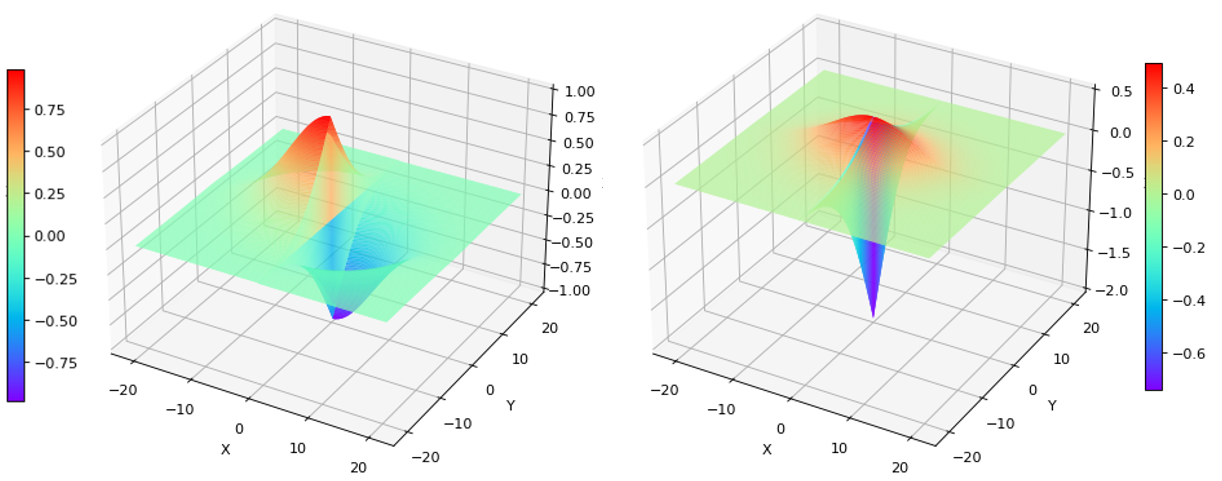}}
    \end{minipage}
    \caption{
    2D first- and second-order TGD operators based on the orthogonal construction method. Gaussian kernel function is used for both TGD and filtering.
    }
    \label{fig:orthogonalKernel2D}
\end{figure}

\begin{figure}[!htb]    	
    \centering    	
%    \subfigure[]{  			 
%        \includegraphics[width=0.45\linewidth]{}
%    }    	 
%    \subfigure[]{   	 		 
%        \includegraphics[width=0.45\linewidth]{}
%    }  
    \subfigure[]{  			 
        \includegraphics[width=0.8\linewidth]{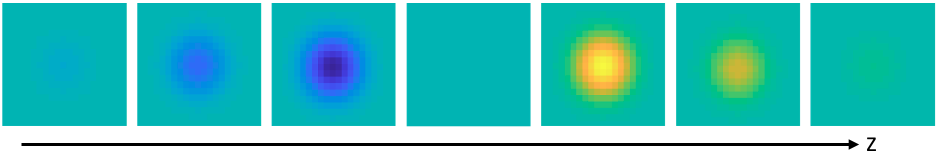}
    }    	 
    \subfigure[]{   	 		 
        \includegraphics[width=0.8\linewidth]{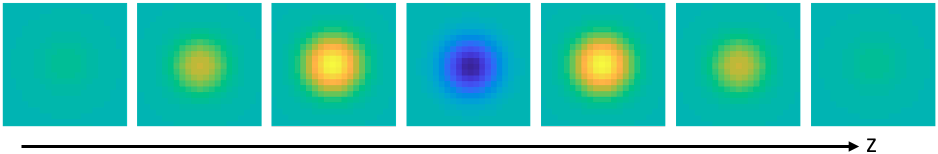}
    }  
    \subfigure[]{  			 
        \includegraphics[width=0.8\linewidth]{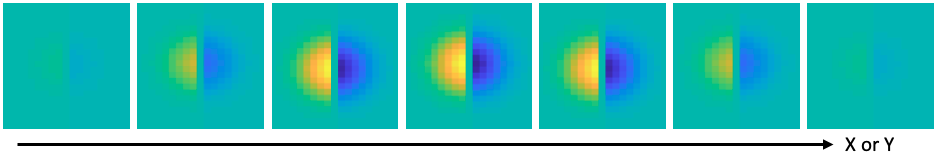}
    }    	 
    \subfigure[]{   	 		 
        \includegraphics[width=0.8\linewidth]{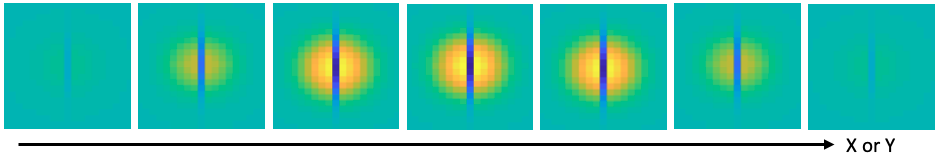}
    }  
    \caption{Schematic diagrams of 3D first- and second-order z-directional TGD operators obtained by the orthogonal construction method: (a) a vertical z-directional cross-section of the first-order operator; (b) a vertical z-directional cross-section of the second-order operator; (c) a vertical x-(or y-)directional cross-section of the first-order operator; (d) a second-order operator vertical x-(or y-)direction cross-sectional diagram. Gaussian kernel function is used for both TGD and filtering.}  
    \label{fig:3Dkernel} %% label for entire figure
\end{figure}

It is worth pointing out that the convex smooth operator can be approximated using a traditional Gaussian filter. Consider the operator obtained by the rotational construction method, and suppose the TGD kernel function uses Gaussian and the rotation weight $\widetilde{w}(\theta)$ is taken as a constant. In that case, the filter function in the orthogonal direction can be approximated using a Gaussian function. In practice, the smooth operator in the orthogonal construction method can be implemented using a variety of functions based upon specific needs.

The use of orthogonal operators brings about decoupling between dimensions, allowing each dimension to operate independently. This means that the 1D TGD is computed exclusively in the differential direction while filtering occurs in all orthogonal directions. The resulting decoupling enhances the speed of convolution computation while enhancing efficiency. Figure~\ref{fig:2DkernelConv} and Figure~\ref{fig:3DkernelConv} show decomposition diagrams detailing the convolution calculation process of the 2D and 3D TGD operators, respectively, constructed using the orthogonal construction method. In the 2D TGD computation, the 2D convolution can be split into two 1D convolutions. And in the 3D TGD calculation process, the dense 3D convolution can be divided into two parts: a 2D filtering and a 1D convolution.

\clearpage
%%%%%%%%%%%%%%%%%%%%%%%%%%%%%%%%%%%%%%%%%%%%%%%%%%%%%%%%%%%%%%%%%%%%%%%%%%
\newpage

\section{General Difference in Discrete Domain}

%\subsection{One-Dimensional Discrete TGD}
%\label{subsec:1D Discrete TGD}

In modern discrete systems, a sequence is sampled or digitized over a certain time, space, or spatiotemporal interval from a continues function $f(x)$. A sequence is denoted as $X(n)$, and the one-dimensional sequence is written as:
\begin{equation}
    X(n) = f(n\Delta),
    \label{eq:discrete_signal}
\end{equation}
where $n$ is integer, and $\Delta > 0$ is a discrete step known as the \emph{interval}. Without loss of generality, we set the interval $\Delta = 1$ in the following text to facilitate the expression.

Differential calculus necessitates continuous functions. However, in practice, we cannot reverse the formula~\eqref{eq:discrete_signal} to obtain the original continuous function $f(x)$ from the sequence $X(n)$. Thus, the difference is used to perform a similar calculation in a sequence as the differentiation in a continuous function. The 1st- and 2nd-order center difference of a sequence is defined among the neighbors in a sequence:
\begin{equation}
    \begin{aligned}
        X^{\prime}\left(n\right) & = \frac{X(n+1) - X(n-1)}{2} \\
        X^{\prime\prime}\left(n\right) & = X(n+1) - 2X(n) + X(n-1)
    \end{aligned}
    \label{eq:difference_definition}
\end{equation}
Notably, $\Delta > 0$ violates the fundamental definition of differentiation $\Delta \rightarrow 0$ in Formula~\eqref{eq:difference_definition}. The accuracy of this approximation degrades sharply with increasing $\Delta$. Therefore, the finite difference in discrete domain should be built on a solid theoretical foundation of finite interval-based difference in continue domain, rather than conventional differentiation.

\subsection{General Difference of One Dimensional Sequence}
TGD calculations in the continuous domain (Formula~\eqref{eq:TGD} and~\eqref{eq:2NDTGD}) cannot be directly applied to discrete sequences. Here, we construct a continuous function from a sequence simply by treating it as a continuous step function for the Difference-by-Convolution in a sequence. This step function aligns closely with the key properties of the sequence:
\begin{equation}
    \begin{aligned}
        \hat{f}(x) \triangleq X(n) \quad\quad x \in (n-\frac{1}{2}, n+\frac{1}{2}]
    \end{aligned}
    \label{eq:discrete_xn}
\end{equation}
%It should be noted that the step function is not suitable for fitting discrete functions in traditional differential calculations because it would result in a derivative value of $0$ for each discrete point.

%\begin{figure}[htb]
%  \begin{minipage}[b]{1.0\linewidth}
%    \centering
%    \centerline{\includegraphics[width=0.6\linewidth]{}}
%  \end{minipage}
%  \caption{
%    A sequence (Red dots) is transformed to a continuous step function.
%  }
%  \label{fig:from_discrete_to_continue}
%\end{figure}

%Here, we unify the concepts of both \emph{Differential} and \emph{Difference} in TGD by expressing 
Thus, the first-order difference of a sequence  can be calculated as the first-order Tao General Difference (Formula~\eqref{eq:TGD}) of the corresponding step function (Formula~\eqref{eq:discrete_xn}):
\begin{equation}
    \begin{aligned}
        X^{\prime}_{\TGD}\left(n; w, N\right) &\triangleq \hat{f}^{\prime}_{\TGD}\left(n; w, N+\frac{1}{2}\right) \\ 
        &= \frac{\int_{0}^{N+\frac{1}{2}} \hat{f}(n+t) w(t) \d{t}-\int_{0}^{N+\frac{1}{2}} \hat{f}(n-t) w(t) \d{t}}{2\int_{0}^{N+\frac{1}{2}}t w(t)\d{t}} \\
        &= C_1 \int_{-N-\frac{1}{2}}^{N+\frac{1}{2}} \hat{f}(n+t) T(-t) \d{t} \\
        &= C_1 \sum_{i=-N}^{N} \int_{i-\frac{1}{2}}^{i+\frac{1}{2}} \hat{f}(n+t) T(-t) \d{t} \\
        &= C_1 \sum_{i=-N}^{N} X(n + i) \int_{i-\frac{1}{2}}^{i+\frac{1}{2}} T(-t) \d{t} \\
%        &= C_1 \sum_{i=-N}^{N} X(n + i) \widehat{T}(-i) \\
        &= C_1 \left(\widehat{T} * X\right)(n)
    \end{aligned}
    \label{eq:discrete_TGD1}
\end{equation}
where $C_1$ is the normalization constant, and $\widehat{T}$ is the interval integral of the 1st-order TGD operator $T$, which is a discrete sequence with $2N+1$ elements. Here, we call $\widehat{T}$ as the \emph{First-order Discrete Tao General Difference Operator} (1st-order Discrete TGD Operator), and $N$ as the \emph{kernel size}.
\begin{equation}
    \begin{aligned}
        C_1 = \frac{1}{2\int_0^{N+\frac{1}{2}} tw(t) \d{t}}
    \end{aligned}
\end{equation}
\begin{equation}
    \begin{aligned}
        \widehat{T}(i) = \int_{i-\frac{1}{2}}^{i+\frac{1}{2}} T(t) \d{t}, \quad i = -N, -N+1, \dots, N-1, N
    \end{aligned}
    \label{eq:DTGD_operator_1st_1D}
\end{equation}

Similarly, we express the second-order Tao General Difference (TGD) of a sequence as the second-order Tao General Difference (Formula~\eqref{eq:2NDTGD}) of the corresponding step function:
\begin{equation}
    \begin{aligned}
        X^{\prime\prime}_{\TGD}\left(n; w, N\right) &\triangleq \hat{f}^{\prime\prime}_{\TGD}\left(n; w, N+\frac{1}{2}\right) \\ 
        &= C_2 \int_{-N-\frac{1}{2}}^{N+\frac{1}{2}} \hat{f}(n+t) R(-t) \d{t} \\
        &= C_2 \sum_{i=-N}^{N} X(n + i) \int_{i-\frac{1}{2}}^{i+\frac{1}{2}} R(-t) \d{t} \\
        &= C_2 \left(\widehat{R} * X\right)(n)
    \end{aligned}
    \label{eq:discrete_TGD2}
\end{equation}
where $C_2$ is the normalization constant, and $\widehat{R}$ is the interval integral of the 2nd-order TGD operator $R$, which is a discrete sequence with $2N+1$ elements. Here, we call $\widehat{R}$ as the \emph{Second-order Discrete Tao General Difference Operator} (2nd-order Discrete TGD Operator).
\begin{equation}
    \begin{aligned}
        C_2 &= \frac{1}{\int_0^{N+\frac{1}{2}} t^2 w(t) \d{t}}
    \end{aligned}
\end{equation}
\begin{equation}
    \begin{aligned}
        \widehat{R}(i) = \int_{i-\frac{1}{2}}^{i+\frac{1}{2}} R(t) \d{t}, \quad i = -N, -N+1, \dots, N-1, N
    \end{aligned}
    \label{eq:DTGD_operator_2nd_1D}
\end{equation}

Formula~\eqref{eq:discrete_TGD1} and~\eqref{eq:discrete_TGD2} manifest the Difference-by-Convolution for a sequence. Recalling the expressions for TGD operator $T$ (Formula~\eqref{eq:TGD_T}) and $R$ (Formula~\eqref{eq:TGD_R}), we can further expressed Formula~\eqref{eq:DTGD_operator_1st_1D} and~\eqref{eq:DTGD_operator_2nd_1D} in the following form:
\begin{equation}
    \begin{aligned}
        \widehat{T}(i) = \left\{\begin{array}{cc}
        \int_{i-\frac{1}{2}}^{i+\frac{1}{2}} w(-t) \d{t} > 0 \quad &  i = -N, -N+1,\dots, -1 \\
        \int_{-\frac{1}{2}}^{0} w(-t) \d{t} + \int_{0}^{\frac{1}{2}} -w(t) \d{t} = 0 \quad & i = 0 \\
        \int_{i-\frac{1}{2}}^{i+\frac{1}{2}} -w(t) \d{t} < 0 \quad & i = 1,2,\dots, N
    \end{array}\right.
    \end{aligned}
    \label{eq:DTGD_operator_1st_1D_2}
\end{equation}
\begin{equation}
    \begin{aligned}
        \widehat{R}(i) = \left\{\begin{array}{cc}
        \int_{i-\frac{1}{2}}^{i+\frac{1}{2}} w(-t) \d{t} > 0 \quad & i = -N, -N+1,\dots, -1 \\
        \int_{-\frac{1}{2}}^{\frac{1}{2}} w(|t|) \d{t} - 2\int_{-\frac{1}{2}}^{\frac{1}{2}}\delta(t) \d{t} = 2\int_{0}^{\frac{1}{2}} w(t) \d{t} - 2 < 0 \quad & i = 0 \\
        \int_{i-\frac{1}{2}}^{i+\frac{1}{2}} w(t) \d{t} > 0 \quad & i = 1,2,\dots,N
    \end{array}\right.
    \end{aligned}
    \label{eq:DTGD_operator_2nd_1D_2}
\end{equation}
And we have $\widehat{T}(i) = -\widehat{T}(-i)$, $\widehat{R}(i) = \widehat{R}(-i)$. Besides, from Formula~\eqref{eq:sum_of_TGD}, it is clear that:
\begin{equation}
    \begin{aligned}
        &\sum_{i=-N}^{N} \widehat{T}(i) = \int_{-N-\frac{1}{2}}^{N + \frac{1}{2}} T(t) \d{t} =  0 \\
        &\sum_{i=-N}^{N} \widehat{R}(i) = \int_{-N-\frac{1}{2}}^{N + \frac{1}{2}} R(t) \d{t} = 0
    \end{aligned}
    \label{eq:DTGD_sum_to_0}
\end{equation}
which means that for a constant sequence, the TGD result is constant at $0$. After normalizing the discrete TGD operator here by adjusting the normalization constant ($C_1$ and $C_2$), it follows:
\begin{equation}
    \begin{aligned}
    \left|\sum_{i=-N}^{-1}\widehat{T}(i)\right| = \left|\sum_{i=1}^{N}\widehat{T}(i)\right| = \left|\sum_{i=-N}^{-1}\widehat{R}(i)\right| = \left|\sum_{i=1}^{N}\widehat{R}(i)\right| = 1
    \end{aligned}
    \label{eq:DTGD_sum_to_1}
\end{equation}
where $| \cdot |$ denotes the absolute value. Combining Formula~\eqref{eq:DTGD_sum_to_0} and Formula~\eqref{eq:DTGD_sum_to_1}, we get $\widehat{T}(0) = 0$ and $\widehat{R}(0) = -2$.
Furthermore, as the kernel function $w$ for constructing the TGD operators $T$ and $R$ satisfies the Monotonic Constraint, we have:
\begin{equation}
    \begin{aligned}
        |\widehat{T}(i)| > |\widehat{T}(j)| > 0, \quad & 0 < |i| < |j| \leq N \\
        \widehat{R}(i) > \widehat{R}(j) > 0, \quad & 0 < |i| < |j| \leq N \\
    \end{aligned}
\end{equation}

%\begin{figure}[!htb]
%  \begin{minipage}[b]{1.0\linewidth}
%    \centering
%    \centerline{\includegraphics[width=\linewidth]{}}
%  \end{minipage}
%  \caption{
%    Schematic diagram of discrete TGD operator weight value acquisition. Since the Weibull kernel function is similar in shape to the Exponential kernel function, and the Landau kernel function is similar in shape to the Gaussian kernel function, we choose the simpler exponential and Gaussian as examples. First column: the initially selected kernel function, where the area between two adjacent red lines is the integral interval for obtaining the weight of the discrete TGD operator. Second column: the weight value of the discrete TGD operator obtained by segmental integrating the initial kernel function, which has been normalized to sum to 1. Third column: the weight value of the discrete TGD operator obtained by segmental integration is equivalent to the sampled value from another kernel function of the same form.
%  }
%  \label{fig:DTGDFit}
%
%\end{figure}

Towards practicality and simplicity, the discrete operator weights can be obtained through a direct discrete sampling of the kernel function without the requirement for intricate interval integration (Formula~\eqref{eq:DTGD_operator_1st_1D_2} and Formula~\eqref{eq:DTGD_operator_2nd_1D_2}). %$As shown in Figure~\ref{fig:DTGDFit}, if the kernel size is much larger than the sampling step, meaning that there is a constraint: 
%$N$ is significantly larger than $1$. 

The interval integral of a kernel function is equal to the discrete sampling of a kernel function having the same form, if the kernel size is much larger than the sampling step. This is a new constraint: 

\textit{$N$ is significantly larger than $1$. }

Thus, the discrete TGD operator can be obtained:
\begin{equation}
    \begin{aligned}
    \widehat{T}(i) = \left\{\begin{array}{cc}
        \frac{T\left(i\right)}{|\sum_{j=-N}^{-1} T\left(j\right)|} = \frac{w\left(-i\right)}{\sum_{j=1}^{N} w\left(j\right)} &  i = -N, -N+1,\dots, -1 \\
        0 \quad & i = 0 \\
        \frac{T\left(i\right)}{|\sum_{j=1}^{N} T\left(j\right)|} = \frac{-w\left(i\right)}{\sum_{j=1}^{N} w\left(j\right)} & \quad i = 1, 2, \dots, N-1, N
    \end{array}\right.
    \end{aligned}
    \label{eq:DTGD_operators_weights_1st_1D}
\end{equation}
\begin{equation}
    \begin{aligned}
    \widehat{R}(i) = \left\{\begin{array}{cc}
        \frac{R\left(i\right)}{|\sum_{j=-N}^{-1} R\left(j\right)|} = \frac{w\left(-i\right)}{\sum_{j=1}^{N} w\left(j\right)} &  i = -N, -N+1,\dots, -1 \\
        -2 \quad & i = 0 \\
        \frac{R\left(i\right)}{|\sum_{j=1}^{N} R\left(j\right)|} = \frac{w\left(i\right)}{\sum_{j=1}^{N} w\left(j\right)} & \quad i = 1, 2, \dots, N-1, N
    \end{array}\right.
    \end{aligned}
    \label{eq:DTGD_operators_weights_2nd_1D}
\end{equation}
where $w$ is the kernel function of TGD. Finally, the discrete TGD operators $\widehat{T}$ and $\widehat{R}$ with $2N+1$ elements have the following form.
\begin{equation}
    \setcounter{MaxMatrixCols}{30}
    \begin{aligned}
        \widehat{T} &= \begin{bmatrix} \widehat{w}_{N} & \widehat{w}_{N-1} & \dots & \widehat{w}_{1} & 0 & -\widehat{w}_{1} & \dots & -\widehat{w}_{N-1} & -\widehat{w}_{N} \end{bmatrix} \\
        \widehat{R} &= \begin{bmatrix} \widehat{w}_{N} & \widehat{w}_{N-1} & \dots & \widehat{w}_{1} & -2 & \widehat{w}_{1} & \dots & \widehat{w}_{N-1} & \widehat{w}_{N} \end{bmatrix}
    \end{aligned}
    \label{eq:DTGD_operators_1D}
\end{equation}
where:
\begin{equation}
    \begin{aligned}
        \widehat{w}_{i} = \frac{w\left(i\right)}{\sum_{j=1}^{N} w\left(j\right)}, \quad i = 1, 2, \dots, N-1, N
    \end{aligned}
\end{equation}
and:
\begin{equation}
    \begin{aligned}
        &\sum_{i=1}^{N} \widehat{w}_{i} = 1 \\
        &\widehat{w}_{i} > \widehat{w}_{j}, \quad 0 < i < j \leq N
    \end{aligned}
\end{equation}
In practical applications, it is generally acceptable to ignore the normalization factors ($C_1$ and $C_2$) and compute the discrete TGD through convolution after obtaining the operators ($\widehat{T}$ and $\widehat{R}$). 

Here, we provide several examples of discrete TGD operators. First of all, let us consider a special case where the kernel size of TGD is exactly equal to the discrete step, i.e., $N = 1$. According to Formula~\eqref{eq:DTGD_operators_weights_1st_1D},~\eqref{eq:DTGD_operators_weights_2nd_1D} and~\eqref{eq:DTGD_operators_1D}, regardless of the kernel function used, only a unique operator is obtained:
\begin{equation}
    \begin{aligned}
        \widehat{T} = \begin{bmatrix} 1 & 0 & -1 \end{bmatrix} \\
        \widehat{R} = \begin{bmatrix} 1 & -2 & 1 \end{bmatrix}
    \end{aligned}
\end{equation}
These are the traditional first-order and second-order central difference. However, the constraint \textit{“$N$ is significantly larger than $1$”} is not met and the \textit{Monotonic Constraint} is broken. This means that the central difference is not discrete TGD and does not have the properties of TGD operators.

Actually, one of the contributions of TGD is to provide larger-size operators constructed using specialized kernel functions. We showcase an example of 1st- and 2nd-order discrete TGD operators constructed with Gaussian kernel function, and kernel size $N=5$ and $7$ respectively, as seen in Formula ~\eqref{eq:DTGD_Example 5} and~\eqref{eq:DTGD_Example 7}. These TGD operators can be conveniently used as convolution kernels for discrete TGD calculation.

\begin{equation}
    \setcounter{MaxMatrixCols}{20}
    \begin{aligned}
%        \widehat{T}_{\text{line}} &= \frac{1}{35}\begin{bmatrix} 1&4&7&10&13&0&-13&-10&-7&-4&-1 \end{bmatrix} \\
%        \widehat{T}_{\text{exp}} &= \frac{1}{42}\begin{bmatrix} 1 & 2 & 4 & 11 & 24 & 0 & -24 & -11 & -4 & -2 & -1\end{bmatrix} \\
        \widehat{T}_{\text{Gaussian}} &= \frac{1}{62}\begin{bmatrix} 1 & 4 & 12 & 25 & 40 & 0 & -40 & -25 & -12 & -4 & -1 \end{bmatrix} \\
%        \widehat{R}_{\text{line}} &= \frac{1}{35}\begin{bmatrix} 1 & 4 & 7 & 10 & 13 & -70 & 13 & 10 & 7 & 4 & 1 \end{bmatrix} \\
%        \widehat{R}_{\text{exp}} &= \frac{1}{42}\begin{bmatrix} 1 & 2 & 4 & 11 & 24 & -84 & 24 & 11 & 4 & 2 & 1 \end{bmatrix} \\
        \widehat{R}_{\text{Gaussian}} &= \frac{1}{62}\begin{bmatrix} 1 & 4 & 12 & 20 & 25 & -124 & 25 & 20 & 12 & 4 & 1 \end{bmatrix} \\
    \end{aligned}
    \label{eq:DTGD_Example 5}
\end{equation}

\begin{equation}
    \setcounter{MaxMatrixCols}{35}
    \begin{aligned}
%        \widehat{T}_{\text{line}} &= \frac{1}{91}\begin{bmatrix} 1&5&9&13&17&21&25&0&-25&-21&-17&-13&-9&-5&-1 \end{bmatrix} \\
%        \widehat{T}_{\text{exp}} &= \frac{1}{42}\begin{bmatrix} 1&2&4&8&16&33&67&0&-67&-33&-16&-8&-4&-2&-1\end{bmatrix} \\
        \widehat{T}_{\text{Gaussian}} &= \frac{1}{131}\!\begin{bmatrix} 1&3&8&18&32&50&64&0&-64&-50&-32&-18&-8&-3&-1 \end{bmatrix} \\
%       \widehat{R}_{\text{line}} &= \frac{1}{91}\begin{bmatrix} 1&5&9&13&17&21&25&-182&25&21&17&13&9&5&1 \end{bmatrix} \\
%        \widehat{R}_{\text{exp}} &= \frac{1}{131}\begin{bmatrix} 1&2&4&8&16&33&67&-262&67&33&16&8&4&2&1 \end{bmatrix} \\
        \widehat{R}_{\text{Gaussian}} &= \frac{1}{178}\begin{bmatrix} 1&3&8&18&32&50&64&-356&64&50&32&18&8&3&1 \end{bmatrix} \\
    \end{aligned}
    \label{eq:DTGD_Example 7}
\end{equation}

We have extended the general difference of a function to the general difference of a sequence, and unified both the computation in a convolution framework, in which the convolution kernel is continuous or discrete relatively. It is feasible to use TGD to characterize the variation of a sequence without knowing its expression of the original function.
%TG\textbf{D} unifies the concepts of both traditional \emph{Differential} of continuous functions and \emph{Difference} of discrete functions, and we do not distinguish between the two in the following. When discrete TGD is mentioned, it refers to Tao General Difference.

\subsection{Directional General Difference of High Dimensional Array}

%Now let us direct our attention toward the high-dimensional scenario. 
Similar to the one-dimensional case, we transform the multidimensional array into a continuous function by representing it as a step function, where each constant piece corresponds to a discrete value point. For a $m$-dimensional array $X$, the corresponding step function can be expressed as follows: 
\begin{equation}
    \begin{aligned}
        \hat{f}(x_1, x_2, \dots, x_m) \triangleq X(n_1, n_2, \dots, n_m) \quad\quad x_i \in (n_i-\frac{1}{2}, n_i+\frac{1}{2}] 
    \end{aligned}
    \label{eq:discrete_xn_highD}
\end{equation}
where $n_i \in \mathbb{Z}$. We express the TGD of the array as the TGD (Formula~\eqref{function:realDirectionDerivative} and~\eqref{function:realDirectionDerivative2}) of the corresponding step function (Formula~\eqref{eq:discrete_xn_highD}) at sampling locations: 
%(Figure~\ref{fig:2DDTGD-experiments}):
\begin{equation}
    \begin{aligned}
        \frac{\partial_{\TGD} X}{\partial \mathbf{v}}\left(n_1, n_2, \dots, n_m\right) &\triangleq \frac{\partial_{\TGD} \hat{f}}{\partial \mathbf{v}}\left(n_1, n_2, \dots, n_m\right) \\
        \frac{\partial_{\TGD}^2 X}{\partial \mathbf{v}^2}\left(n_1, n_2, \dots, n_m\right) &\triangleq \frac{\partial_{\TGD}^2 \hat{f}}{\partial \mathbf{v}^2}\left(n_1, n_2, \dots, n_m\right)
    \end{aligned}
    \label{eq:discrete_TGD_highD}
\end{equation}
It is easily obtained that the TGD of a high-dimensional array remains equal to the convolution of the array with high-dimensional discrete TGD operators, obeying \emph{Difference-by-Convolution}:
\begin{equation}
    \begin{aligned}
        \widehat{T}\left(i_1, i_2, \dots, i_m\right) &= \int_{i_m - \frac{1}{2}}^{i_m + \frac{1}{2}} \ldots \int_{i_1 - \frac{1}{2}}^{i_1 + \frac{1}{2}} \widetilde{T}\left(x_1, x_2, \ldots, x_m\right) \mathrm{d} x_1 \ldots \mathrm{d} x_m \\
        \widehat{R}\left(i_1, i_2, \dots, i_m\right) &= \int_{i_m - \frac{1}{2}}^{i_m + \frac{1}{2}} \ldots \int_{i_1 - \frac{1}{2}}^{i_1 + \frac{1}{2}} \widetilde{R}\left(x_1, x_2, \ldots, x_m\right) \mathrm{d} x_1 \ldots \mathrm{d} x_m \\
        i_j &= -N, -N+1, \dots, N-1, N \quad j = 1,2,\dots,m
    \end{aligned}
    \label{eq:DTGD_operators_highD}
\end{equation}
where $\widetilde{T}$ and $\widetilde{R}$ are multivariate 1st- and 2nd-order TGD operators (See Section~\ref{Sec:TGDofMultivariateFunction}), and $N$ is the \emph{kernel size}. And each discrete TGD operator is composed of $(2N+1)^m$ elements. When the kernel size is substantially larger than the sampling step, the interval integral of $\widetilde{T}$ and $\widetilde{R}$ is approximately equal to the discrete sampling of $\widetilde{T}$ and $\widetilde{R}$. Therefore, in practice, the discrete operator weights can be directly obtained by sampling the multivariate TGD operators without the need for intricate interval integration, as long as the sampled values are meaningful.
\begin{equation}
    \begin{aligned}
        \widehat{T}\left(i_1, i_2, \dots, i_m\right) &= \widetilde{T}\left(i_1, i_2, \dots, i_m\right) \\
        \widehat{R}\left(i_1, i_2, \dots, i_m\right) &= \widetilde{R}\left(i_1, i_2, \dots, i_m\right) \\
        \widehat{\text{LoT}}\left(i_1, i_2, \dots, i_m\right) &= \text{LoT}\left(i_1, i_2, \dots, i_m\right)
    \end{aligned}
\end{equation}
These operators are normalized to the sum of $0$, which is derived directly by Formula~\eqref{eq:TGD_highD_0} and~\eqref{eq:TGD_highD_0_2}. This guarantees that the TGD value remains constant at $0$ for constant arrays.
\begin{equation}
    \begin{aligned}
        &\sum_{i_m = -N}^{N}\ldots\sum_{i_1 = -N}^{N}\widehat{T}\left(i_1, i_2, \dots, i_m\right) = 0 \\
        &\sum_{i_m = -N}^{N}\ldots\sum_{i_1 = -N}^{N}\widehat{R}\left(i_1, i_2, \dots, i_m\right) = 0 \\
        &\sum_{i_m = -N}^{N}\ldots\sum_{i_1 = -N}^{N}\widehat{\text{LoT}}\left(i_1, i_2, \dots, i_m\right) = 0
    \end{aligned}
\end{equation}

As discussed in Section~\ref{subsec:Orthogonal construction method}, the orthogonal constructed operators can decouple dimensions to expedite calculations. Without loss of generality, we demonstrate the 2D discrete TGD operators obtained using orthogonal TGD operators, in which the $x$-axis direction is the differential direction.
\begin{equation}
    \setcounter{MaxMatrixCols}{30}
    \begin{aligned}
        \widehat{T}_x &= \begin{bmatrix} 
            {\widehat{w}_{-N,-N}} & {\dots} & {\widehat{w}_{-N, -1}} & 0 & {-\widehat{w}_{-N, 1}} & {\dots} & {-\widehat{w}_{-N, N}} \\
            {\widehat{w}_{-N+1,-N}} & {\dots} & {\widehat{w}_{-N+1, -1}} & 0 & {-\widehat{w}_{-N+1, 1}} & {\dots} & {-\widehat{w}_{-N+1, N}} \\
            {\ldots} & {\ldots} & {\ldots} & \ldots & {\dots} & {\dots} & {\dots} \\
            {\widehat{w}_{-1,-N}} & {\dots} & {\widehat{w}_{-1, -1}} & 0 & {-\widehat{w}_{-1, 1}} & {\dots} & {-\widehat{w}_{-1, N}} \\
            {\widehat{w}_{0,-N}} & {\dots} & {\widehat{w}_{0, -1}} & 0 & {-\widehat{w}_{0, 1}} & {\dots} & {-\widehat{w}_{0, N}} \\
            {\widehat{w}_{1,-N}} & {\dots} & {\widehat{w}_{1, -1}} & 0 & {-\widehat{w}_{1, 1}} & {\dots} & {-\widehat{w}_{1, N}} \\
            {\ldots} & {\ldots} & {\ldots} & \ldots & {\dots} & {\dots} & {\dots} \\
            {\widehat{w}_{N-1,-N}} & {\dots} & {\widehat{w}_{N-1, -1}} & 0 & {-\widehat{w}_{N-1, 1}} & {\dots} & {-\widehat{w}_{N-1, N}} \\
            {\widehat{w}_{N,-N}} & {\dots} & {\widehat{w}_{N, -1}} & 0 & {-\widehat{w}_{N, 1}} & {\dots} & {-\widehat{w}_{N, N}} 
        \end{bmatrix} \\
    \end{aligned}
    \label{eq:DTGD_Operator2D_1st}
\end{equation}

\begin{equation}
    \setcounter{MaxMatrixCols}{30}
    \begin{aligned}
        \widehat{R}_x &= \begin{bmatrix} 
            {\widehat{w}_{-N,-N}} & {\dots} & {\widehat{w}_{-N, -1}} & {-\widehat{w}_{-N, 0}} & {\widehat{w}_{-N, 1}} & {\dots} & {\widehat{w}_{-N, N}} \\
            {\widehat{w}_{-N+1,-N}} & {\dots} & {\widehat{w}_{-N+1, -1}} & {-\widehat{w}_{-N+1, 0}} & {\widehat{w}_{-N+1, 1}} & {\dots} & {\widehat{w}_{-N+1, N}} \\
            {\ldots} & {\ldots} & {\ldots} & {\dots} & {\dots} & {\dots} & {\dots} \\
            {\widehat{w}_{-1,-N}} & {\dots} & {\widehat{w}_{-1, -1}} & {-\widehat{w}_{-1, 0}} & {\widehat{w}_{-1, 1}} & {\dots} & {\widehat{w}_{-1, N}} \\
            {\widehat{w}_{0,-N}} & {\dots} & {\widehat{w}_{0, -1}} & {-\widehat{w}_{0, 0}} & {\widehat{w}_{0, 1}} & {\dots} & {\widehat{w}_{0, N}} \\
            {\widehat{w}_{1,-N}} & {\dots} & {\widehat{w}_{1, -1}} & {-\widehat{w}_{1, 0}} & {\widehat{w}_{1, 1}} & {\dots} & {\widehat{w}_{1, N}} \\
            {\ldots} & {\ldots} & {\ldots} & {\dots} & {\dots} & {\dots} & {\dots} \\
            {\widehat{w}_{N-1,-N}} & {\dots} & {\widehat{w}_{N-1, -1}} & {-\widehat{w}_{N-1, 0}} & {\widehat{w}_{N-1, 1}} & {\dots} & {\widehat{w}_{N-1, N}} \\
            {\widehat{w}_{N,-N}} & {\dots} & {\widehat{w}_{N, -1}} & {-\widehat{w}_{N, 0}} & {\widehat{w}_{N, 1}} & {\dots} & {\widehat{w}_{N, N}} 
        \end{bmatrix}
    \end{aligned}
    \label{eq:DTGD_Operator2D_2nd}
\end{equation}
where:
\begin{equation}
    \begin{aligned}
        & \widehat{w}_{-i,-j} = \widehat{w}_{-i,j} = \widehat{w}_{i,-j} = \widehat{w}_{i,j} > 0 \\
        & \widehat{w}_{i,|j|} > \widehat{w}_{i,|k|} \quad 0 < |j| < |k| \leq N \\
        & \widehat{w}_{|i|,j} > \widehat{w}_{|k|,j} \quad 0 < |i| < |k| \leq N \\
        & \widehat{w}_{i, 0} = \sum_{j=-N}^{-1} \widehat{R}_x(i, j) + \sum_{j=1}^{N} \widehat{R}_x(i, j) \\
    \end{aligned}
    \label{eq:DTGD_Operator2D_Constraint}
\end{equation}

And the unified form of the 2D discrete LoT operators are:
\begin{equation}
   \setcounter{MaxMatrixCols}{30}
   \begin{aligned}
       \widehat{\text{LoT}} &= \begin{bmatrix} 
           {\widehat{w}_{-N,-N}} & {\dots} & {\widehat{w}_{-N, -1}} & {\widehat{w}_{-N, 0}} & {\widehat{w}_{-N, 1}} & {\dots} & {\widehat{w}_{-N, N}} \\
           {\widehat{w}_{-N+1,-N}} & {\dots} & {\widehat{w}_{-N+1, -1}} & {\widehat{w}_{-N+1, 0}} & {\widehat{w}_{-N+1, 1}} & {\dots} & {\widehat{w}_{-N+1, N}} \\
           {\ldots} & {\ldots} & {\ldots} & {\dots} & {\dots} & {\dots} & {\dots} \\
           {\widehat{w}_{-1,-N}} & {\dots} & {\widehat{w}_{-1, -1}} & {\widehat{w}_{-1, 0}} & {\widehat{w}_{-1, 1}} & {\dots} & {\widehat{w}_{-1, N}} \\
           {\widehat{w}_{0,-N}} & {\dots} & {\widehat{w}_{0, -1}} & {-\widehat{w}_{0, 0}} & {\widehat{w}_{0, 1}} & {\dots} & {\widehat{w}_{0, N}} \\
           {\widehat{w}_{1,-N}} & {\dots} & {\widehat{w}_{1, -1}} & {\widehat{w}_{1, 0}} & {\widehat{w}_{1, 1}} & {\dots} & {\widehat{w}_{1, N}} \\
           {\ldots} & {\ldots} & {\ldots} & {\dots} & {\dots} & {\dots} & {\dots} \\
           {\widehat{w}_{N-1,-N}} & {\dots} & {\widehat{w}_{N-1, -1}} & {\widehat{w}_{N-1, 0}} & {\widehat{w}_{N-1, 1}} & {\dots} & {\widehat{w}_{N-1, N}} \\
           {\widehat{w}_{N,-N}} & {\dots} & {\widehat{w}_{N, -1}} & {\widehat{w}_{N, 0}} & {\widehat{w}_{N, 1}} & {\dots} & {\widehat{w}_{N, N}} 
       \end{bmatrix}
   \end{aligned}
   \label{eq:DTGD_Operator2D_LoT}
\end{equation}
where:
\begin{equation}
   \begin{aligned}
       & \widehat{w}_{-i,-j} = \widehat{w}_{-i,j} = \widehat{w}_{i,-j} = \widehat{w}_{i,j} > 0 \\
       & \widehat{w}_{i,|j|} > \widehat{w}_{i,|k|} \quad 0 < |j| < |k| \leq N \\
       & \widehat{w}_{|i|,j} > \widehat{w}_{|k|,j} \quad 0 < |i| < |k| \leq N \\
       & \sum_{i = -N}^{N}\sum_{j = -N}^{N}\widehat{\text{LoT}}(i, j) = 0 \\
   \end{aligned}
   \label{eq:DTGD_LoT_Constraint}
\end{equation}

In addition, by choosing different multivariate TGD operators, high-dimensional discrete arrays can be calculated for directional TGD, partial TGD, and Laplace TGD. Generally, the classic discrete differential operators, such as the Sobel operator~\cite{DBLP:books/lib/DudaH73}, Sobel–Feldman operator~\cite{DBLP:conf/ijcai/FeldmanFFGPST69}, Scharr operator~\cite{DBLP:journals/ai/Rosenfeld00}, and Laplacian operator~\cite{acharya2005image} in signal processing, can be derived from the Formula~\eqref{eq:DTGD_Operator2D_1st},~\eqref{eq:DTGD_Operator2D_Constraint} for $N=1$. However, these operators are not discrete TGD operators, and lack the properties of TGD operators such as noise suppression.% (Figure~\ref{fig:example-tsinghuaResult1}).

Here, we present examples of 2D discrete TGD operators orthogonal constructed with the Gaussian kernel function and kernel size $N=6$, including 1st- and 2nd-order discrete TGD operators ($\widehat{T}_x$ and $\widehat{R}_x$) in the $x$-axis direction. We obtain them by sampling and normalizing the orthogonal TGD operators, respectively. Additionally, we provide two examples of discrete TGD operators oriented 45 degrees from the $x$-axis ($\widehat{T}_{45'}$ and $\widehat{R}_{45'}$). An anisotropic discrete LoT operator ($\widehat{\text{LoT}}$) is also provided. The operator values can be found in Formula~\eqref{eq:DTGD_2D_values1} to~\eqref{eq:DTGD_2D_LoT}. By adjusting the kernel function and its parameters, a wide variety of discrete TGD operators can be constructed to satisfy varying task requirements.

More intuitively, Figure~\ref{fig:DTGD_2D_examples} presents a visual representation of the above four operators, in which red and blue respectively indicate positive and negative values. The darker the color, the larger the absolute value. Figure~\ref{fig:DTGD_3D_examples} provides a schematic diagram of 3D discrete TGD operators with a size of $15 \times 15 \times 15$ ($N=7$), which are used for partial TGD calculation when the $z$-variable is involved. These discrete TGD operators can be used as convolution kernels for discrete TGD calculation, as well as serve downstream tasks including video or image sequence gradient calculations in computer vision. And the excellent performance of discrete TGD operators will be verified in subsequent experiments.

\begin{equation}
    \setcounter{MaxMatrixCols}{30}
    \begin{aligned}
        \widehat{T}_x \!&=\! \small{\setlength{\arraycolsep}{2.5pt}
        \begin{bmatrix} 
            1 & 3 & 7 & 16 & 27 & 37 & 0 & -37 & -27 & -16 & -7 & -3 & -1 \\
            3 & 11 & 28 & 58 & 98 & 133 & 0 & -133 & -98 & -58 & -28 & -11 & -3 \\
            10 & 33 & 84 & 174 & 292 & 398 & 0 & -398 & -292 & -174 & -84 & -33 & -19 \\
            26 & 83 & 211 & 435 & 729 & 994 & 0 & -994 & -729 & -435 & -211 & -83 & -26 \\
            56 & 174 & 442 & 912 & 1529 & 2085 & 0 & -2085 & -1529 & -912 & -442 & -174 & -56 \\
            99 & 311 & 788 & 1625 & 2725 & 3715 & 0 & -3715 & -2725 & -1625 & -788 & -311 & -99 \\
            153 & 478 & 1211 & 2496 & 4184 & 5705 & 0 & -5705 & -4184 & -2496 & -1211 & -478 & -153 \\
            99 & 311 & 788 & 1625 & 2725 & 3715 & 0 & -3715 & -2725 & -1625 & -788 & -311 & -99 \\
            56 & 174 & 442 & 912 & 1529 & 2085 & 0 & -2085 & -1529 & -912 & -442 & -174 & -56 \\
            26 & 83 & 211 & 435 & 729 & 994 & 0 & -994 & -729 & -435 & -211 & -83 & -26 \\
            10 & 33 & 84 & 174 & 292 & 398 & 0 & -398 & -292 & -174 & -84 & -33 & -19 \\
            3 & 11 & 28 & 58 & 98 & 133 & 0 & -133 & -98 & -58 & -28 & -11 & -3 \\
            1 & 3 & 7 & 16 & 27 & 37 & 0 & -37 & -27 & -16 & -7 & -3 & -1 \\
        \end{bmatrix}}
    \end{aligned}
    \label{eq:DTGD_2D_values1}
\end{equation}
\begin{equation}
    \setcounter{MaxMatrixCols}{30}
    \begin{aligned}
        \widehat{R}_x \!&=\! \small{\setlength{\arraycolsep}{2.5pt}
        \begin{bmatrix} 
            1 & 3 & 7 & 16 & 27 & 37 & -182 & 37 & 27 & 16 & 7 & 3 & 1 \\
            3 & 11 & 28 & 58 & 98 & 133 & -662 & 133 & 98 & 58 & 28 & 11 & 3 \\
            10 & 33 & 84 & 174 & 292 & 398 & -1982 & 398 & 292 & 174 & 84 & 33 & 19 \\
            26 & 83 & 211 & 435 & 729 & 994 & -4956 & 994 & 729 & 435 & 211 & 83 & 26 \\
            56 & 174 & 442 & 912 & 1529 & 2085 & -10396 & 2085 & 1529 & 912 & 442 & 174 & 56 \\
            99 & 311 & 788 & 1625 & 2725 & 3715 & -18526 & 3715 & 2725 & 1625 & 788 & 311 & 99 \\
            153 & 478 & 1211 & 2496 & 4184 & 5705 & -28454 & 5705 & 4184 & 2496 & 1211 & 478 & 153 \\
            99 & 311 & 788 & 1625 & 2725 & 3715 & -18526 & 3715 & 2725 & 1625 & 788 & 311 & 99 \\
            56 & 174 & 442 & 912 & 1529 & 2085 & -10396 & 2085 & 1529 & 912 & 442 & 174 & 56 \\
            26 & 83 & 211 & 435 & 729 & 994 & -4956 & 994 & 729 & 435 & 211 & 83 & 26 \\
            10 & 33 & 84 & 174 & 292 & 398 & -1982 & 398 & 292 & 174 & 84 & 33 & 19 \\
            3 & 11 & 28 & 58 & 98 & 133 & -662 & 133 & 98 & 58 & 28 & 11 & 3 \\
            1 & 3 & 7 & 16 & 27 & 37 & -182 & 37 & 27 & 16 & 7 & 3 & 1 \\
        \end{bmatrix}}
    \end{aligned}
\end{equation}
\begin{equation}
    \setcounter{MaxMatrixCols}{30}
    \begin{aligned}
        \widehat{T}_{45'} \!&=\! \small{\setlength{\arraycolsep}{2.5pt}
        \begin{bmatrix} 
            0 & -1 & -2 & -6 & -11 & -17 & -21 & -20 & -18 & -12 & -7 & -3 & -1 \\
            1 & 0 & -9 & -21 & -39 & -60 & -75 & -78 & -67 & -48 & -28 & -14 & -3 \\
            2 & 9 & 0 & -59 & -111 & -172 & -220 & -233 & -204 & -150 & -92 & -28 & -7 \\
            6 & 21 & 59 & 0 & -261 & -410 & -532 & -575 & -518 & -392 & -150 & -48 & -12 \\
            11 & 39 & 111 & 261 & 0 & -805 & -1069 & -1184 & -1101 & -518 & -204 & -67 & -18 \\
            17 & 60 & 172 & 410 & 805 & 0 & -1791 & -2047 & -1184 & -575 & -233 & -78 & -20 \\
            21 & 75 & 220 & 532 & 1069 & 1791 & 0 & -1791 & -1069 & -532 & -220 & -75 & -21 \\
            20 & 78 & 233 & 575 & 1184 & 2047 & 1791 & 0 & -805 & -410 & -172 & -60 & -17 \\
            18 & 67 & 204 & 518 & 1101 & 1184 & 1069 & 805 & 0 & -261 & -111 & -39 & -11 \\
            12 & 48 & 150 & 392 & 518 & 575 & 532 & 410 & 261 & 0 & -59 & -21 & -6 \\
            7 & 28 & 92 & 150 & 204 & 233 & 220 & 172 & 111 & 59 & 0 & -9 & -2 \\
            3 & 14 & 28 & 48 & 67 & 78 & 75 & 60 & 39 & 21 & 9 & 0 & -1 \\
            1 & 3 & 7 & 12 & 18 & 20 & 21 & 17 & 11 & 6 & 2 & 1 & 0 \\
        \end{bmatrix}}
    \end{aligned}
\end{equation}
\begin{equation}
    \setcounter{MaxMatrixCols}{30}
    \begin{aligned}
        \widehat{R}_{45'} \!&=\! \small{\setlength{\arraycolsep}{1pt}
        \begin{bmatrix} 
            -1 & 1 & 2 & 6 & 11 & 17 & 21 & 20 & 18 & 12 & 7 & 3 & 1 \\
            1 & -22 & 9 & 21 & 39 & 60 & 75 & 78 & 67 & 48 & 28 & 14 & 3 \\
            2 & 9 & -171 & 59 & 111 & 172 & 220 & 233 & 204 & 150 & 92 & 28 & 7 \\
            6 & 21 & 59 & -903 & 261 & 410 & 532 & 575 & 518 & 392 & 150 & 48 & 12 \\
            11 & 39 & 111 & 261 & -3318 & 805 & 1069 & 1184 & 1101 & 518 & 204 & 67 & 18 \\
            17 & 60 & 172 & 410 & 805 & -8656 & 1791 & 2047 & 1184 & 575 & 233 & 78 & 20 \\
            21 & 75 & 220 & 532 & 1069 & 1791 & -16500 & 1791 & 1069 & 532 & 220 & 75 & 21 \\
            20 & 78 & 233 & 575 & 1184 & 2047 & 1791 & -8656 & 805 & 410 & 172 & 60 & 17 \\
            18 & 67 & 204 & 518 & 1101 & 1184 & 1069 & 805 & -3318 & 261 & 111 & 39 & 11 \\
            12 & 48 & 150 & 392 & 518 & 575 & 532 & 410 & 261 & -903 & 59 & 21 & 6 \\
            7 & 28 & 92 & 150 & 204 & 233 & 220 & 172 & 111 & 59 & -171 & 9 & 2 \\
            3 & 14 & 28 & 48 & 67 & 78 & 75 & 60 & 39 & 21 & 9 & -22 & 1 \\
            1 & 3 & 7 & 12 & 18 & 20 & 21 & 17 & 11 & 6 & 2 & 1 & -1 \\
        \end{bmatrix}}
    \end{aligned}
    \label{eq:DTGD_2D_values5}
\end{equation}
\begin{equation}
   \setcounter{MaxMatrixCols}{30}
   \begin{aligned}
       \widehat{\text{LoT}} \!&=\! \small{\setlength{\arraycolsep}{2.5pt}
       \begin{bmatrix} 
           1 & 3 & 7 & 16 & 27 & 37 & 41 & 37 & 27 & 16 & 7 & 3 & 1 \\
           3 & 9 & 24 & 50 & 84 & 115 & 128 & 115 & 84 & 50 & 24 & 9 & 3 \\
           7 & 24 & 62 & 128 & 215 & 293 & 325 & 293 & 215 & 128 & 62 & 24 & 7 \\
           16 & 50 & 128 & 264 & 443 & 604 & 670 & 604 & 443 & 264 & 128 & 50 & 16 \\
           27 & 84 & 215 & 443 & 743 & 1013 & 1124 & 1013 & 743 & 443 & 215 & 84 & 27 \\
           37 & 115 & 293 & 604 & 1013 & 1382 & 1532 & 1382 & 1013 & 604 & 293 & 115 & 37 \\
           41 & 128 & 325 & 670 & 1124 & 1532 & -49596 & 1532 & 1124 & 670 & 325 & 128 & 41 \\
           37 & 115 & 293 & 604 & 1013 & 1382 & 1532 & 1382 & 1013 & 604 & 293 & 115 & 37 \\
           27 & 84 & 215 & 443 & 743 & 1013 & 1124 & 1013 & 743 & 443 & 215 & 84 & 27 \\
           16 & 50 & 128 & 264 & 443 & 604 & 670 & 604 & 443 & 264 & 128 & 50 & 16 \\
           7 & 24 & 62 & 128 & 215 & 293 & 325 & 293 & 215 & 128 & 62 & 24 & 7 \\
           3 & 9 & 24 & 50 & 84 & 115 & 128 & 115 & 84 & 50 & 24 & 9 & 3 \\
           1 & 3 & 7 & 16 & 27 & 37 & 41 & 37 & 27 & 16 & 7 & 3 & 1 \\
       \end{bmatrix}}
   \end{aligned}
   \label{eq:DTGD_2D_LoT}
\end{equation}

\begin{figure}[!htb]    	
    \centering    	
    \subfigure[$\widehat{T}_{x}$]{  			 
        \includegraphics[width=0.36\linewidth]{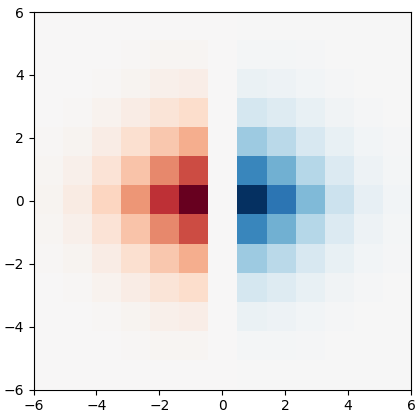}
    }\hspace{10mm}    	 
    \subfigure[$\widehat{R}_{x}$]{   	 		 
        \includegraphics[width=0.36\linewidth]{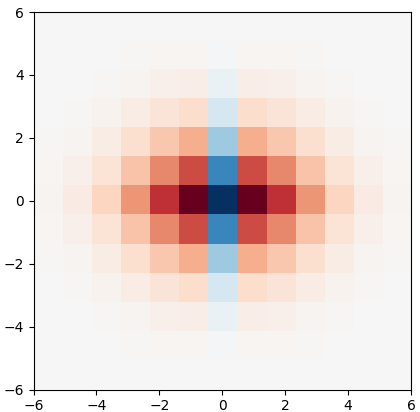}
    }    	 
    \subfigure[$\widehat{T}_{45'}$]{   	 		 
        \includegraphics[width=0.36\linewidth]{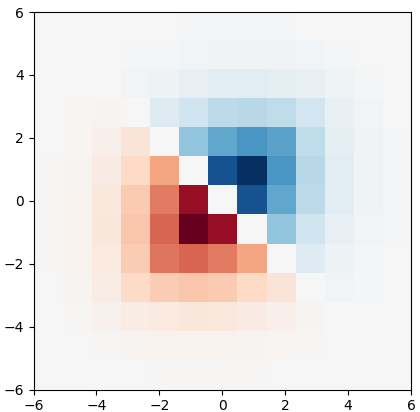}
    }\hspace{10mm}  
    \subfigure[$\widehat{R}_{45'}$]{  			 
        \includegraphics[width=0.36\linewidth]{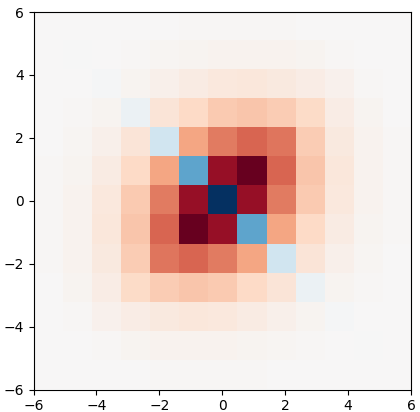}
    }   
    \subfigure[$\widehat{\text{LoT}}$]{  			 
        \includegraphics[width=0.36\linewidth]{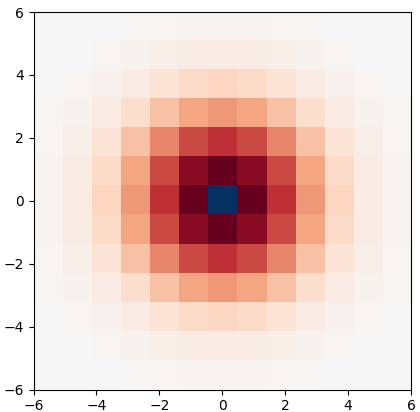}
    } 
    \caption{Visualization of a 2D discrete TGD operators. The operator values can be found in Formula~\eqref{eq:DTGD_2D_values1} to~\eqref{eq:DTGD_2D_LoT}.}  
    \label{fig:DTGD_2D_examples} %% label for entire figure
\end{figure}

\begin{figure}[!htb]    	
    \centering    	
    \subfigure[3D 1st-order discrete TGD operator $\widehat{T}_{z}$]{  			 
        \includegraphics[width=0.95\linewidth]{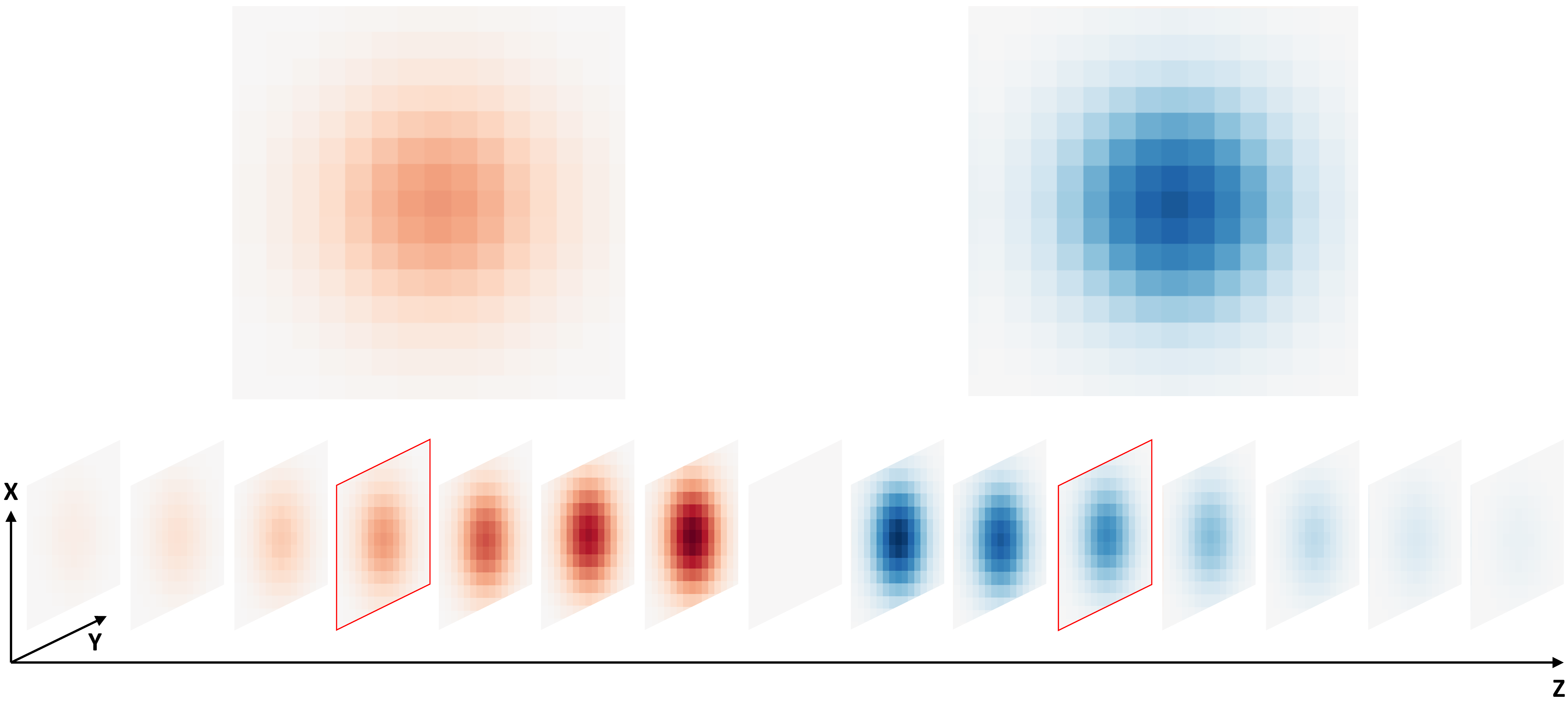}
    }    	 
    \subfigure[3D 2nd-order discrete TGD operator $\widehat{R}_{z}$]{   	 		 
        \includegraphics[width=0.95\linewidth]{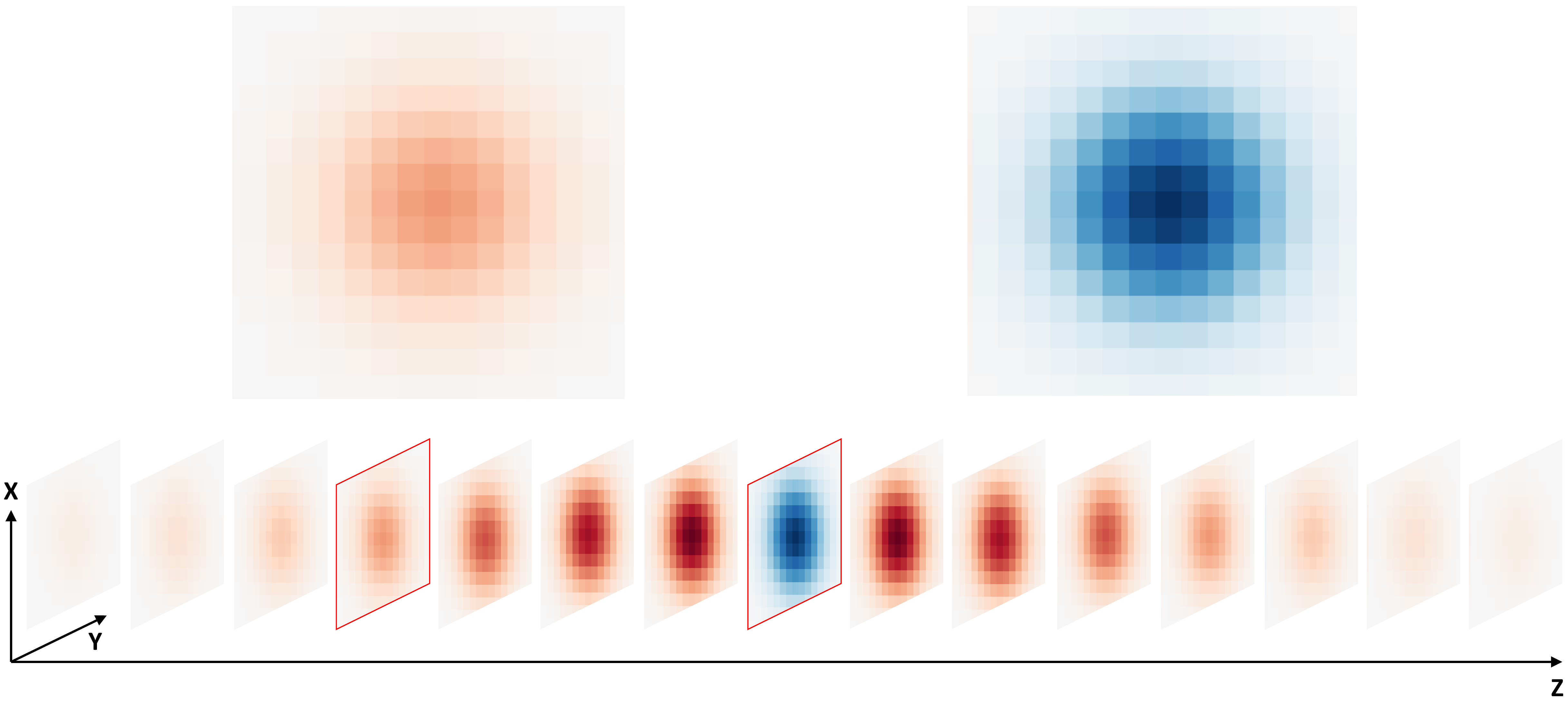}
    }	 
    \caption{Visualization of two 3D discrete TGD operators: 1st- and 2nd-order discrete TGD operators in the $z-axis$ direction. The operators are orthogonal constructed with the Gaussian kernel function. The red box marks the cross-section of the operator for the enlarged visualization.}  
    \label{fig:DTGD_3D_examples} %% label for entire figure
\end{figure}

\clearpage
%%%%%%%%%%%%%%%%%%%%%%%%%%%%%%%%%%%%%%%%%%%%%%%%%%%%%%%%%%%%%%%%%%%%%%%%%%
\newpage

\section{Examples and Analysis}
\subsection{Kernel Functions and Operators}

As we have mentioned in Section~\ref{subsec:1Dwindowderivativedefinitions}, various functions meet the three constraints for constructing TGD. We have used the Gaussian function as an example, and introduce four more functions. Meanwhile, we also demonstrate two counterexamples in this section. 

\subsubsection{Linear Kernel Function}
\label{subsubsec:LinearKernel}

%\textcolor{green}{The linear function does not satisfy C3 and needs to be stated!!!}

Linear function is the simplest kernel function (Formula~\eqref{eq:linear}), which is controlled by the parameters $k$ and $c$. The constraints are satisfied when $k>0$ and $c<0$. And $c \lesssim -kW$, which forces the function value converging to $0$ at the window boundary.
\begin{equation}%加*表示不对公式编号
  \begin{aligned}
    \text{Line}(x) = -kx - c, \quad x\in(0, W].
  \end{aligned}
  \label{eq:linear}
\end{equation}

By normalization we have:
\begin{equation*}
    \int_{0}^{W}(-kx - c)\d{x} = 1.
\end{equation*}

The second-order TGD operator constructed with the linear kernel function is:
%\begin{equation}%加*表示不对公式编号
%  \begin{aligned}
%    T_{\text{line}}(x) = \left\{\begin{array}{cc}
%        kx-c& \quad\quad x \in [-W,0)\\
%        0& \quad\quad x=0 \\
%        kx+c& \quad\quad x \in (0,W]
%    \end{array}\right.
%  \end{aligned}
%\end{equation}
\begin{equation}%加*表示不对公式编号
  \begin{aligned}
    R_{\text{line}}(x) = \left\{\begin{array}{cc}
        -k|x|-c& \quad\quad x \in [-W,0) \cup (0,W]\\
        -2\delta(0)\int_{0}^{W}(-kx-c)\d{x}& \quad\quad x=0 
    \end{array}\right.
  \end{aligned}
\end{equation}

By integrating over the second-order TGD operator, the relative smooth operator is polynomial (Formula~\eqref{eq:linearfilter}), where $p\gtrsim \frac{k}{6}W^3+\frac{c}{2}W^2$, which forces the filter function converging to 0 at the window boundary.
% \begin{equation}%加*表示不对公式编号
%   \begin{aligned}
%     S_{\text{line}}(x) = \left\{\begin{array}{cc}
%         \frac{k}{2}x^2-cx+p& \quad\quad x \in [-W,0)\\
%         p& \quad\quad x=0 \\
%         \frac{k}{2}x^2+cx+p& \quad\quad x \in (0,W]
%     \end{array}\right.
%   \end{aligned}
%   \label{eq:linearfilter}
% \end{equation}
\begin{equation}%加*表示不对公式编号
  \begin{aligned}
    S_{\text{line}}(x) = \left\{\begin{array}{cc}
        \frac{k}{6}x^3-\frac{c}{2}x^2+p& \quad\quad x \in [-W,0)\\
        p& \quad\quad x=0 \\
        -\frac{k}{6}x^3-\frac{c}{2}x^2+p& \quad\quad x \in (0,W]
    \end{array}\right.
  \end{aligned}
  \label{eq:linearfilter}
\end{equation}

Figure~\ref{LineConstructor} shows, from left to right, the smooth operator and the second-order TGD operators constructed on the linear kernel function.

\begin{figure}[htb]
  \centering    	
    \subfigure[Smooth operator]{   	 		 
        \includegraphics[width=0.31\linewidth]{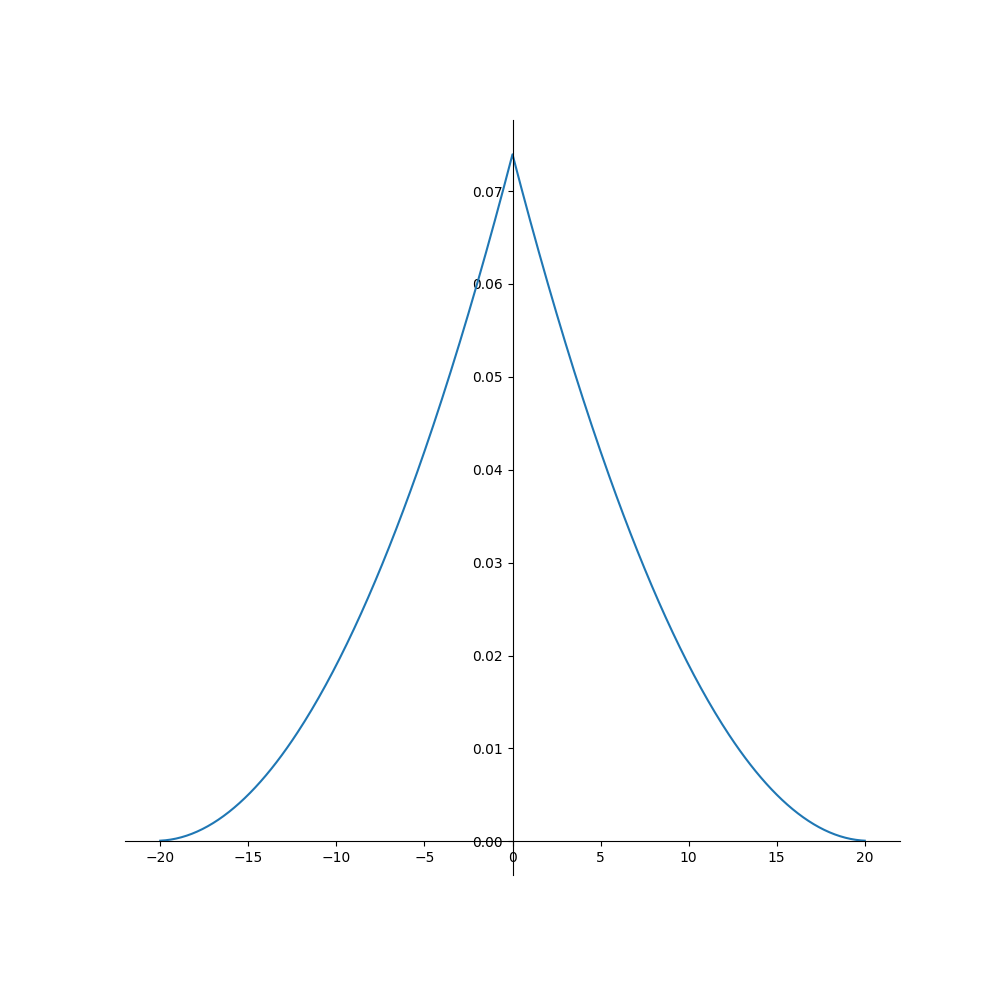}
    }
%    \subfigure[First-order TGD operator]{   	 		 
%        \includegraphics[width=0.31\linewidth]{}
%    }   
    \subfigure[Second-order TGD operator]{   	 		 
        \includegraphics[width=0.31\linewidth]{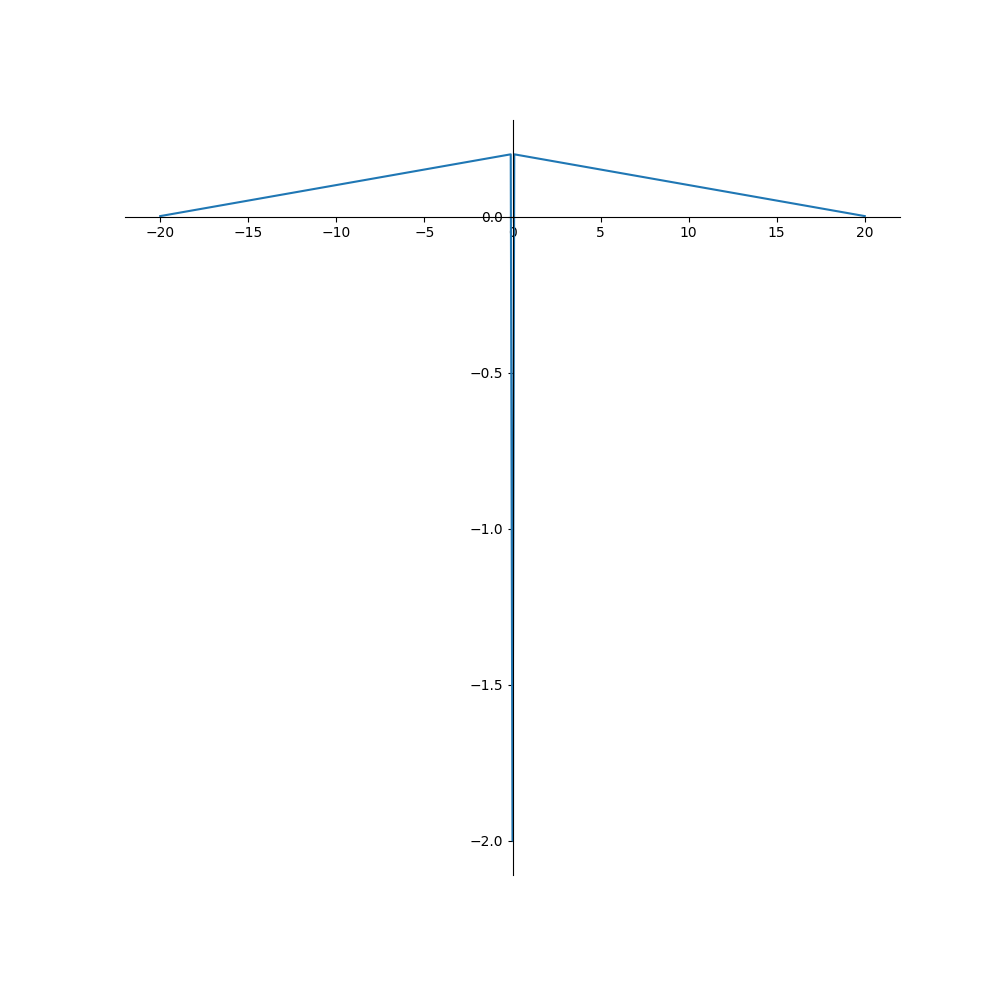}
    }
  \caption{
    The smooth operator and second-order TGD operator constructed based on the linear kernel function.
  }
  \label{LineConstructor}
\end{figure}

\subsubsection{Exponential Kernel Function}
\label{subsubsec:ExponentialKernel}
The exponential function can also be used as a kernel function, which is controlled by the parameters $k$ and $\delta$, where $k>0, \delta>0$ (Formula~\eqref{eq:exponential}). It has a good property that the exponential kernel function is the same family as its integral, which means the relative smooth operator is also exponential.
\begin{equation}%加*表示不对公式编号
  \begin{aligned}
    \text{Exp}(x) = ke^{-|x|/\delta^2}, \quad x\in(0, W].
  \end{aligned}
  \label{eq:exponential}
\end{equation}

Coefficient $k$ controls normalization, and by normalization we have:
\begin{equation*}
    \int_{0}^{W}\left(ke^{-x/\delta^2}\right)\d{x} = 1.
\end{equation*}

The first- and second-order TGD operators constructed with exponential function are:
\begin{equation}%加*表示不对公式编号
  \begin{aligned}
    T_{\text{exp}}(x) = \left\{\begin{array}{cc}
        ke^{-|x|/\delta^2}& \quad\quad x \in [-W,0)\\
        0& \quad\quad x=0 \\
        -ke^{-|x|/\delta^2}& \quad\quad x\in(0, W]
    \end{array}\right.
  \end{aligned}
\end{equation}

\begin{equation}%加*表示不对公式编号
  \begin{aligned}
    R_{\text{exp}}(x) = \left\{\begin{array}{cc}
        ke^{-|x|/\delta^2}& \quad\quad x \in [-W,0) \cup (0,W]\\
        -2\delta(0)\int_{0}^{W}( ke^{-x/\delta^2})\d{x}& \quad\quad x=0 
    \end{array}\right.
  \end{aligned}
\end{equation}

The integral of an exponential function is still an exponential function, which follows that the corresponding smooth operator is exponential, as follows, where $a, b, p$ are the parameters.
\begin{equation}%加*表示不对公式编号
  \begin{aligned}
    S_{\text{exp}}(x) = \left\{\begin{array}{cc}
        ae^{-b|x|}+p& \quad\quad x \in [-W,0)\\
        a+p& \quad\quad x=0 \\
        ae^{-b|x|}+p& \quad\quad x \in (0,W]
    \end{array}\right.
  \end{aligned}
\end{equation}

Figure~\ref{ExpConstructor} shows, from left to right, the smooth operator, the first- and second-order TGD operators constructed on the exponential kernel function.

\begin{figure}[htb]
  \subfigure[Smooth operator]{   	 		 
        \includegraphics[width=0.31\linewidth]{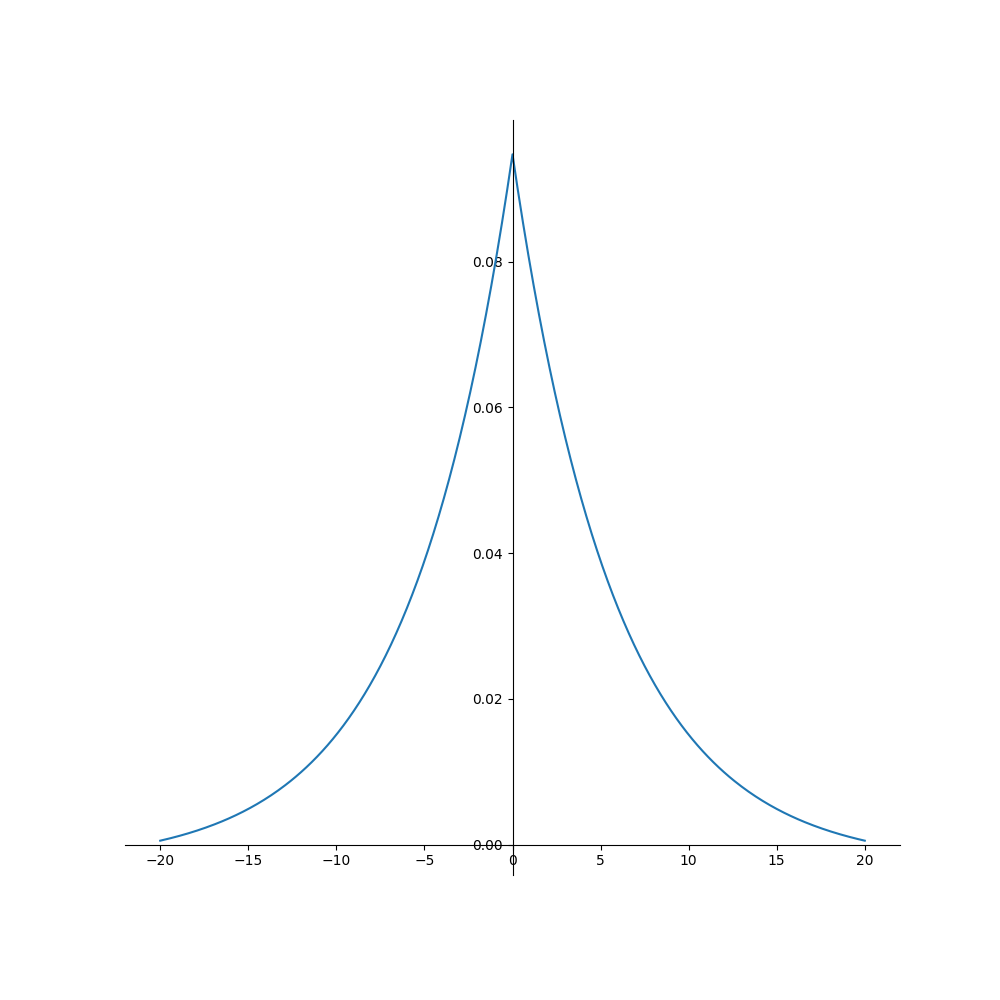}
    }
    \subfigure[First-order TGD operator]{   	 		 
        \includegraphics[width=0.31\linewidth]{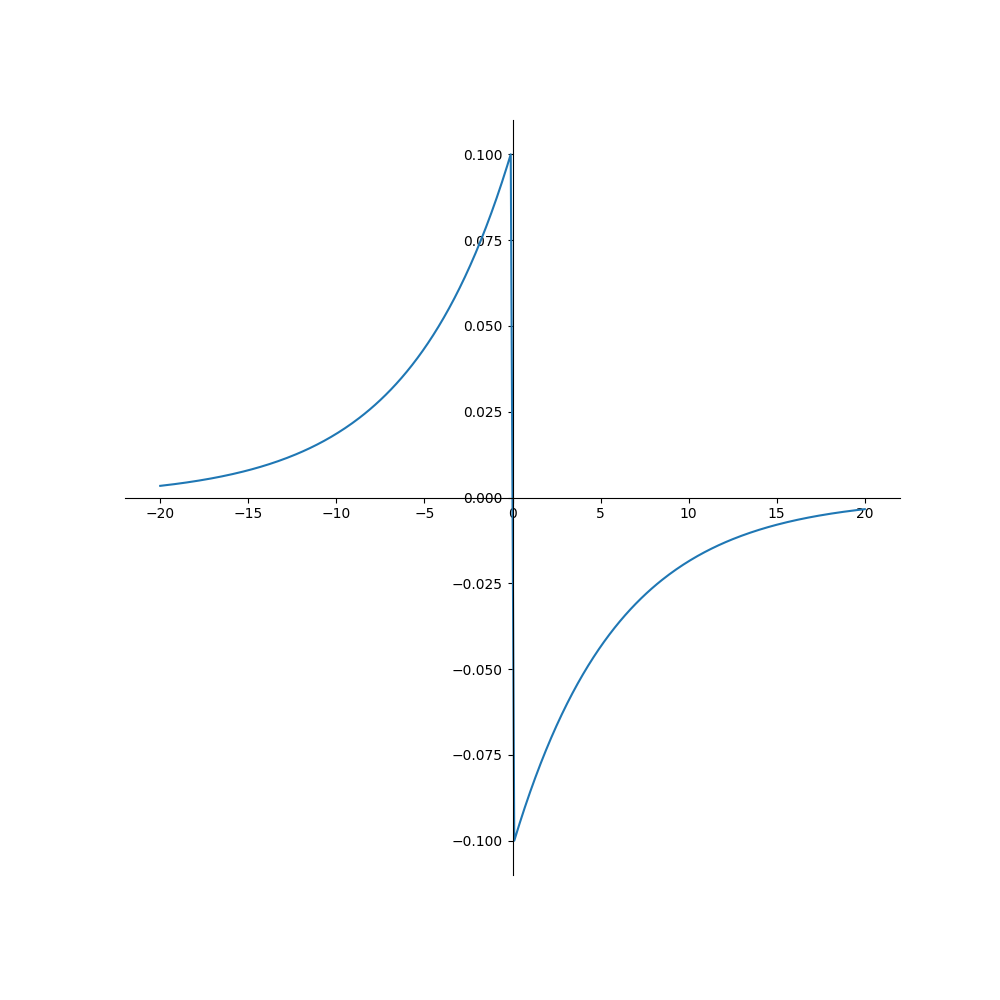}
    }   
    \subfigure[Second-order TGD operator]{   	 		 
        \includegraphics[width=0.31\linewidth]{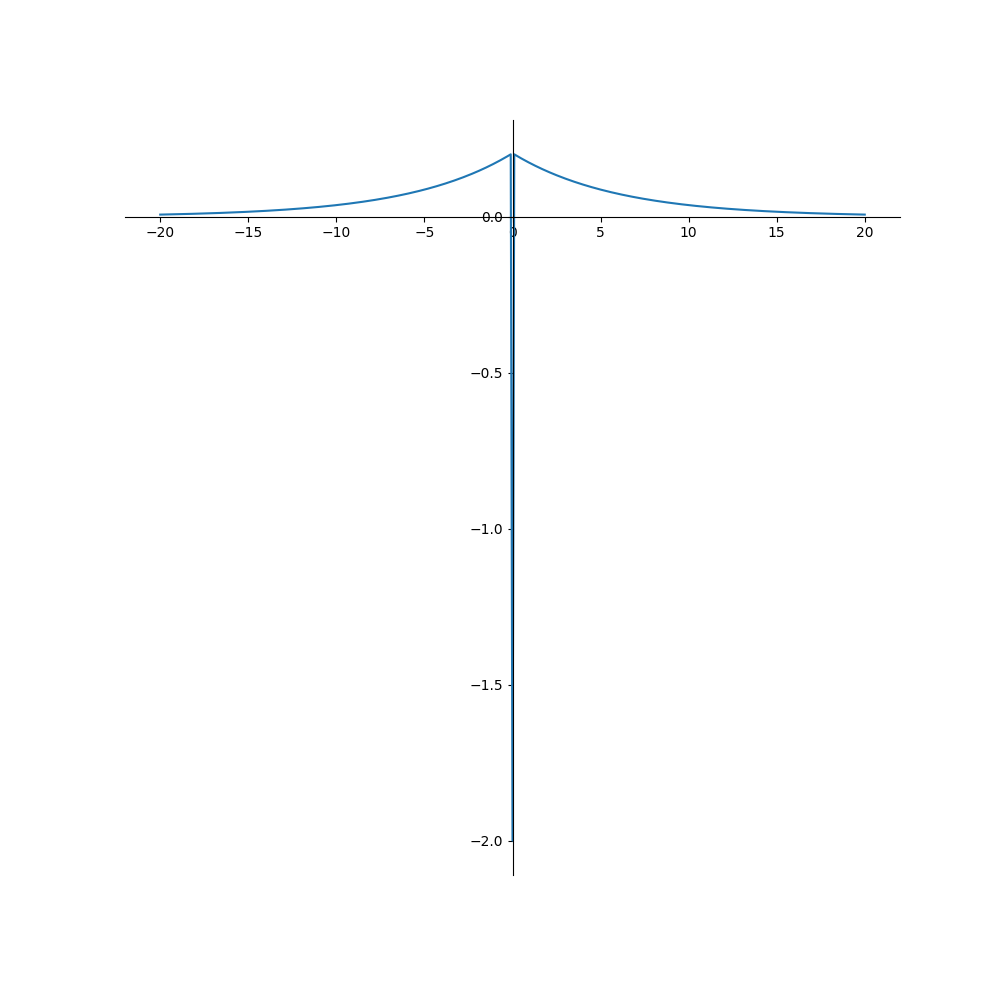}
    }
  \caption{
    The smooth operator, first- and second-order TGD operators constructed based on the exponential kernel function.
  }
  \label{ExpConstructor}
\end{figure}

\subsubsection{Landau Kernel Function}

In fact, many classic kernel functions can be used in TGD. Landau kernel function is a classic function used to prove Weierstrass approximation theorem~\cite{carmignani1977distributional}. In Landau function (Formula~\eqref{eq:Landau}), $c,\tau,n$ are the parameters, in which $\tau \gtrsim W$ is designed to ensure the function follows the constraints of TGD.
\begin{equation}%加*表示不对公式编号
  \begin{aligned}
    \text{Landau}(x) = c(1-x^2/\tau^2)^n, \quad x\in(0, W].
  \end{aligned}
  \label{eq:Landau}
\end{equation}

Coefficient $c$ controls normalization. And by normalization we have:
\begin{equation*}
    \int_{0}^{W}\left(c\left(1-x^2/\tau^2\right)^n\right)\d{x} = 1.
\end{equation*}

The first- and second-order TGD operators constructed with Landau kernel function are:
\begin{equation}%加*表示不对公式编号
  \begin{aligned}
    T_{\text{landau}}(x) = \left\{\begin{array}{cc}
        c(1-x^2/\tau^2)^n& \quad\quad x \in [-W,0)\\
        0& \quad\quad x=0 \\
        -c(1-x^2/\tau^2)^n& \quad\quad x\in(0, W]
    \end{array}\right.
  \end{aligned}
\end{equation}

\begin{equation}%加*表示不对公式编号
  \begin{aligned}
    R_{\text{landau}}(x) = \left\{\begin{array}{cc}
        c(1-x^2/\tau^2)^n& \quad\quad x \in [-W,0) \cup (0,W]\\
        -2\delta(0)\int_{0}^{W}(c(1-x^2/\tau^2)^n)\d{x}& \quad\quad x=0 
    \end{array}\right.
  \end{aligned}
\end{equation}

Although it is difficult to write expressions for the integration of Landau kernel, we can approximate the smooth operator by using numerical integration and symmetric transformation.
\begin{equation}%加*表示不对公式编号
  \begin{aligned}
    S_{\text{landau}}(x) = \left\{\begin{array}{cc}
        \int_{-W}^{x}(c(1-t^2/\tau^2)^n)\d{t} & \quad\quad x \in [-W,0]\\
        S_{\text{landau}}(-x)& \quad\quad x \in (0,W]
    \end{array}\right.
  \end{aligned}
\end{equation}

Figure~\ref{LandauConstructor} shows, from left to right, the smooth operator, the first- and second-order TGD operators constructed on the Landau kernel function.

\begin{figure}[htb]
  \subfigure[Smooth operator]{   	 		 
        \includegraphics[width=0.31\linewidth]{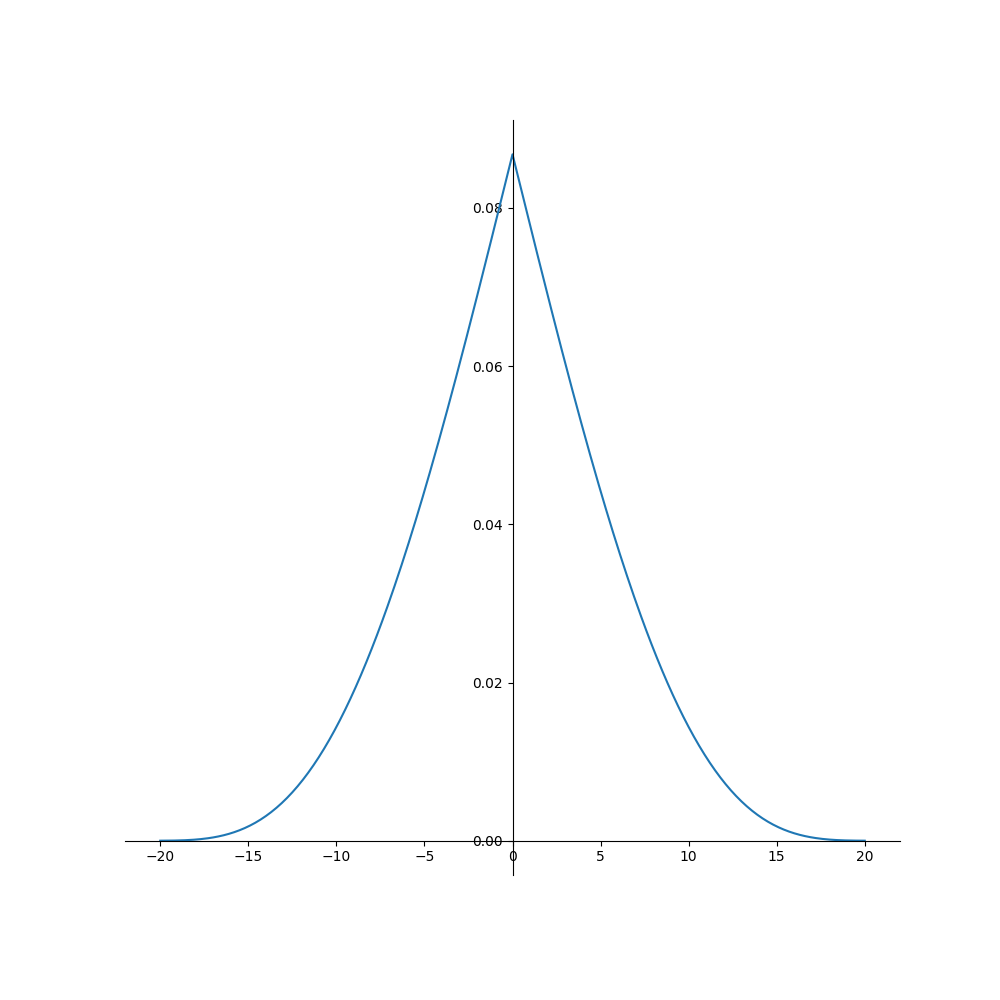}
    }
    \subfigure[First-order TGD operator]{   	 		 
        \includegraphics[width=0.31\linewidth]{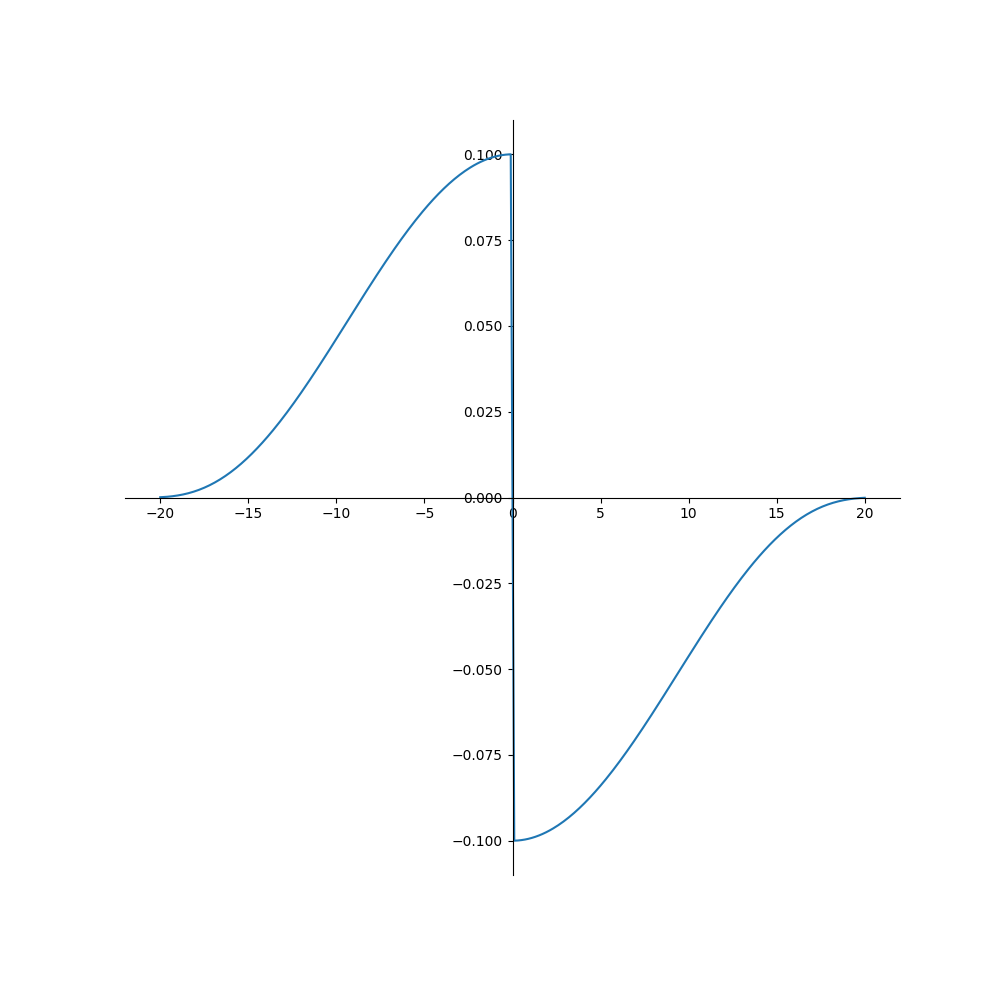}
    }   
    \subfigure[Second-order TGD operator]{   	 		 
        \includegraphics[width=0.31\linewidth]{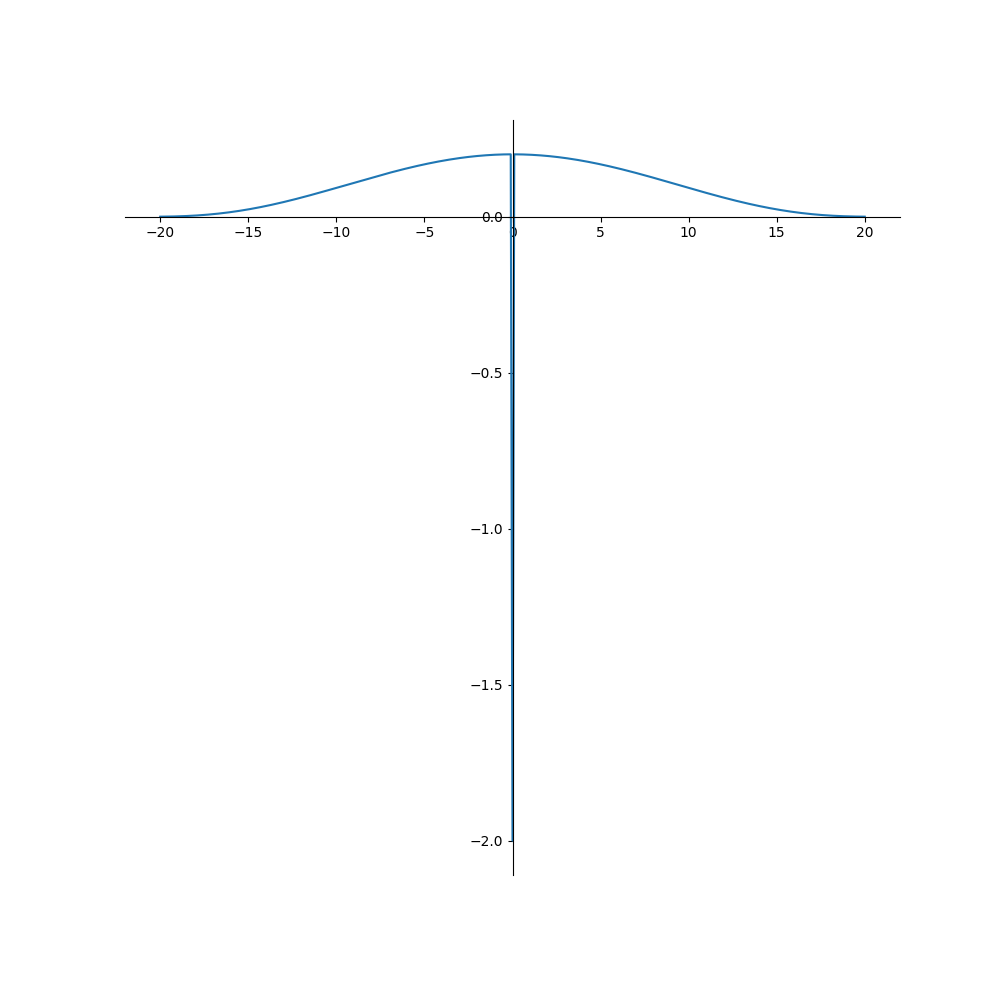}
    }
  \caption{
    The smooth operator, first- and second-order TGD operators constructed based on the Landau kernel function.
  }
  \label{LandauConstructor}
\end{figure}

\subsubsection{Weibull Kernel Function}
In addition to the aforementioned kernel functions, TGD can also use probability density functions as kernel functions, which are usually integrable and finite in energy. Besides Gaussian, we provide an additional example, the Weibull distribution function~\cite{hallinan1993review, rinne2008weibull}, with its parameters $k$ and $\lambda$. When $k\in(0,1]$, Weibull distribution functions satisfy the constraints and are used as the kernel function of TGD.
\begin{equation}%加*表示不对公式编号
  \begin{aligned}
    \text{Weibull}(x) = \frac{ck}{\lambda}\left(\frac{x}{\lambda}\right)^{k-1} e^{-\left(\frac{x}{\lambda}\right)^k}, \quad x\in(0, W], k \in (0,1].
  \end{aligned}
\end{equation}

Coefficient $c$ controls normalization. And by normalization we have:
\begin{equation*}
    \int_{0}^{W}\left(\frac{ck}{\lambda}\left(\frac{x}{\lambda}\right)^{k-1} e^{-\left(\frac{x}{\lambda}\right)^k}\right)\d{x} = 1.
\end{equation*}

The first- and second-order TGD operators constructed with Weibull distribution function are:
\begin{equation}%加*表示不对公式编号
  \begin{aligned}
    T_{\text{weibull}}(x) = \left\{\begin{array}{cc}
        \frac{ck}{\lambda}(-\frac{x}{\lambda})^{k-1} e^{-(-\frac{x}{\lambda})^k}& \quad\quad x \in [-W,0)\\
        0& \quad\quad x=0 \\
        -\frac{ck}{\lambda}(\frac{x}{\lambda})^{k-1} e^{-(\frac{x}{\lambda})^k}& \quad\quad x\in(0, W]
    \end{array}\right.
  \end{aligned}
\end{equation}

\begin{equation}%加*表示不对公式编号
  \begin{aligned}
    R_{\text{weibull}}(x) = \left\{\begin{array}{cc}
        \frac{ck}{\lambda}(\frac{|x|}{\lambda})^{k-1} e^{-(\frac{|x|}{\lambda})^k}& \quad\quad x \in [-W,0) \cup (0,W]\\
        -2\delta(0)\int_{0}^{W}(\frac{ck}{\lambda}(\frac{x}{\lambda})^{k-1} e^{-(\frac{x}{\lambda})^k})\d{x}& \quad\quad x=0 
    \end{array}\right.
  \end{aligned}
\end{equation}

The smooth operator can be obtained by approximation using numerical integration and symmetric transformation.
\begin{equation}%加*表示不对公式编号
  \begin{aligned}
    S_{\text{weibull}}(x) = \left\{\begin{array}{cc}
        \int_{-W}^{x}(\frac{ck}{\lambda}(\frac{|t|}{\lambda})^{k-1} e^{-(\frac{|t|}{\lambda})^k})\d{t} & \quad\quad x \in [-W,0]\\
        S_{\text{weibull}}(-x)& \quad\quad x \in (0,W]
    \end{array}\right.
  \end{aligned}
\end{equation}

Figure~\ref{WeibullConstructor} shows, from left to right, the smooth operator, the first- and second-order TGD operators constructed on the Weibull distribution.

\begin{figure}[htb]
  \subfigure[Smooth operator]{   	 		 
        \includegraphics[width=0.31\linewidth]{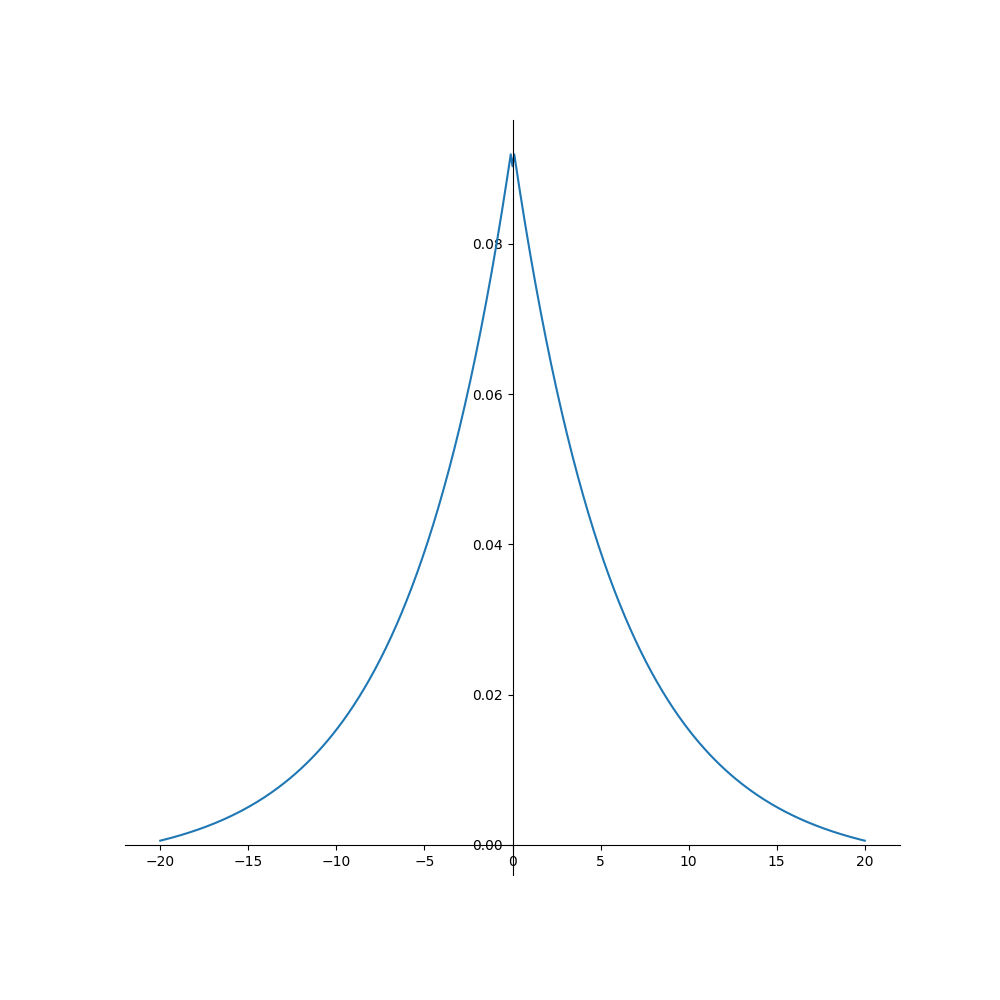}
    }
    \subfigure[First-order TGD operator]{   	 		 
        \includegraphics[width=0.31\linewidth]{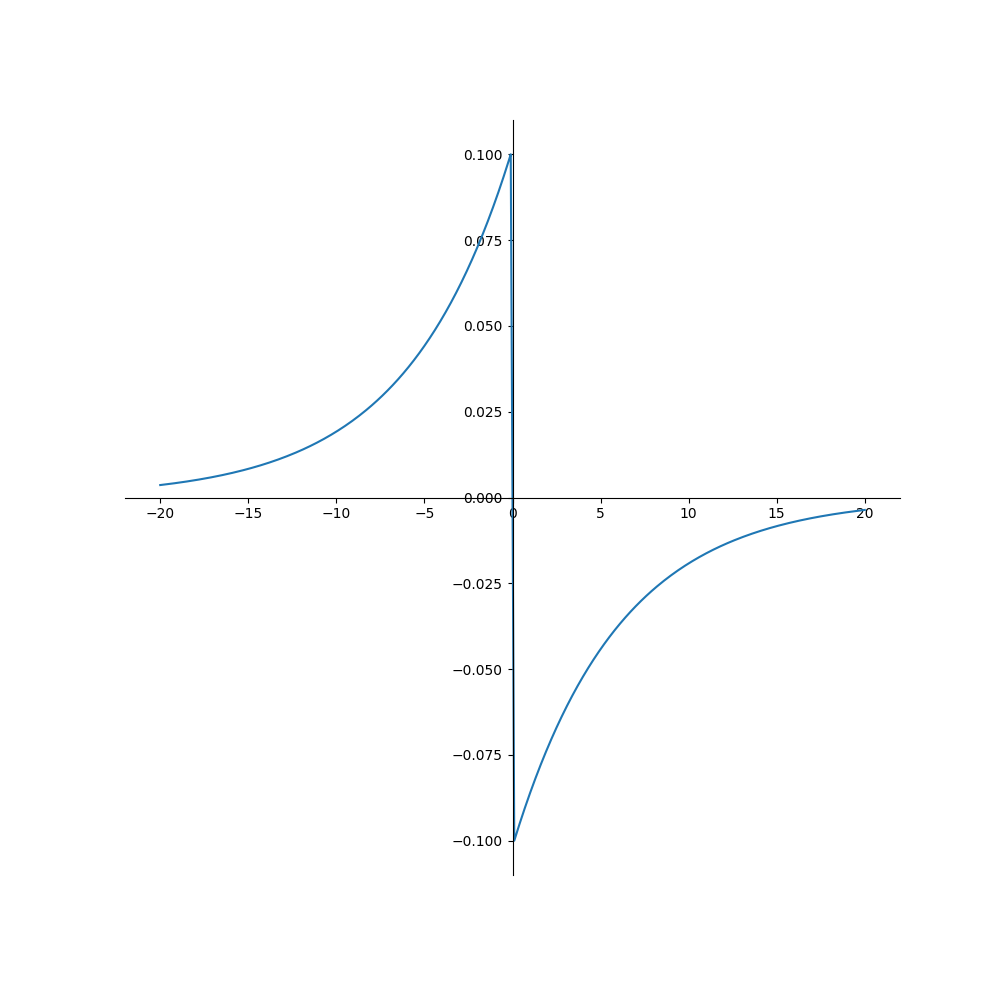}
    }   
    \subfigure[Second-order TGD operator]{   	 		 
        \includegraphics[width=0.31\linewidth]{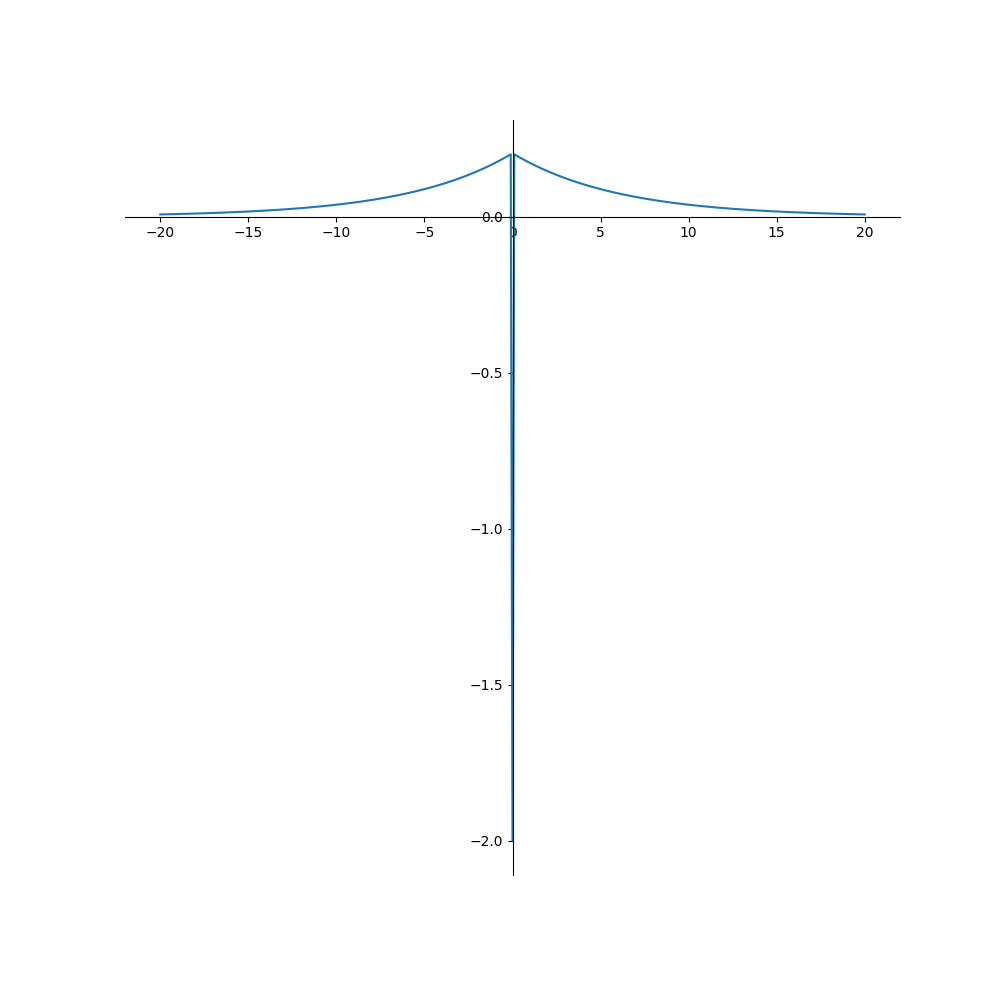}
    }
  \caption{
    The smooth operator, first- and second-order TGD operators constructed based on the Weibull distribution.
  }
  \label{WeibullConstructor}
\end{figure}

%\begin{figure}[htb]
%  \begin{minipage}[b]{0.8\linewidth}
%    \centering
%    \centerline{\includegraphics[width=\linewidth]{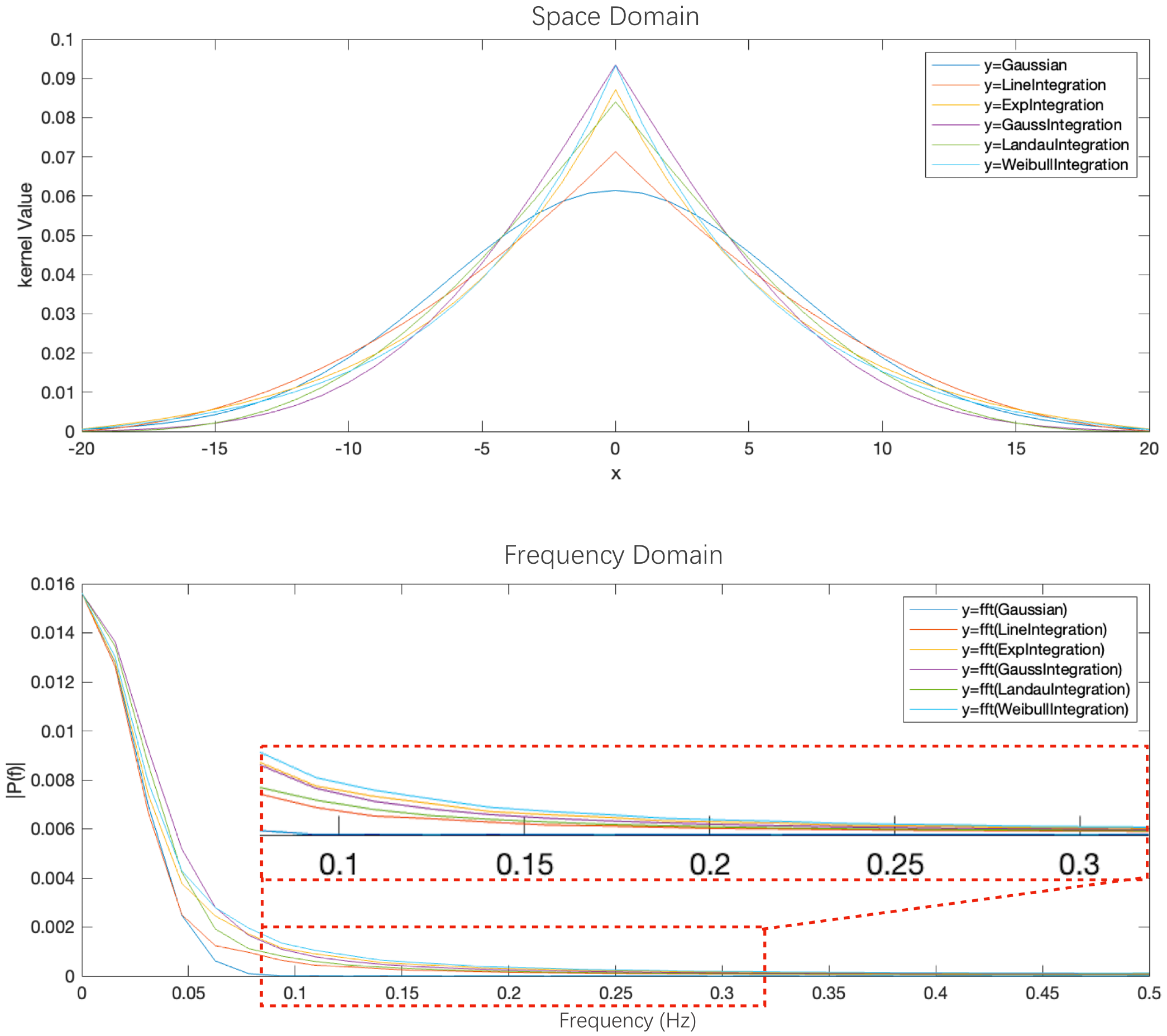}}
%  \end{minipage}
%  \caption{
%    Comparison of Gaussian smooth operator with smooth operators generated by the introduced five kernel functions, as well as their Fourier transform. The introduced smooth operators all have a higher cut-off frequency than Gaussian, and the response of the high-frequency part is much smaller than the low-frequency part.
%  }
%  \label{fig:TGDFFTCompare}
%\end{figure}

\subsubsection{Counterexamples}

Although TGD operators that satisfy the proposed constraints have demonstrated superior properties in differential calculations, we present a few counterexamples for a better understanding of the constraints.

As the first counterexample, we remove the limit symbol in Lanczos Derivative (Formula~\eqref{eq:LGD}). Lanczos employs $w(t) = 2t/W^2$ as the kernel function, which is a single-increasing function on $(0,W]$ and does not meet the Monotonic Constraint. When choosing $w(t) = 2t/W^2$ as the kernel function, Figure~\ref{Counterexamples}.a displays the corresponding smooth operator and 1st-order TGD operator (the first-order derivative of the smooth operator).

As the second counterexample, we consider Gaussian function as the smooth operator (Figure~\ref{Counterexamples}.b), which is a conventional Gaussian filter in signal processing. It is clear that the Gaussian smooth function does not satisfy the Monotonic Convexity Constraint. Meanwhile, we have emphasized that the Gaussian filter is a low-pass filter which fails to satisfy the Full-Frequency Constraint in signal processing area. This indicates that the first- and second-order derivatives of Gaussian are not suitable as a kernel function of TGD. 

%Additionally, Figure~\ref{Counterexamples}.b displays the corresponding 1st-order TGD operator and smooth operator when choosing the first- and second-order derivatives of Gaussian as the kernel function.

\begin{figure}[htb]
\centering
  \subfigure[Remove limit symbol of Lanczos Derivative]{   	 		 
        \includegraphics[width=0.95\linewidth]{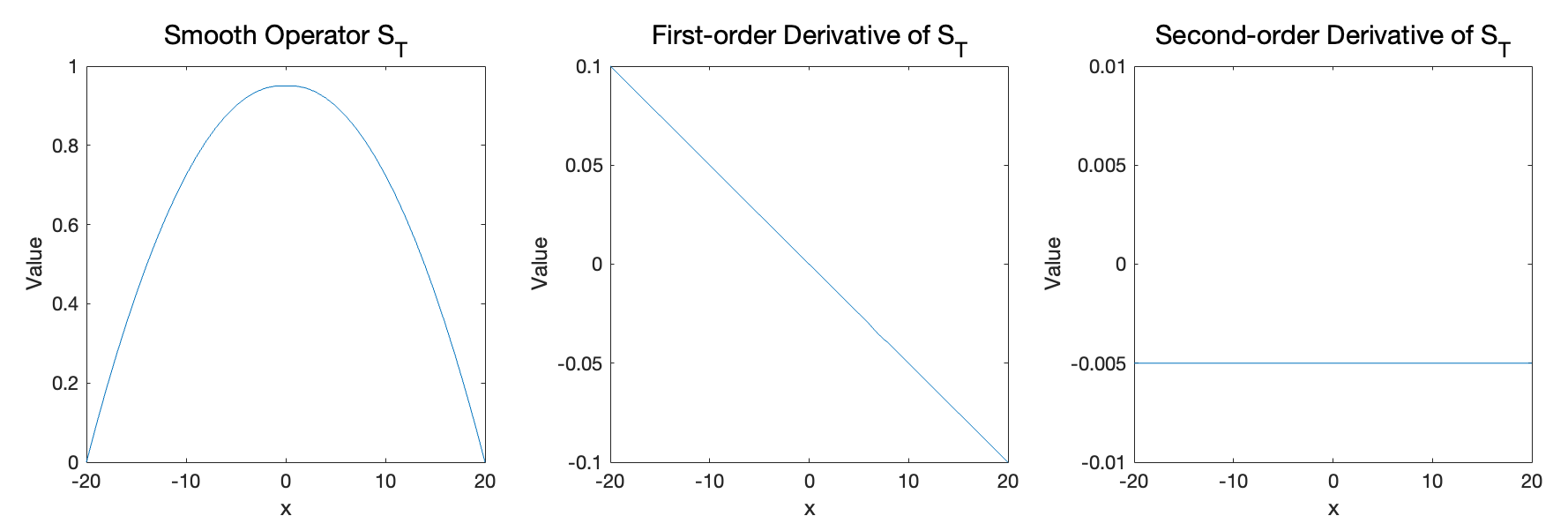}
    }
    \subfigure[Select Gaussian function as the smooth operator]{   	 		 
        \includegraphics[width=0.95\linewidth]{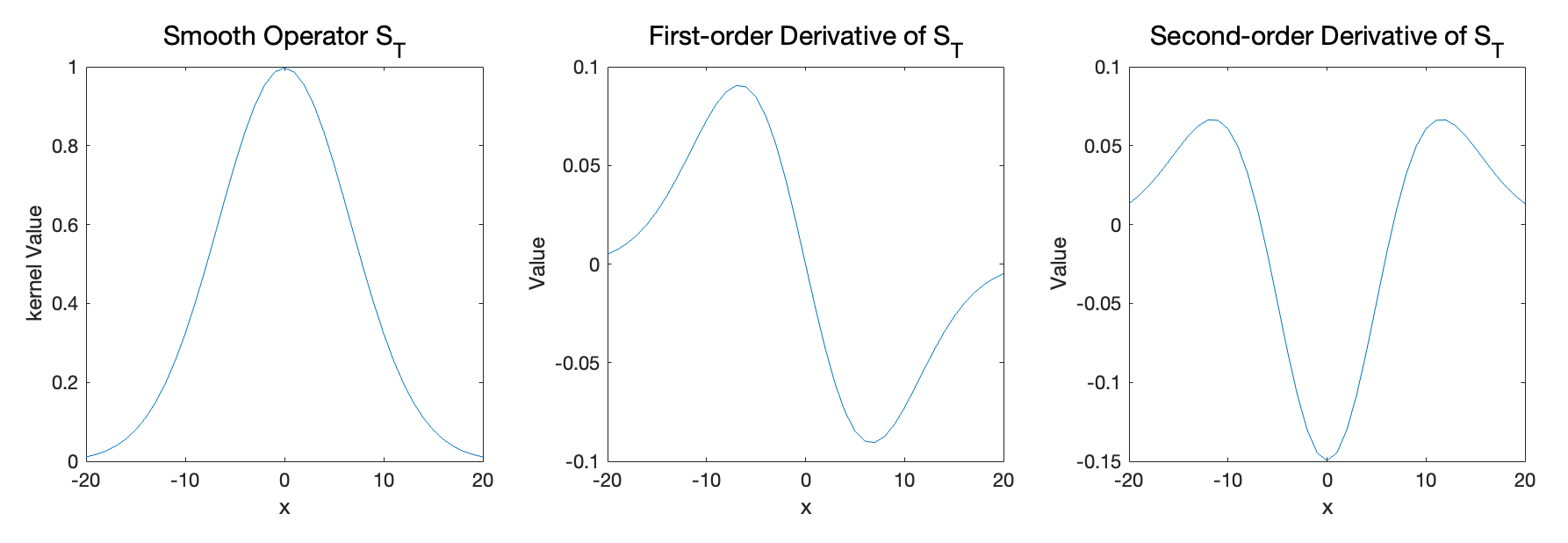}
    }   
  \caption{
    Examples of operators that do not satisfy the constraint $C2$ and/or $C3$.
  }
  \label{Counterexamples}
\end{figure}

\subsubsection{Smooth Operator Analysis}

We have introduced five kernel functions as examples and two counterexamples. We introduce the Monotonic Convexity Constraint to the smooth function in Section~\ref{subsec:1Dwindowderivativedefinitions}. Fourier transform analysis (Figure~\ref{fig:TGDFFTCompare}) reveals that the smooth operators fitting to the constraint have a full-frequency response or higher cut-off frequencies compared to the Gaussian and LD smooth operator of the same kernel width. %Additionally, the smooth operators of TGD have a much smaller response in the high-frequency range, compared to the low-frequency range, indicating adherence to the Full-frequency Constraint.}

\begin{figure}[htb]
    \centering
    \subfigure[The TGD smooth operators and their FFT transform]{
        \centering
        \includegraphics[width=0.8\linewidth]{Image/1.3/FFTCompare.pdf}
    }
    \subfigure[The FFT transform of the counterexamples]{
        \centering
        \includegraphics[width=0.8\linewidth]{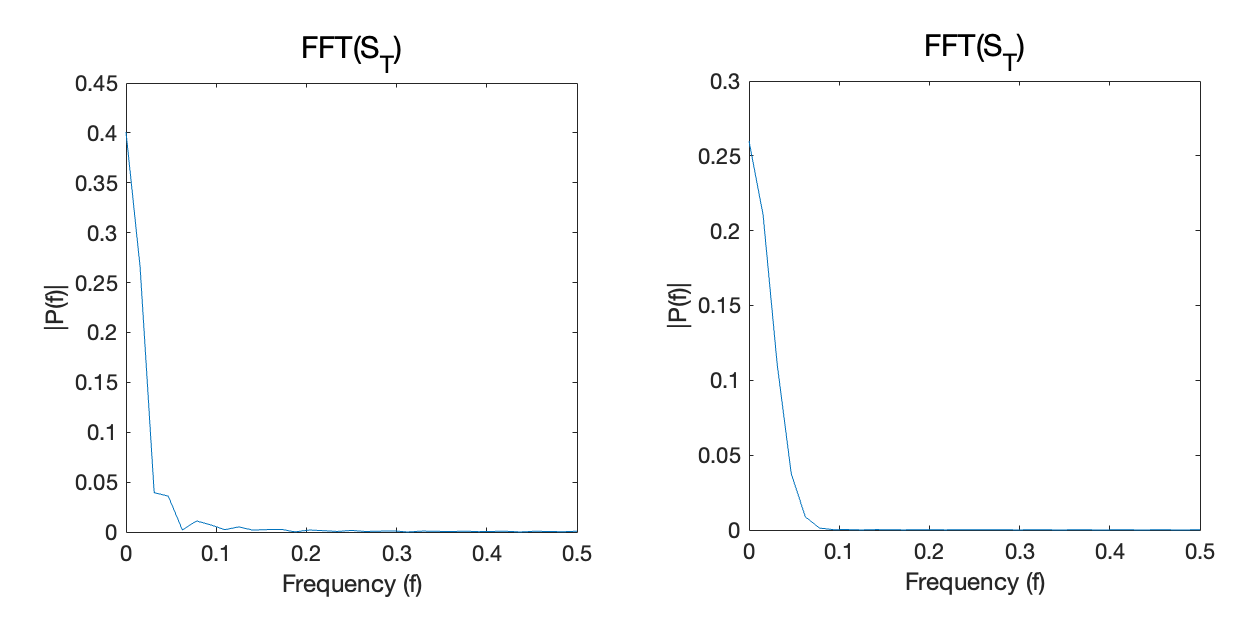}
    }
%    \begin{minipage}[b]{0.8\linewidth}
%        \centering
%        \centerline{\includegraphics[width=\linewidth]{Image/1.3/FFTCompare.pdf}}
%    \end{minipage}
    \caption{
        Comparison of the Fourier transform of TGD smooth operator with that of the counterexamples (Left: LD smooth operator. Right: Gaussian smooth operator). The TGD smooth operators all have a higher cut-off frequency than that of counterexamples, and the response of the high-frequency part is much smaller than the low-frequency part.
    }
    \label{fig:TGDFFTCompare}
\end{figure}

\subsection{The Calculation with 1D TGD Operators}

We visualize the TGD calculations of smooth and non-smooth functions to provide an intuitive understanding of TGD. In these experiments, we utilize the Gaussian kernel function and its corresponding TGD operators.

Figure~\ref{fig:1.4.signal1}.a and~\ref{fig:1.4.signal1}.b demonstrate the traditional derivative and TGD results for smooth continuous signals, such as Sine and Gaussian functions, respectively. Despite the difference in their respective values, the derivative and TGD results show a positive relationship, which is a good embodiment of \emph{Unbiasedness}. Figure~\ref{fig:1.4.signal1}.c and~\ref{fig:1.4.signal1}.d depict the outcomes where a small amount of Gaussian noise is added to the original function, which is barely perceptible to the human eye. It is observed that the traditional finite difference method is significantly distorted by the added noise, whereas TGD, computed over a certain interval, exhibits strong noise suppression capabilities.

We further extend our analysis from smooth functions to non-smooth functions, such as the square-wave and slope signals, which commonly occur in discrete systems. Figure~\ref{fig:1.4.signal2} compares the derivative and TGD results. For the square-wave signal, the derivative values are absent or infinite at the edges of the square-wave and zero elsewhere, resulting in the visualization of two impulse signals. TGD characterizes this variation as meaningful finite values, and the extreme points of TGD align with the edges of the square-wave, where the signal changes fastest. For the slope signal, the derivatives are constant within a certain range, whereas TGD indicates the center of the ramp as the point of maximum TGD value. To test their noise resistance, We intentionally add a small amount of noise to both signals. While the traditional finite difference method is sensitive to noise, TGD demonstrates robustness against it (Figure~\ref{fig:1.4.signal2}.c and~\ref{fig:1.4.signal2}.d). 

These experiments demonstrate that TGD can effectively characterize the change rate of non-smooth functions with high stability and noise tolerance. And TGD exhibits a remarkable capability to capture signal changes.

%\subsection{Counterexamples}

%Although TGD operators that satisfy the proposed constraints have demonstrated superior properties in differential calculations, we present a few counterexamples for a better understanding of the constraints.

%As the first counterexample, we remove the limit symbol in Lanczos Derivative (Formula~\eqref{eq:LGD}). Lanczos employs $w(t) = 2t/W^2$ as the kernel function, which is a single-increasing function on $(0,W]$ and does not meet the Monotonic Constraint. Figure~\ref{Counterexamples}.a displays the corresponding 1st-order TGD operator and smooth operator when choosing $w(t) = 2t/W^2$ as the kernel function.

%As the second counterexample, we consider Gaussian filtering and differential calculation, which is a conventional technique in signal processing. We have emphasized that the Gaussian filter is a low-pass filter and fails to satisfy the Full-Frequency Constraint. This indicates that the first- and second-order derivatives of Gaussian are not suitable as a kernel function of TGD. Additionally, Figure~\ref{Counterexamples}.b displays the corresponding 1st-order TGD operator and smooth operator when choosing the first- and second-order derivatives of Gaussian as the kernel function.

Figure~\ref{fig:1.4.signal3} compares the outcomes calculated for a single square-wave signal and a continuous square-wave signal. Notably, when considering a single square-wave signal, the extreme points of both Extend LD and Gaussian-Smooth-Difference fail to determine the edges of the square-wave. Hence, their way of detecting signal changes is incorrect. This error worsens when working with continuous square-wave signals. the LD method severely violates Monotonic Constraint and causes significant shifts in extreme points. On the other hand, Gaussian smoothing amalgamates high-frequency signals, compromising the identification of high-frequency variations and ultimately detecting changes only at two ends of the signal. In contrast, TGD meets both the Monotonic and Full-Frequency Constraints, and accurately identifies and locates high-frequency changes. Therefore, TGD can provide a reliable representation of signal changes that constitutes a stable foundation for various gradient-based downstream tasks in signal processing.

\begin{figure}[!htb]
    \centering
    \subfigure[Sin signal]{   	 		 
        \includegraphics[width=0.95\linewidth]{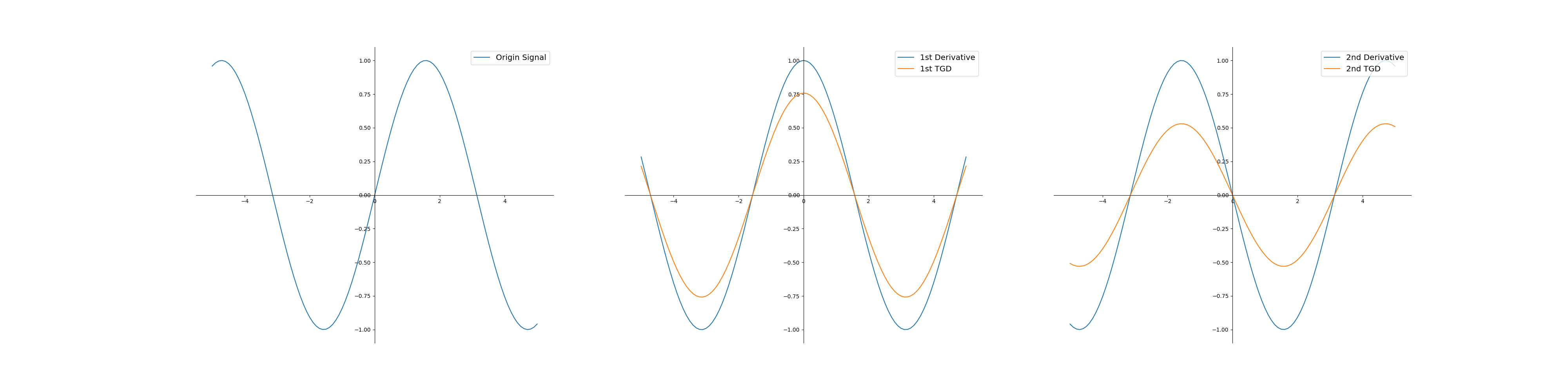}
    }
    \subfigure[Gaussian signal]{   	 		 
        \includegraphics[width=0.95\linewidth]{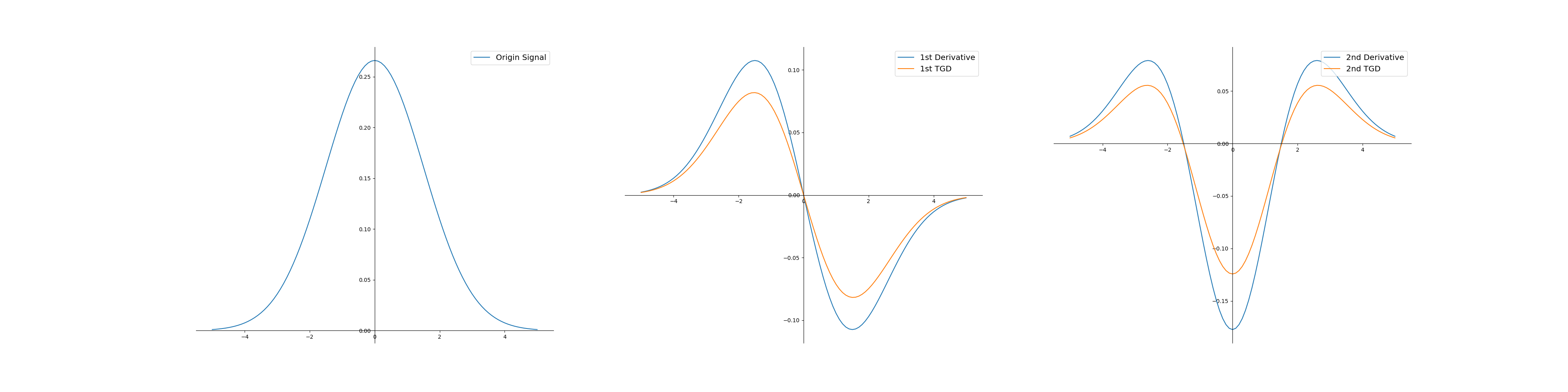}
    }   
    \subfigure[Sin signal with noise]{   	 		 
        \includegraphics[width=0.95\linewidth]{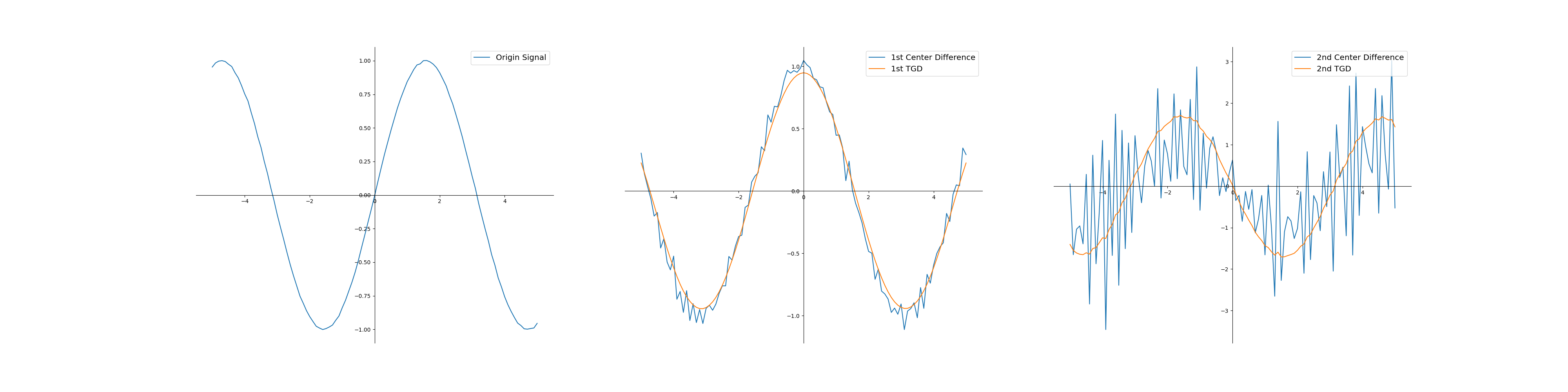}
    }
    \subfigure[Gaussian signal with noise]{   	 		 
        \includegraphics[width=0.95\linewidth]{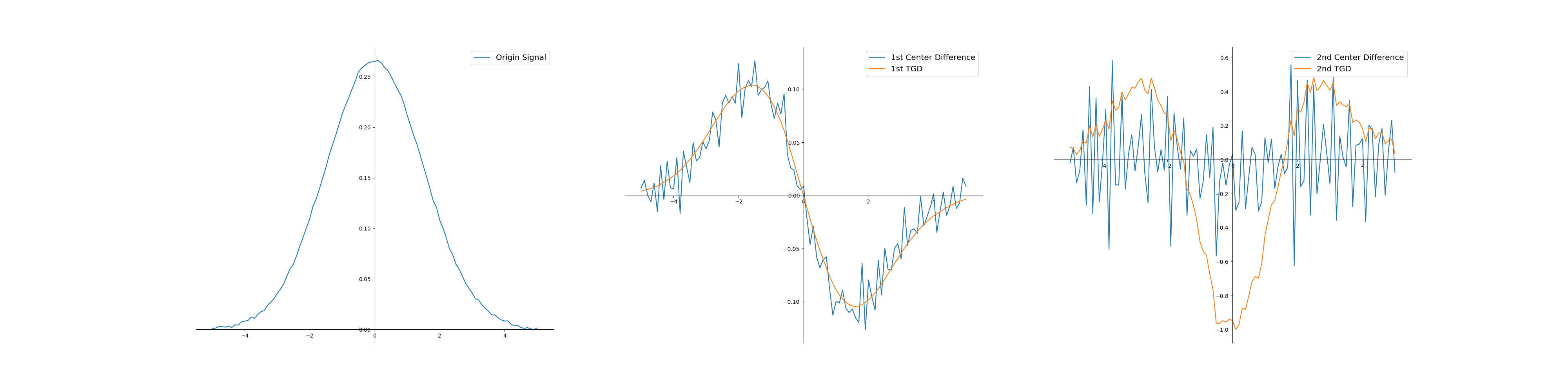}
    } 
  \caption{
    Comparison of derivative and TGD results of smooth continuous signal and noise signal. When noise is present, we approximate the derivatives using the center difference method (Formula~\eqref{eq:difference_definition}). From left to right, the origin signal, the first-order derivative (blue) and TGD (orange), the second-order derivative (blue) and TGD (orange). For better visualization, we have scaled the TGD results.
  }
  \label{fig:1.4.signal1}
\end{figure}

\begin{figure}[!htb]
    \centering
    \subfigure[Square-wave signal]{   	 		 
        \includegraphics[width=0.95\linewidth]{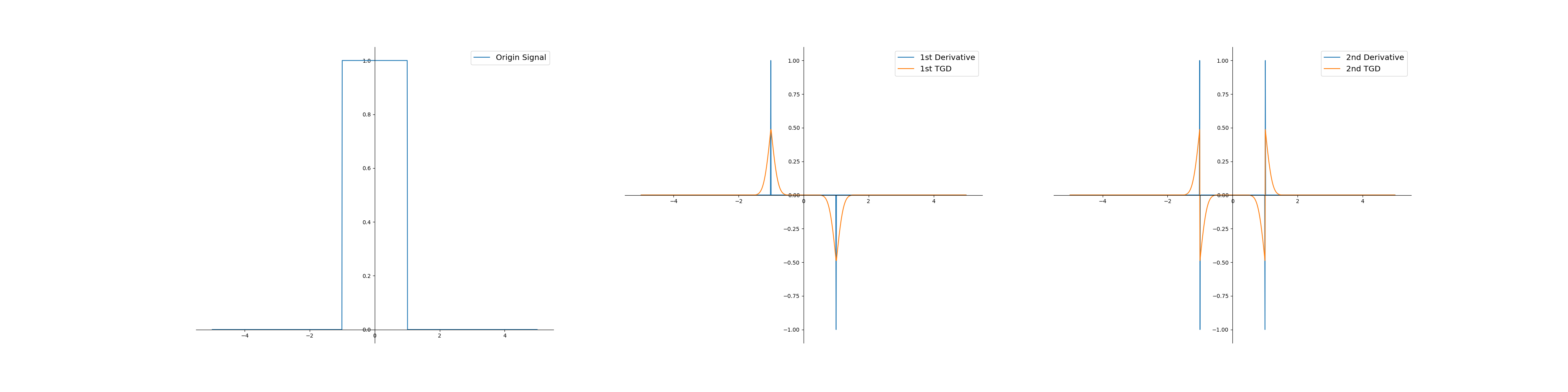}
    }
    \subfigure[Slope signal]{   	 		 
        \includegraphics[width=0.95\linewidth]{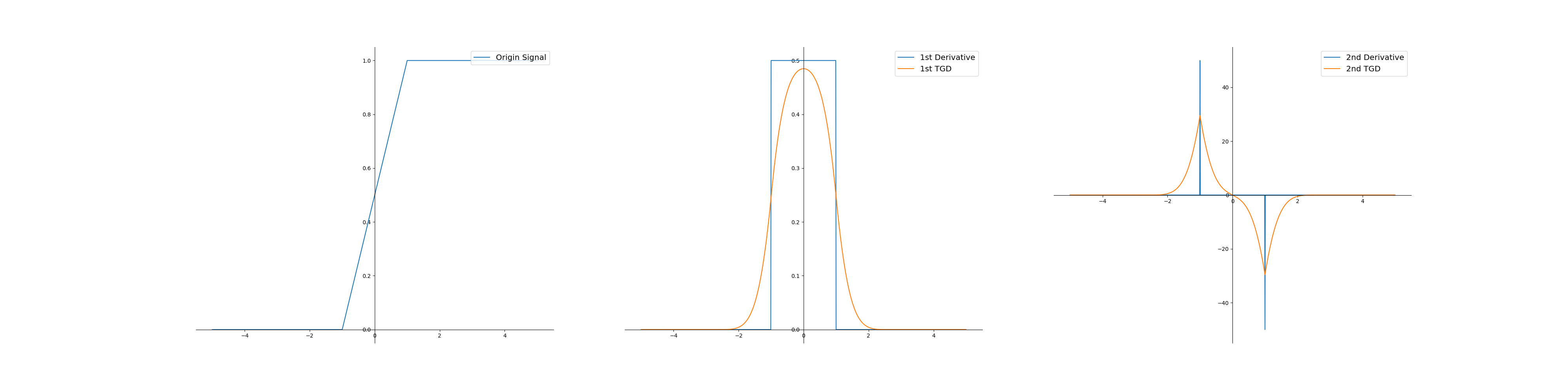}
    }   
    \subfigure[Square-wave signal with noise]{   	 		 
        \includegraphics[width=0.95\linewidth]{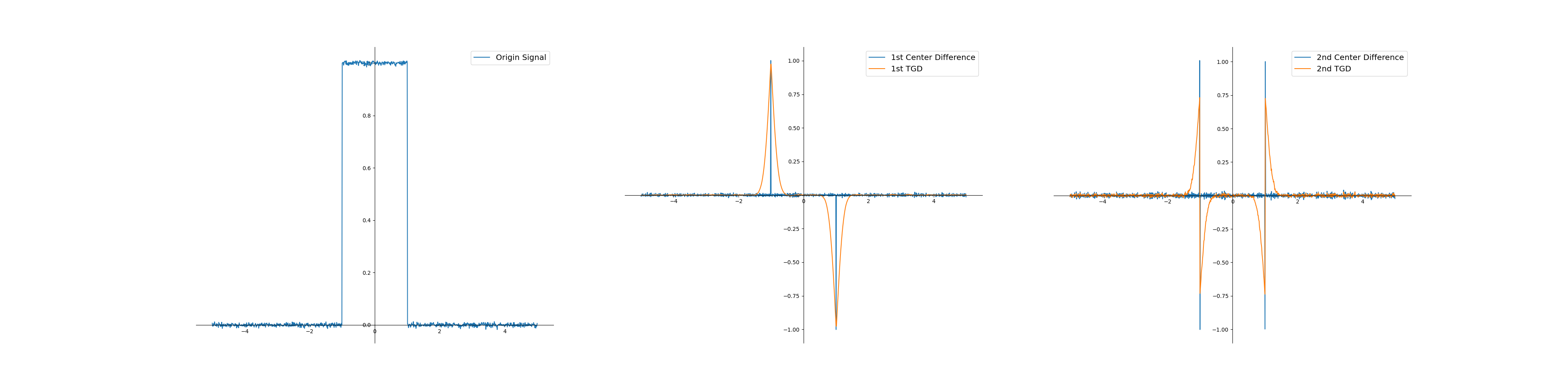}
    }
    \subfigure[Slope signal with noise]{   	 		 
        \includegraphics[width=0.95\linewidth]{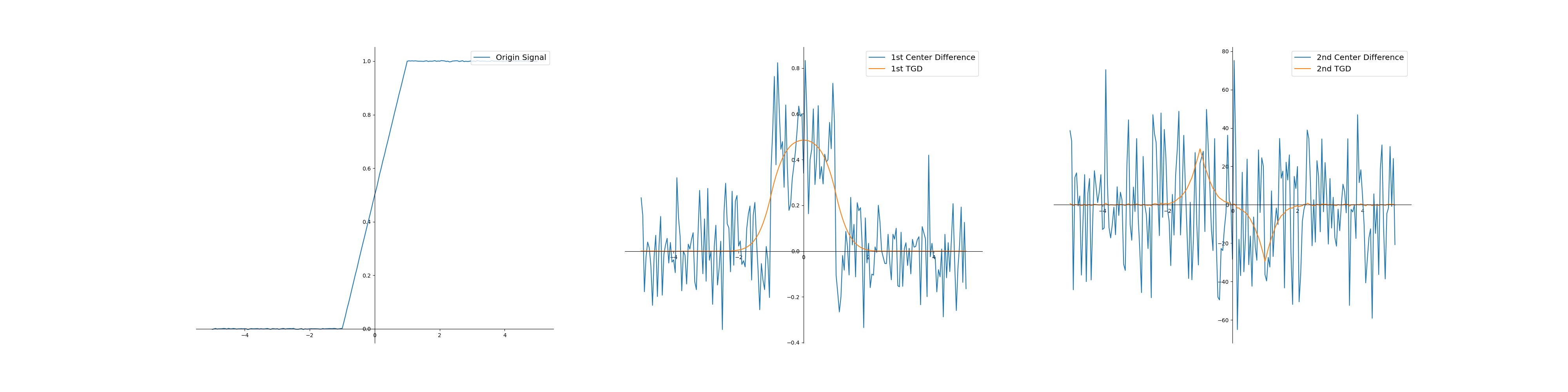}
    } 
  \caption{
    Comparison of general derivative and TGD results of non-smooth continuous signal and noise signal. When noise is present, we approximate the derivatives using the center difference method (Formula~\eqref{eq:difference_definition}). From left to right, the origin signal, the first-order derivative (blue) and TGD (orange), the second-order derivative (blue) and TGD (orange). For better visualization, we have scaled the TGD results.
  }
  \label{fig:1.4.signal2}
\end{figure}

\begin{figure}[htb]
    \centering
    \subfigure[Square-wave signal]{   	 		 
        \includegraphics[width=0.95\linewidth]{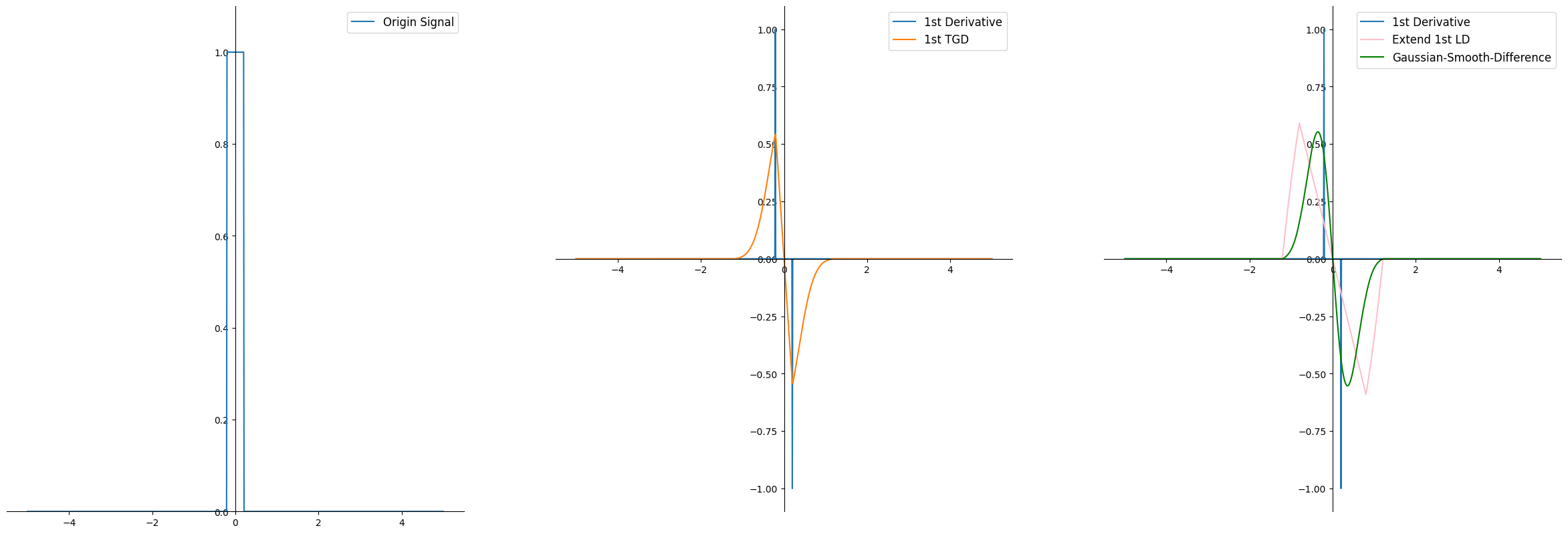}
    }
    \subfigure[Continue square-wave signal]{   	 		 
        \includegraphics[width=0.95\linewidth]{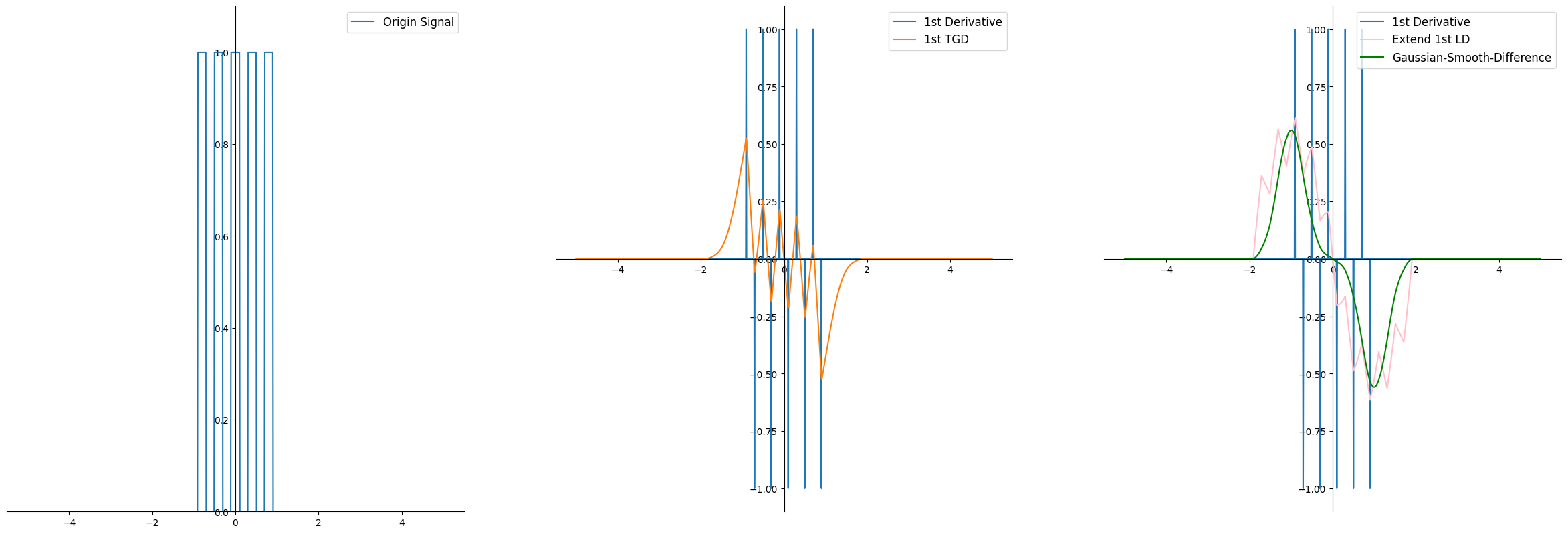}
    }
  \caption{
    Comparison of TGD, extend 1st LD, and Gaussian smoothing before difference results of square-wave signals. For better visualization, we have scaled the results.
  }
  \label{fig:1.4.signal3}
\end{figure}

\subsection{The Calculation with 2D TGD Operators}

We provide intuitive visualizations of the computational outputs of 2D TGD based on smooth and non-smooth functions, aimed at facilitating an enhanced comprehension of high-dimensional TGD. In comparison with the partial derivatives, we utilize the orthogonal TGD operators (Figure~\ref{fig:orthogonalKernel2D}). Furthermore, we present results obtained from the LoT operator. In all these experiments, TGD operators are constructed with the Gaussian kernel function.

Figure~\ref{fig:2DTGD-experiments}.a shows the 2D Gaussian signal and the computational results with respect to traditional partial derivative and partial TGD. For better visualization, we have scaled the results. Despite the difference in values of derivative and TGD, they are positively related, demonstrating \emph{Unbiasedness}. By convolving the original Gaussian signal with the LoT operator, we obtain the Laplace TGD result for the 2D Gaussian, which is known as the “Laplacian of Gaussian” in computer vision. As shown in Figure~\ref{fig:2DTGD-experiments}.b, when we add a small amount of Gaussian noise, which is hardly visible in the original function, it has a drastic effect on the traditional finite difference results. In contrast, the results calculated by TGD within a certain interval depict an exceptional ability to suppress noise.

We further extend the experiments from smooth functions to non-smooth functions. Figure~\ref{fig:2DTGD-experiments}.c displays the comparison of partial derivative and TGD results for a 2D square-wave signal. Since we calculate the partial derivatives in the $x$-direction, the derivative values are absent or infinite at the left and right edges of the square-wave and zero elsewhere. While in TGD results, there is a certain range of non-zero and meaningful values, which is caused by the fact that the operators have a certain size. Notably, the location of extreme points in the first-order TGD and the zero-crossing points in the second-order TGD remain precisely unchanged compared to the location of steps in the signal. Convolution of the square-wave signal with the LoT operator yields its Laplace TGD result, where zero-crossing points indicate exact positions of steps in the signal. These results demonstrate that TGD can precisely characterize the signal changes. Figure~\ref{fig:2DTGD-experiments}.d further highlights the noise resistance within non-smooth functions in TGD calculation.

The experiments conducted illustrate the versatility of the \emph{General Differential} in high-dimensional differential calculations of both smooth and non-smooth functions. The overall performance is stable and insensitive to noise. Additionally, TGD has proven to possess an excellent ability to characterize changes in real signals.

\begin{figure*}[htb]
    \centering
    \includegraphics[width=0.95\linewidth]{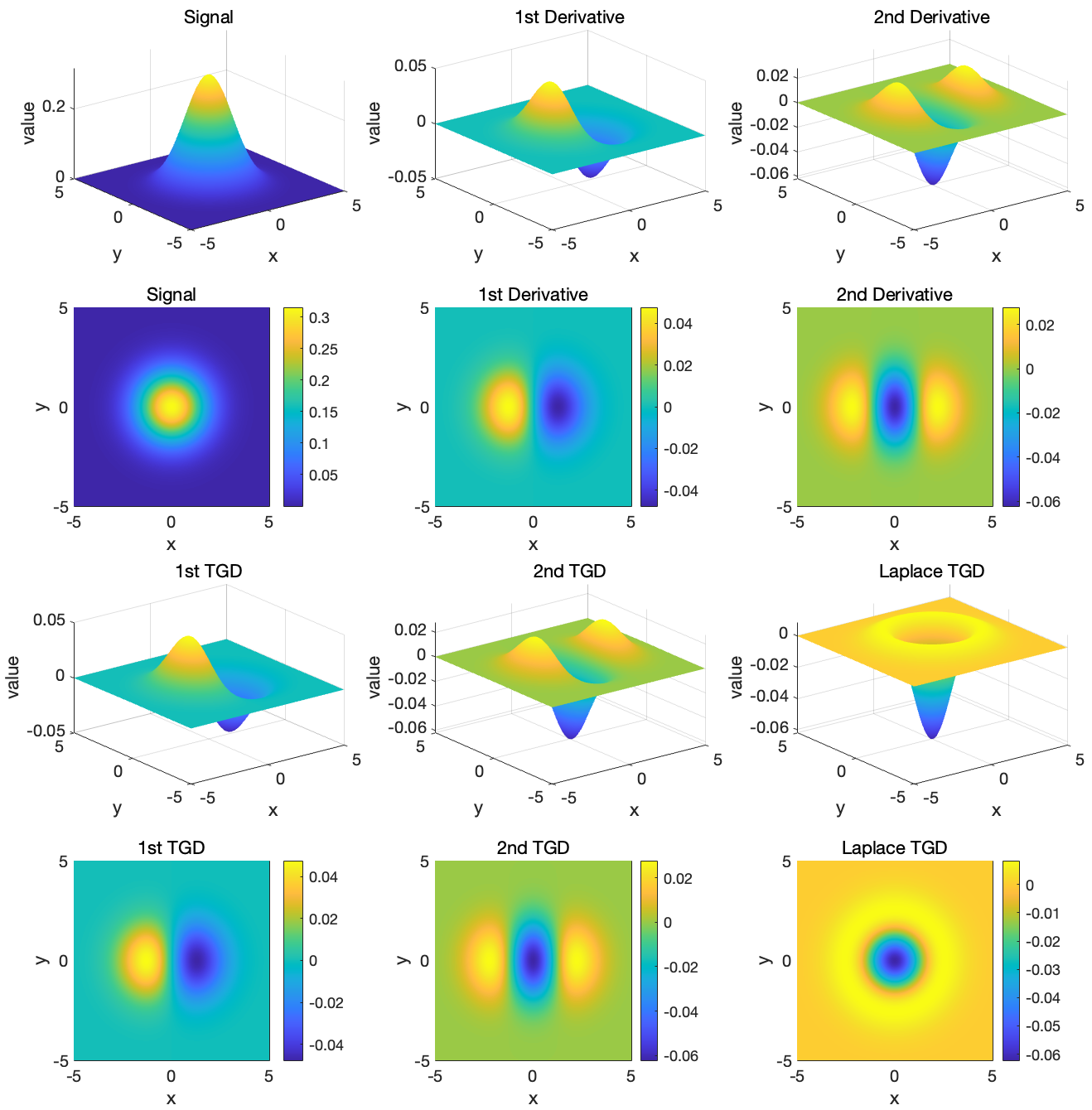}
    \caption*{(a) 2D Gaussian signal}
    \label{fig:2D gaussian signal}
\end{figure*}
\begin{figure*}[htb]
    \centering
    \includegraphics[width=0.95\linewidth]{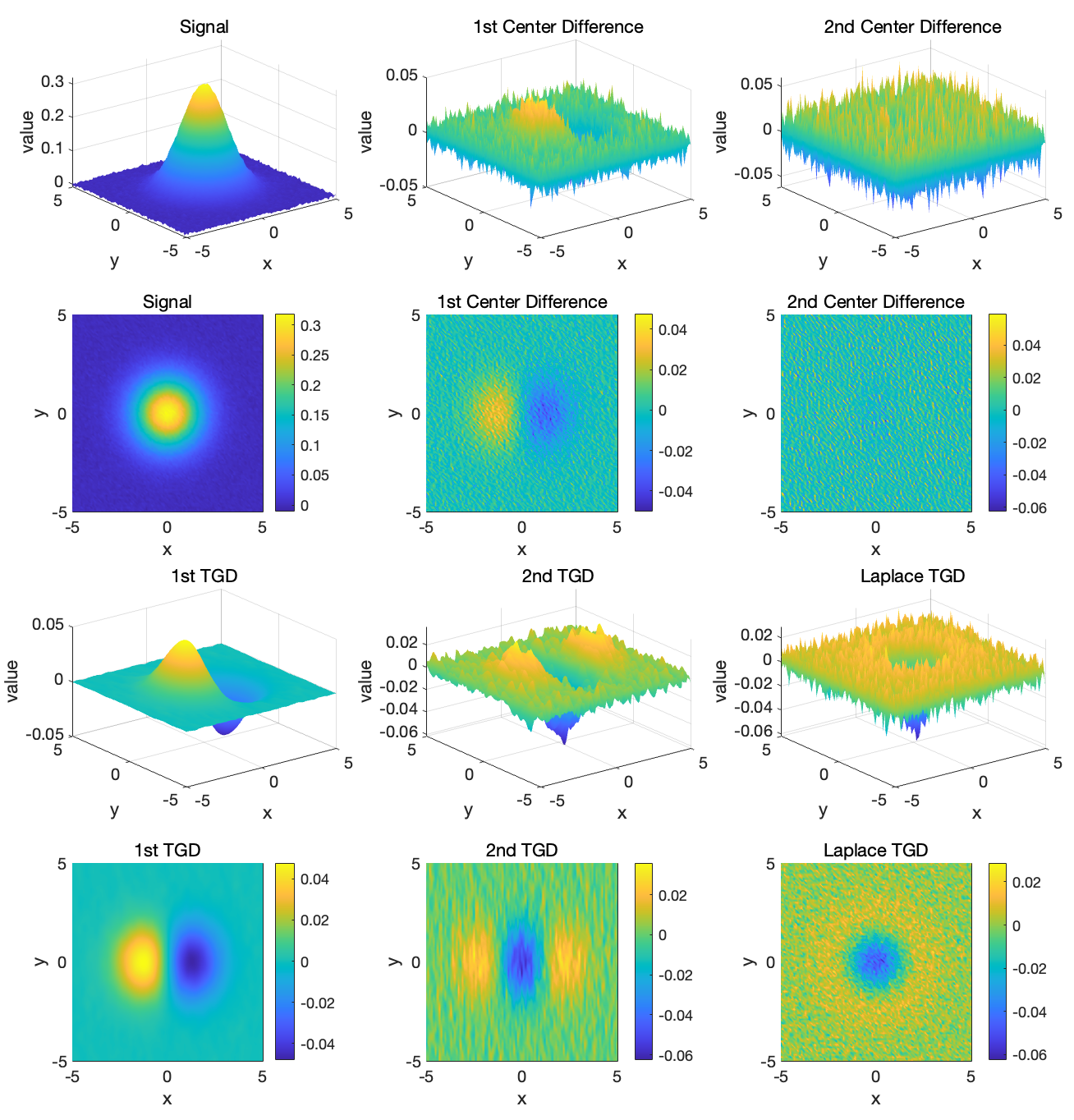}
    \caption*{(b) 2D Gaussian signal with noise}
    \label{fig:2D gaussian signal with noise}
\end{figure*}
\begin{figure*}[htb]
    \centering
    \includegraphics[width=0.95\linewidth]{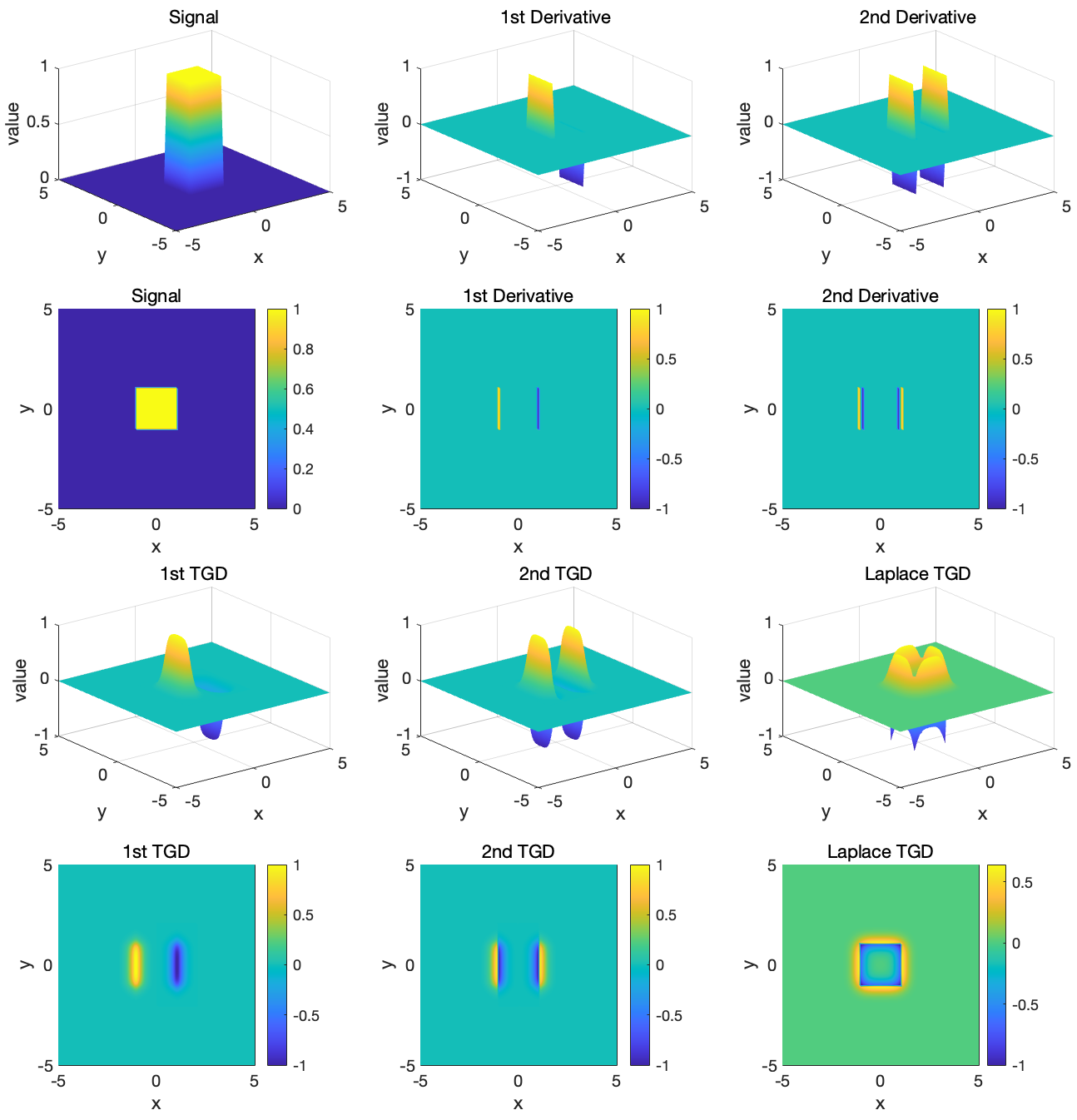}
    \caption*{(c) 2D square signal}
    \label{fig:2D square signal}
\end{figure*}
\begin{figure}[htb]
    \centering
    \includegraphics[width=0.95\linewidth]{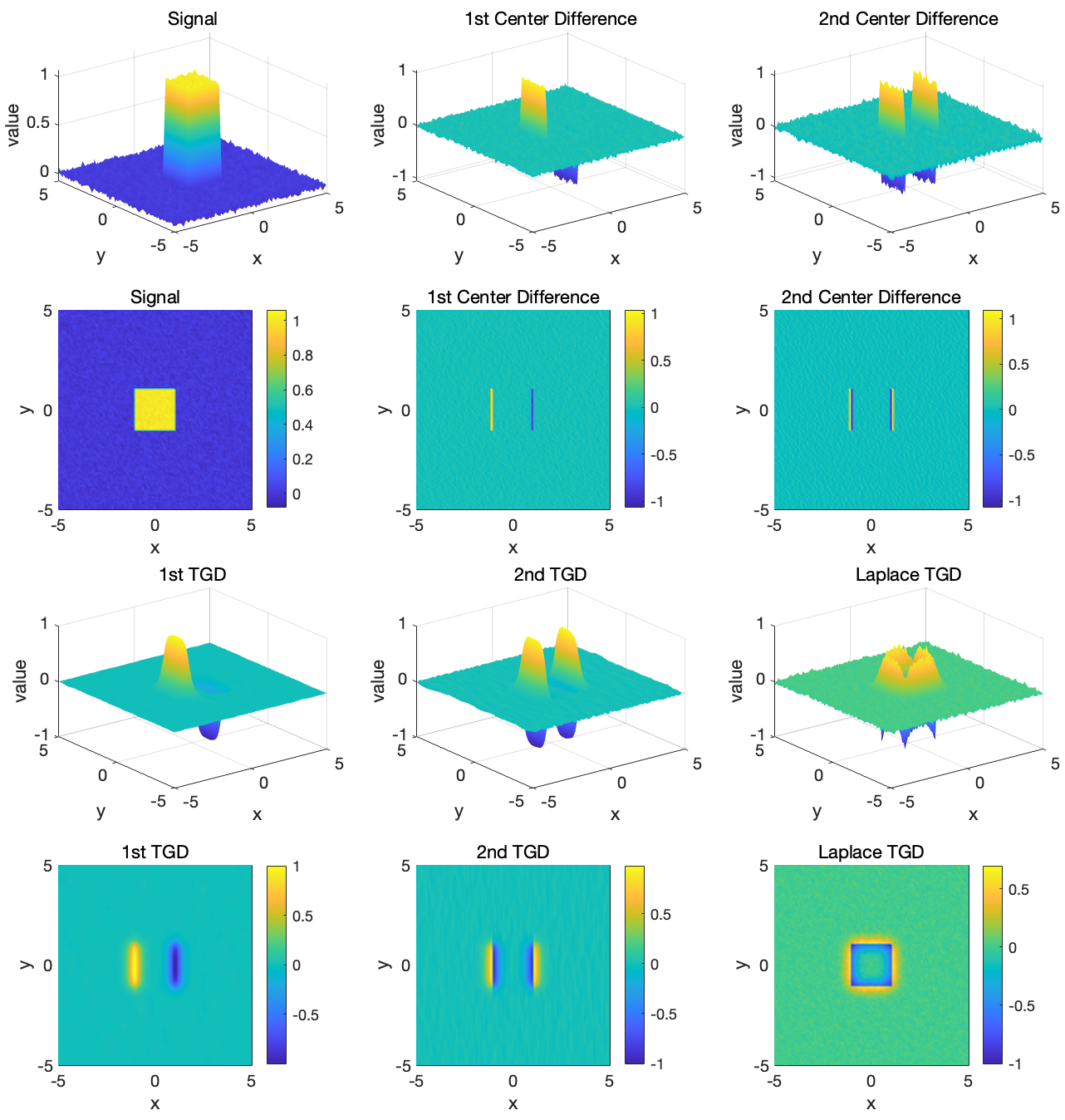}
    \caption*{(d) 2D square signal with noise}
    \caption{Comparison of partial derivative and partial TGD results of 2D smooth and non-smooth functions with/without noise. When noise is present, we approximate the derivatives using the center difference method (Formula~\eqref{eq:difference_definition}). For better visualization, we have scaled the results and show both the 3D visualization (Up) and the projection map (Down) relatively.}  
    \label{fig:2DTGD-experiments}
\end{figure}

\clearpage
%%%%%%%%%%%%%%%%%%%%%%%%%%%%%%%%%%%%%%%%%%%%%%%%%%%%%%%%%%%%%%%%%%%%%%%%%%
\newpage

\section{Conclusion and Outlook}

Calculus is built upon the fundamental concepts of infinitesimal and limit. However, modern numerical analysis fails to conform to these basic concepts, that is, all calculations are discrete and have a finite interval. To address this issue, we begin by proposing the generalization of derivative-by-integration, and subsequently derive General Difference in a finite interval, which we call Tao General Difference. We introduce three constraints to the kernels in TGD, and launch rotational and orthogonal construction methods for constructing multiple dimensional TGD kernels. 

In discrete domain, a novel definition of the smoothness of a sequence can be defined on the first- and second-order TGD. Meanwhile, the center of gradient in an interval can be precisely located via TGD calculation in multidimensional space. In real-world, the TGD operators can be used to denoise based on the novel definition of smoothness, and to accurately capture the edges of change in the various domains. An outstanding property of the operators in real applications is the combination of robustness and precision.

From a mathematical point of view, there may be some lack of rigor in the discussion and deduction, and warrant further examination of the influence of the General Difference on calculus. For instance, does the idea of transcending the bound of infinitesimal help in the integral? We anticipate that a more rigorous discourse and proof from mathematicians would advance the study of calculus in a finite interval. 

\clearpage
%%%%%%%%%%%%%%%%%%%%%%%%%%%%%%%%%%%%%%%%%%%%%%%%%%%%%%%%%%%%%%%%%%%%%%%%%%
\newpage

%Bibliography
\bibliographystyle{unsrt}  
\bibliography{references}

\end{document}